\documentclass[journal,10pt,twocolumn]{IEEEtran}

\usepackage[english]{babel}
\usepackage{comment}

\usepackage{amsmath}
\usepackage{amssymb}
\usepackage{mathtools}
\usepackage{mathbbol}

\usepackage{graphicx}
\usepackage{float}

\usepackage{sgame}

\usepackage[colorlinks=true, allcolors=blue]{hyperref}
\usepackage[capitalize]{cleveref}
\usepackage{cite}

\usepackage{xcolor}
\usepackage{enumerate}

\newtheorem{remark}{Remark}
\newtheorem{proposition}{Proposition}
\newtheorem{theorem}{Theorem}

\newtheorem{corollary}{Corollary}
\newtheorem{definition}{Definition}

\newtheorem{lemma}{Lemma}

\DeclareMathOperator{\Equaldef}{\overset{def}{=}}

\title{Learning to Coordinate over Networks with Bounded Rationality}
\author{Zhewei Wang, Emrah Akyol and Marcos M. Vasconcelos}

\begin{document}
\maketitle

\begin{abstract}
Network coordination games are widely used to model collaboration among interconnected agents, with applications across diverse domains including economics, robotics, and cyber-security. We consider networks of bounded-rational agents who interact through binary stag hunt games, a canonical game theoretic model for distributed collaborative tasks. Herein, the agents update their actions using logit response functions, yielding the well-known Log-Linear Learning (LLL) algorithm. While convergence of LLL to a risk-dominant Nash equilibrium of potential games requires unbounded rationality, we consider regimes in which rationality is strictly bounded.
We first show that the stationary probability of states corresponding to perfect coordination is monotone increasing in the rationality parameter~$\beta$. For $K$-regular networks, we prove that the stationary probability of a perfectly coordinated action profile is monotone in the connectivity degree $K$, and we provide an upper bound on the minimum rationality required to achieve a desired level of coordination. For irregular networks, we show that the stationary probability of perfectly coordinated action profiles increases with the number of edges in the graph.
To analyze these stationary distributions, we study Gibbs measures using a Gaussian approximation for the potential function when the admissible action profiles are uniformly distributed. We show that, for a large class of networks, the partition function of the Gibbs measure is well approximated by the moment generating function of Gaussian random variable. This approximation allows us to optimize degree distributions and establishes that the optimal network—i.e., the one that maximizes the stationary probability of coordinated action profiles—is $K$-regular. Consequently, our results indicate that networks of uniformly bounded-rational agents achieve the most reliable coordination when connectivity is evenly distributed among agents.
\end{abstract}

\section{Introduction}

One of the possible applications that calls for the deployment of a multi-agent system is when there is a collective task (or a job) whose difficulty exceeds the capabilities of any individual agent operating in the environment. In this situation, at least a subset of the agents in the system need to work together to perform the task, and such synergistic behavior requires {\it coordination}. As a foundational principle in robotics, economics, computer science and microbiology, achieving coordination is a desirable feature and as such has been studied from the point of view of many mathematical models, including game theory.

Coordination games are simultaneous-move games in which agents benefit from choosing the same action. Among these, the stag hunt game \cite{skyrms2004stag} captures the tension between a safe, low-reward action and a risky, high-reward action that requires cooperation. This simple model of incentives for collaborative interaction can be extended over a multi-agent network, where an agent interacts with a subset of all agents, called a neighborhood, leading to a much more complex and realistic setting suitable for designing modern engineering applications and analyzing socioeconomic phenomena.

In practice, learning agents may not best-respond perfectly due to cognitive limitations, computational constraints, or stochastic execution errors - a condition broadly referred to as \textit{bounded rationality} \cite{simon1955behavioral}. 
While network coordination games provide a rich mathematical framework, a system designer interested in orchestrating collective behavior, must contend with bounded rational agents.
This is the case when the agents in our model are humans in socioeconomic networks, their decisions are influenced by highly subjective factors inherent to the human condition \cite{jackson2008social}. For instance, in the stag hunt game, the choice between hunting a hare or a stag may vary significantly across individuals and may not always be rationalizable \cite{kim2025experimental,camerer1999experience,camerer2003behavioral}. Similarly, engineered agents such as robots or AI agents may not always be able to perform perfect optimization due to computational constraints or model hallucinations. In such cases a suboptimal solution must be implemented \cite{Vamvoudakis:2023}. Other times, even if an agent can optimize perfectly, they may fail to execute that particular action due to the stochastic nature of the environment. Therefore, bounded rationality is a limiting factor on the predictability of the system behavior \cite{tsiotras2021bounded}.

In this paper, we analyze the interplay of bounded rationality and connectivity in network coordination games. We focus on a binary stag hunt game in which agents decide whether to attempt a collective task of varying difficulty with the help of their neighbors. The underlying model assumes the agents play the game repeatedly, using a logit response dynamics with bounded rationality, seeking to learn to play a coordinated action profile in the network game. Due to the bounded rationality of the agents, there is a non-vanishing probability of miscoordination. We show that the probability of coordination with bounded rationality can be improved by increasing the connectivity of the network. This result is shown both for $K$-regular and for irregular networks. Then, we show that for a sufficiently large number of agents with a homogeneous level of bounded rationality, the networks that maximize the probability of coordination are $K$-regular (when one exists for the parameters of the game), or near $K$-regular. This set of results provides an important design principle for multi-agent network systems with homogeneous bounded rationality: for systems with a large number of agents, the designer can adjust the number of neighbors to achieve a prescribed level of coordination in the long run, even though the agents are responding to the actions of other agents imperfectly.

\subsection{Related Literature}

\subsubsection{Network Coordination Games}

Network games have been extensively studied as models of strategic interaction among agents whose payoffs depend on the actions of their neighbors in a graph. The framework of graphical games was introduced by Kearns et al. in \cite{kearns2001graphical}, which consisted of an undirected graph and a set of local payoff matrices for
multi-player games. Kakade et al. \cite{kakade2004graphical} showed how graph structure can be exploited for efficient computation of Nash equilibria. Since their introduction, a rich literature has been developed propelled by the popularization of social networks. Jackson and Zenou~\cite{jackson2015games} provide a comprehensive survey of games on networks, covering both complete-information and Bayesian settings. Coordination games on networks, in which agents benefit from aligning their actions with neighbors, have been studied in many different contexts~\cite{morris2000contagion,jackson2015games,montanari2010spread,young1993evolution}. Variants of the base model that incorporate the ability to respond to cyberattacks and external biases has been proposed in \cite{paarporn2020risk,paarporn2020impact,Arditti:2023}. A key insight from this literature is that the structure of the network (degree distribution and connectivity) plays a fundamental role in determining which equilibria are selected and how quickly agents converge to them. Our work contributes to this line of research by characterizing how network topology interacts with bounded rationality to affect coordination outcomes under stochastic learning dynamics.

\subsubsection{Models of Bounded Rationality}

Bounded rationality has a long history tied to the literature on behavioral economics, originating with the seminal work of Simon~\cite{simon1955behavioral}, who argued that human decision makers operate under cognitive and computational constraints and therefore, are unable of achieving perfectly rational behavior. Since  the work of Simon, many different models of bounded rationality have emerged. Prospect theory~\cite{kahneman1979prospect,rieger2008prospect,nadendla2018effects}, which models risk-sensitive decision making under uncertainty; level-$K$ thinking and cognitive hierarchy models~\cite{nagel1995unraveling,camerer2004cognitive,kokolakis2023bounded,ferreira2021risk,yoshida2008game}, which assume agents perform a limited number of strategic reasoning steps predicting sequences of best-responses to best-responses up to a certain level determined by the cognitive capacity of the agent; and the quantal response equilibrium (QRE)~\cite{mckelvey1995quantal,goeree2016qre,melo2022uniqueness}, which replaces exact best responses with a stochastic choice rules known as a quantal best response, generalizing the notion of a Nash equilibrium when the agents no longer respond optimally to the others decisions. The QRE framework is closely related to the discrete choice models of McFadden~\cite{mcfadden1973conditional} and naturally gives rise to the logit dynamics considered herein.

\subsubsection{Log-Linear Learning}

Our approach to bounded rationality is based on Log-Linear Learning (LLL), which is an interactive algorithm where the agents repeatedly play the game revising their actions given the actions played by their neighbors \cite{marden2012revisiting}. In LLL, the agents respond suboptimaly using a logit kernel, which is similar to a quantal-best response, but subtly different in that the agents respond to their neighbors actions and not to their mixed strategies. LLL was introduced by Blume~\cite{blume1993statistical,blume1995statistical} and has since been extensively studied in the context of potential games~\cite{alos2010logit,marden2012revisiting,akbar2022robustness,Maddux:2026}. One can think of LLL as a form of stochastic distributed optimization algorithm, where the global function being maximized is the game's potential function. In an analogy with {\it annealing} \cite{kirkpatrick1983optimization}, the literature on LLL primarily focuses on the asymptotics when the rationality (inverse temperature) grows without bound. In our model, however, rationality is kept bounded and we use the network as a means to compensate for such limitation.

\subsection{Our Contributions}

The main contributions of this work are as follows.

\begin{itemize}

    \item We define a binary network coordination stag hunt game, and show that when the graph is undirected, this is a potential game.

    \item Under LLL with homogeneous bounded rationality, we show that connectivity improves the probability that the agents will asymptotically play one of the two pure NE of the game.

    \item When the network is $K$-regular, we show that the minimum rationality to achieve coordination with high probability is inversely proportional to the connectivity.

    \item Using a Martingale Central Limit theorem, we show that under mild technical conditions the potential function evaluated at uniformly distributed action profiles converges in distribution to a Gaussian random variable, whose variance only depends on the degree distribution, thereby enabling optimization via Majorization theory.

    \item We show that for a sufficiently large number of agents, regular graphs maximize the probability that homogeneous bounded-rational agents converge to a NE.

\end{itemize}

Preliminary versions of some of the results in this paper have been presented in \cite{zhang2024rationality} and \cite{zhang2024role}. The present work significantly extends the scope of those contributions by providing complete proofs, extending the analysis to irregular graphs, and establishing the optimality of regular networks in the large-network regime and small rationality regimes.

\subsection{Notation}

We use $[N]\Equaldef\{1,2,\ldots, N\}$ to denote the set of agents. The symbols $\mathbb{0}$ and $\mathbb{1}$ denote the all-zeros and all-ones vectors in $\mathbb{R}^N$, respectively. For a vector $a\in\{0,1\}^N$, we denote by $\|a\|_1 = \sum_{i=1}^{N}a_i$ its $\ell^1$ norm and by $\|a\|_2 = (\sum_{i=1}^N a_i^2)^{1/2}$ its $\ell^2$ norm; since $a$ is binary, $\|a\|_1 = \|a\|_2^2$. Graphs are denoted by $\mathcal{G}=([N],\mathcal{E})$, where $\mathcal{E}\subseteq [N]\times [N]$ is the edge set. The adjacency matrix of $\mathcal{G}$ is denoted $\mathbf{A}\in\{0,1\}^{N\times N}$, and the graph Laplacian is $\mathbf{L}\Equaldef \mathbf{D}-\mathbf{A}$, where $\mathbf{D}\Equaldef \mathrm{diag}(\mathbf{A}\mathbb{1})$ is the degree matrix. The neighborhood of agent $i$ is $\mathcal{N}_i \Equaldef \{j\in[N]\mid (i,j)\in\mathcal{E}\}$, and the degree of agent $i$ is $d_i\Equaldef |\mathcal{N}_i|$. We write $a_{-i}$ for the vector of actions of all agents except agent $i$, and $a_{\mathcal{N}_i}$ for the sub-vector of actions of agent $i$'s neighbors.  For a matrix $\mathbf{M}$, $\|\mathbf{M}\|_F$ denotes its Frobenius norm and $\|\mathbf{M}\|_2$ its spectral (operator) norm. The notation $\xrightarrow{D}$ denotes convergence in distribution and $\xrightarrow{P}$ denotes convergence in probability.

\subsection{Organization}

The remainder of the paper is organized as follows. \Cref{sec:setup} introduces the problem setup, the binary stag hunt coordination game and its extension to networks. \Cref{sec:potential} establishes that the network coordination game is an exact potential game and characterizes the maximizers of the potential function. \Cref{sec:tradeoff} analyzes the trade-off between bounded rationality and connectivity under Log-Linear Learning for $K$-regular graphs, by proving monotonicity of stationary probability of coordinated action profiles in both $\beta$ and $K$, and deriving an upper bound on the minimum rationality required for coordination. \Cref{sec:irregular} extends the analysis to irregular graphs, showing that coordination probability increases with edge connectivity. \Cref{sec:gaussian} addresses the optimal network design problem: we show that the partition function of the Gibbs distribution proportional to a moment generating function, and use this connection to prove the optimality of regular graphs in two regimes: small $\beta$ (via Taylor series expansion) and large $N$ (via a Gaussian approximation). When $\beta$ and $N$ are moderate, we show that regular graphs maximize a nontrivial spectral lower bound on the stationary probability of coordinated action profiles. Finally, \cref{sec:conclusions} concludes the paper and discusses directions for future work.

\section{Problem Setup}\label{sec:setup}

Consider a binary action networked coordination game with $N$ agents. Let $[N]\Equaldef\{1,2,\ldots, N\}$ denote the set of agents, whose interactions are described by an undirected and connected graph $\mathcal{G}\Equaldef([N],\mathcal{E})$. Each agent $i \in [N]$ has a binary action space $\mathcal{A}_i = \{0,1\}$.
Two nodes $i,j \in[N]$ are  connected if $(i,j)\in \mathcal{E}$. The set of neighbors of agent $i$ is denoted by $\mathcal{N}_{i} \Equaldef \{j\in [N] \mid (i,j)\in \mathcal{E}\}$. The number of neighbors of agent $i$ is denoted by $|\mathcal{N}_i|$. We assume there are no self loops, i.e., $(i,i) \notin \mathcal{E},$ $i\in[N]$.

Let $(i,j)\in\mathcal{E}$, and suppose that $a_i,a_j\in\{0,1\}$ are the actions played by agents $i$ and $j$, respectively. Let $\theta \in \mathbb{R}$. The following bimatrix game specifies the payoffs for the pairwise interaction between agents $i$ and $j$.

\begin{figure}[h!]\hspace*{\fill}%
\centering
\begin{game}{2}{2}[$a_i$][$a_j$]
     & $1$ & $0$             \\
 $1$ & $\Big( 1-\theta, 1-\theta$\Big) & $\Big(-\theta,0\Big)$ \\
 $0$ & $\Big(0,-\theta\Big)$ & $\Big(0,0\Big)$  \\
\end{game}\hspace*{\fill}%
\caption{A coordination game with parameter $\theta$ between two players.}
\label{bimatrix}
\end{figure}

\vspace{5pt}

\begin{remark}[Payoff interpretation]
The payoff structure of the bimatrix game in \cref{bimatrix} corresponds to a \textit{stag hunt} coordination game \cite{skyrms2004stag} between two agents $i$ and $j$. Notice the payoff matrix depends on the {\it task difficulty} $\theta \in \mathbb{R}$.
\end{remark}

A (binary) stag hunt coordination game between two agents is characterized by the existence of two pure strategy Nash equilibria. The following result establishes the range of values of $\theta$ for which the game in \cref{bimatrix} corresponds to a coordination game.

\vspace{5pt}

\begin{proposition}\label{pure_strategy_equilibria}
    Consider the bimatrix game in \cref{bimatrix}, and let $\mathcal{S}_{ij}$ denote its set of pure-strategy Nash equilibria. Then,
\begin{equation}
\mathcal{S}_{ij} =
\begin{cases}
\big\{ (0,0) \big\} & \text{if} \ \  \theta > 1 \\
\big\{ (0,0),(1,1) \big\} & \text{if} \ \ 0 \leq \theta \leq 1 \\
\big\{ (1,1) \big\} & \text{if} \ \ \theta < 0.
\end{cases}
\end{equation}
\end{proposition}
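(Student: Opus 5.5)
The plan is to check the Nash conditions for each of the four pure action profiles directly. The game is symmetric, so it is enough to compare agent $i$'s payoff from deviating unilaterally, holding $a_j$ fixed. First I will compute agent $i$'s best-response correspondence as a function of $a_j$:
\begin{itemize}
\item If $a_j=1$, action $1$ yields $1-\theta$ and action $0$ yields $0$. Hence $1$ is a best response if and only if $\theta\le 1$, and $0$ is a best response if and only if $\theta\ge 1$.
\item If $a_j=0$, action $1$ yields $-\theta$ and action $0$ yields $0$. Hence $0$ is a best response if and only if $\theta\ge 0$, and $1$ is a best response if and only if $\theta\le 0$.
\end{itemize}
By symmetry, the same correspondence holds for agent $j$ as a function of $a_i$.

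Next I will test each profile, using the fact that it is an equilibrium exactly when each action is a best response to the other.
\begin{itemize}
\item $(1,1)$ is a Nash equilibrium if and only if $\theta\le 1$.
\item $(0,0)$ is a Nash equilibrium if and only if $\theta\ge 0$.
\item For the miscoordinated profile $(1,0)$, agent $i$ needs $\theta\le 0$ (action $1$ against $0$), and agent $j$ needs $\theta\ge 1$ (action $0$ against $1$). These conditions cannot hold together, so $(1,0)$ is never an equilibrium.
\item By symmetry, $(0,1)$ is never an equilibrium either.
\end{itemize}

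Intersecting these conditions with the three ranges of $\theta$ gives the stated sets:
\begin{itemize}
\item For $\theta>1$, only $(0,0)$ is an equilibrium.
\item For $0\le\theta\le 1$, both $(0,0)$ and $(1,1)$ are equilibria.
\item For $\theta<0$, only $(1,1)$ is an equilibrium.
\end{itemize}

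There is no real obstacle here. The only step needing care is the boundary cases $\theta\in\{0,1\}$, where one agent is indifferent. Because the Nash condition uses weak inequalities, these boundary values belong to the middle case, which matches the closed interval $[0,1]$ in the statement.
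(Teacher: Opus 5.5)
Your proof is correct and follows the same approach as the paper, which simply states that the result follows by inspection from the definition of a Nash equilibrium. Your best-response case analysis is that inspection written out in full: the weak inequalities place $\theta\in\{0,1\}$ in the middle case, and $(1,0)$ is ruled out because it would require both $\theta\le 0$ and $\theta\ge 1$.
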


\vspace{5pt}

\begin{IEEEproof}
    The proof can be obtained by inspection using the definition of a Nash equilibrium \cite{fudenberg1991game}.
\end{IEEEproof}

\vspace{5pt}

\subsection{Coordination games over networks}

We study a network coordination game with $N$ agents, where agent $i$ plays the same action with all of its neighbors $j\in\mathcal{N}_i$. Let $V_i:\{0,1\}^2\rightarrow \mathbb{R}$ be defined as
\begin{equation}
V_i(a_i,a_j) \Equaldef a_i\Big(
a_j-\theta\Big).
\end{equation}

In a network game, the payoff that one player receives is the sum of all the payoffs of the bimatrix games $V_{i}(a_{i}, a_{j})$  played with each one of its neighbors. Therefore for the $i$-th player, the utility is determined as follows
\begin{equation}
    U_{i}(a_{i},a_{-i})\Equaldef  \sum\limits_{j\in \mathcal{N}_{i}} V_{i}(a_{i},a_{j}).
\end{equation}
The payoff of the $i$-th agent in our game is
\begin{equation}\label{eq:payoff}
U_{i}(a_{i},a_{-i}) = a_i\Big(\sum_{j\in \mathcal{N}_i}a_j-\theta|\mathcal{N}_i|\Big).
\end{equation}

\section{Potential Network Coordination Games}\label{sec:potential}

The network stag hunt coordination game considered herein is always an {\it exact potential game} regardless of the graph structure.

\subsection{Potential games}

\begin{definition}\label{exact_potential}
Let $\mathcal{A}_i$ denote the action set of the $i$-th agent in a game with payoff functions $U_i(a_i,a_{-i})$, $i\in[N]$. Let $\mathcal{A} = \mathcal{A}_1\times \cdots \times \mathcal{A}_n$. A game is an exact potential game if there is a  \textit{potential function} {$\Phi$}:
 {$\mathcal{A}\rightarrow \mathbb{R}$} such that
\begin{equation}\label{eq:exact_potential1}
    U_{i}(a'_{i},a_{-i})- U_{i}(a''_{i},a_{-i}) = \Phi(a'_{i},a_{-i})- \Phi(a''_{i},a_{-i}),
\end{equation}
for all  $a_{i}',a_{i}''\in \mathcal{A}_{i}$, $a_{-i}\in \mathcal{A}_{-i}$,  $i \in [N]$.
\end{definition}

\vspace{5pt}

\begin{theorem}\label{theorem_EP}
Let $\mathcal{G}=([N],\mathcal{E})$ be an undirected and connected graph. Consider a networked
coordination game defined by the payoff in \eqref{eq:payoff} indexed by the parameter $\theta$.
The game is an exact potential game for any {$\theta$}.
\end{theorem}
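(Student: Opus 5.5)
The plan is to write down an explicit candidate potential and verify the defining identity \eqref{eq:exact_potential1} directly. Since each pairwise interaction on an edge is symmetric, the natural choice is
\begin{equation*}
\Phi(a) \Equaldef \frac{1}{2}\sum_{i=1}^N\sum_{j\in\mathcal{N}_i} a_i a_j \;-\; \theta\sum_{i=1}^N |\mathcal{N}_i|\, a_i ,
\end{equation*}
which in matrix form reads $\Phi(a)=\tfrac12 a^\top \mathbf{A} a - \theta\, \mathbb{1}^\top\mathbf{D}a$. The first term counts each undirected edge whose endpoints both play $1$ exactly once; the second term is linear in the actions.

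Next, I would fix an agent $i$ and an arbitrary $a_{-i}$, and split $\Phi$ into the terms that involve $a_i$ and those that do not. Because the graph is undirected, $\mathbf{A}$ is symmetric. Hence $a_i$ appears in the quadratic term exactly through the sum $a_i\sum_{j\in\mathcal{N}_i}a_j$: there are two copies of each edge $(i,j)$, one from the row of $i$ and one from the row of $j$, and the factor $\tfrac12$ cancels this doubling. There are no self loops, so there is no $a_i^2$ term. In the linear term, $a_i$ appears only as $-\theta|\mathcal{N}_i|a_i$. This gives
\begin{equation*}
\Phi(a_i,a_{-i}) = a_i\Big(\sum_{j\in\mathcal{N}_i}a_j-\theta|\mathcal{N}_i|\Big) + C(a_{-i}) = U_i(a_i,a_{-i}) + C(a_{-i}),
\end{equation*}
using \eqref{eq:payoff}, where $C(a_{-i})$ does not depend on $a_i$.

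Subtracting this identity at $a_i'$ and at $a_i''$ cancels $C(a_{-i})$ and yields \eqref{eq:exact_potential1} for every $i$, every pair of actions, and every $\theta\in\mathbb{R}$. No constraint on $\theta$ is used anywhere in the argument, so the result holds for any $\theta$.

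The only delicate step is the bookkeeping in the quadratic term. Symmetry of $\mathcal{E}$ is exactly what makes the coefficient of $a_i$ equal to $\sum_{j\in\mathcal{N}_i}a_j$ and not half of it or some asymmetric mixture. This is where the undirected assumption enters. Connectivity plays no role.
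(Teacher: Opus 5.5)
Your proof is correct and follows essentially the same route as the paper. The paper builds $\Phi=\tfrac12\sum_i\sum_{j\in\mathcal{N}_i}\phi(a_i,a_j)$ from the bilateral potential $\phi(a_i,a_j)=a_ia_j+(1-a_i-a_j)\theta$, which equals your $\tfrac12 a^\top\mathbf{A}a-\theta\,\mathbb{1}^\top\mathbf{D}a$ up to the constant $\tfrac{\theta}{2}\mathbb{1}^\top\mathbf{A}\mathbb{1}$, and it uses undirectedness ($\mathcal{S}_m=\mathcal{N}_m$) exactly where you do: a deviation by agent $m$ collects half of $\Delta U_m$ from its own row and half from its neighbors' rows, and your identity $\Phi=U_i+C(a_{-i})$ is a more compact version of that bookkeeping.
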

\vspace{5pt}
\begin{IEEEproof}
The proof is in Appendix \ref{sec:proofs}.
\end{IEEEproof}

\vspace{5pt}

We have established that this networked coordination game always an exact potential game. In the next proposition, we  obtain a closed form expression for its potential function.

\vspace{5pt}

\begin{proposition}\label{pot}
Let $\mathbf{A}\in\{0,1\}^{N\times N}$ be the adjacency matrix of a graph $\mathcal{G}$. The exact potential function for the network coordination game defined over $\mathcal{G}$ is given by $\Phi_{\mathbf{A}}(a)$ defined as
\begin{equation}\label{potential_function}
\Phi_\mathbf{A}(a) \Equaldef \frac{1}{2} a^\mathsf{T}\mathbf{A} a -  \theta  \mathbb{1}^{\mathsf{T}}\mathbf{A}a + \frac{\theta}{2}\mathbb{1}^\top \mathbf{A}\mathbb{1} ,
\end{equation}
where $\theta$ is the task difficulty and $a \in \{0,1\}^N$ is the action profile.
\end{proposition}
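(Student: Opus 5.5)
We verify \cref{exact_potential} directly. Fix an agent $i\in[N]$ and $a_{-i}$. Since $\mathcal{A}_i=\{0,1\}$, it suffices to check \eqref{eq:exact_potential1} for $a_i'=1$ and $a_i''=0$; the reverse ordering follows by antisymmetry and the case $a_i'=a_i''$ is trivial. From \eqref{eq:payoff}, the left-hand side is
\begin{equation*}
U_i(1,a_{-i})-U_i(0,a_{-i})=\sum_{j\in\mathcal{N}_i}a_j-\theta|\mathcal{N}_i| = (\mathbf{A}a)_i-\theta d_i,
\end{equation*}
using that row $i$ of $\mathbf{A}$ is the indicator of $\mathcal{N}_i$. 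This quantity does not depend on $a_i$ because there are no self loops.

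For the right-hand side, we take the three terms of \eqref{potential_function} one at a time. The constant $\frac{\theta}{2}\mathbb{1}^\top\mathbf{A}\mathbb{1}$ cancels in the difference. For the linear term, $\mathbb{1}^\top\mathbf{A}a=\sum_k d_k a_k$ by symmetry of $\mathbf{A}$, which holds because the graph is undirected. Flipping $a_i$ from $0$ to $1$ therefore changes $-\theta\mathbb{1}^\top\mathbf{A}a$ by $-\theta d_i$. For the quadratic term, write $\frac12 a^\top\mathbf{A}a=\frac12\sum_{k,l}A_{kl}a_ka_l$. The summands involving $a_i$ are $\frac12\big(\sum_l A_{il}a_ia_l+\sum_k A_{ki}a_ka_i\big)$, with $A_{ii}=0$. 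By symmetry these add to $a_i(\mathbf{A}a)_i$, where $(\mathbf{A}a)_i$ is independent of $a_i$. The change in the quadratic term is thus exactly $(\mathbf{A}a)_i$.

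Adding the three contributions gives $\Phi_\mathbf{A}(1,a_{-i})-\Phi_\mathbf{A}(0,a_{-i})=(\mathbf{A}a)_i-\theta d_i$, which equals the utility difference above. This establishes the claim, and it is consistent with \cref{theorem_EP}. The only step that needs care is the bookkeeping in the quadratic term. There, both the symmetry of $\mathbf{A}$ and the absence of self loops are used to get the factor $1$ in place of $\frac12$ and to make $(\mathbf{A}a)_i$ independent of $a_i$. As a sanity check, the additive constant gives $\Phi_\mathbf{A}(\mathbb{0})=\frac{\theta}{2}\mathbb{1}^\top\mathbf{A}\mathbb{1}$. It does not affect the exact potential property; it is only a normalization.
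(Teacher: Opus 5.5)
Your verification is correct, but it takes a different route from the paper. The paper never checks \eqref{eq:exact_potential1} for the matrix expression. Instead, it starts from the potential $\Phi(a)=\frac12\sum_{i}\sum_{j\in\mathcal{N}_i}\phi(a_i,a_j)$ with $\phi(a_i,a_j)=a_ia_j+(1-a_i-a_j)\theta$, whose exact-potential property was already shown in the appendix proof of \cref{theorem_EP}, and simply expands the double sum. The $a_ia_j$ terms give $\frac12 a^\top\mathbf{A}a$. The two linear terms $-\frac{\theta}{2}\big(a^\top\mathbf{A}\mathbb{1}+\mathbb{1}^\top\mathbf{A}a\big)$ merge into $-\theta\mathbb{1}^\top\mathbf{A}a$ by symmetry of $\mathbf{A}$. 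The constant $\theta$ summed over ordered neighbor pairs gives $\frac{\theta}{2}\mathbb{1}^\top\mathbf{A}\mathbb{1}=\theta|\mathcal{E}|$. That route explains where the closed form comes from, including the specific additive constant, but it depends on the appendix argument. Your route is self-contained: it certifies the formula directly from \cref{exact_potential} and pinpoints exactly where symmetry and $A_{ii}=0$ are needed, namely the factor $1$ rather than $\frac12$ and the independence of $(\mathbf{A}a)_i$ from $a_i$. It would therefore give \cref{theorem_EP} directly, bypassing the pairwise-potential bookkeeping. The cost is that it treats the constant as an arbitrary normalization rather than deriving it, which is legitimate since exact potentials are determined only up to an additive constant.
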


\vspace{5pt}

\begin{IEEEproof}
The proof follows from equations \eqref{potential_f} and \eqref{eq:potential_two_agents} by expanding the sums and using the symmetry of $\mathbf{A}$.
\end{IEEEproof}

\vspace{5pt}

A seminal result by Monderer and Shapley~\cite{monderer1996potential} establishes that, 
in an exact potential game, a strategy profile is a pure-strategy Nash equilibrium if and 
only if it is a local maximizer of the potential function. Consequently, identifying all 
pure-strategy Nash equilibria of the game is equivalent to finding all local maximizers 
of $\Phi$. We will show that when the graph is connected, the potential function is maximized when every agent in the system plays the same action. We proceed with the characterization of the set of optimal solutions for the following optimization problem
\begin{equation}
\begin{aligned}\label{OriginalProblem}
& \underset{a \in {\{ 0,1 \}}^{N}}{\mathrm{maximize}}
& &\frac{1}{2} a^\mathsf{T}\mathbf{A} a -  \theta  \mathbb{1}^{\mathsf{T}}\mathbf{A}a \Equaldef f_0(a).
\end{aligned}
\end{equation}

\vspace{5pt}

\begin{theorem}\label{0-1property}
Consider a connected undirected graph $\mathcal{G}$, with an adjacency matrix $\mathbf{A}$. Let $\mathcal{S}^\star_{\mathcal{G}}$ denote the set of maximizers of the potential $\Phi_{\mathbf{A}}(a)$ for the network coordination game defined over $\mathcal{G}$. Then,
\begin{equation}
\mathcal{S}^\star_{\mathcal{G}}\subseteq \{\mathbb{0},\mathbb{1} \}.
\end{equation}
\end{theorem}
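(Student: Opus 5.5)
The plan is to rewrite $f_0$ from \eqref{OriginalProblem} in terms of the support of the action profile, and then compare any mixed profile directly with the two candidates $\mathbb{0}$ and $\mathbb{1}$. Since $\Phi_{\mathbf{A}}$ and $f_0$ differ only by a constant, they have the same maximizers. For $a\in\{0,1\}^N$ let $S\Equaldef\{i: a_i=1\}$. Let $e(S)$ be the number of edges with both endpoints in $S$, and $c(S)$ the number of edges with exactly one endpoint in $S$.

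Because $\mathbf{A}$ is symmetric with zero diagonal, $a^\mathsf{T}\mathbf{A}a = 2e(S)$. Also $\mathbb{1}^\mathsf{T}\mathbf{A}a=\sum_{i\in S} d_i = 2e(S)+c(S)$. Hence
\begin{equation*}
f_0(a) = (1-2\theta)\, e(S) - \theta\, c(S).
\end{equation*}
In particular $f_0(\mathbb{0})=0$ and $f_0(\mathbb{1})=(1-2\theta)|\mathcal{E}|$. The key structural input is connectivity: for every nonempty proper subset $S\subsetneq[N]$ we have $c(S)\ge 1$. Every edge lies inside $S$, inside $S^c$, or across the cut, so $e(S)\le |\mathcal{E}|-c(S)$.

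I will show that every nonempty proper $S$ satisfies $f_0(a) < \max\{0,(1-2\theta)|\mathcal{E}|\}$, splitting on the sign of $1-2\theta$.

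\emph{Case $\theta\ge 1/2$.} Then $(1-2\theta)e(S)\le 0$ and $\theta>0$. Therefore $f_0(a)\le -\theta c(S)\le -\theta<0=f_0(\mathbb{0})$.

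\emph{Case $\theta<1/2$.} Then $1-2\theta>0$, and using $e(S)\le|\mathcal{E}|-c(S)$ we get
\begin{equation*}
f_0(a)\le (1-2\theta)\big(|\mathcal{E}|-c(S)\big)-\theta c(S) = (1-2\theta)|\mathcal{E}| - (1-\theta)c(S).
\end{equation*}
Since $\theta<1/2<1$ and $c(S)\ge1$, this is strictly less than $f_0(\mathbb{1})$.

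In both cases any profile other than $\mathbb{0},\mathbb{1}$ is strictly beaten by one of them, so $\mathcal{S}^\star_{\mathcal{G}}\subseteq\{\mathbb{0},\mathbb{1}\}$. Both may be maximizers, for instance at $\theta=1/2$ where both give value $0$.

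The only delicate point is making sure the bound is strict for all real $\theta$, including $\theta<0$ and the boundary $\theta=1/2$. This is exactly where $c(S)\ge1$ from connectivity is used. The coefficient $1-\theta$ is positive throughout the second case, so no further subcases are needed.
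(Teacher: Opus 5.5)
Your proof is correct and is essentially the paper's argument in combinatorial form: for binary $a$ one has $a^\mathsf{T}\mathbf{L}a = c(S)$ and $d^\mathsf{T}a = 2e(S)+c(S)$, so your identity $f_0(a)=(1-2\theta)e(S)-\theta c(S)$ is exactly the paper's Laplacian rewriting $f_0(a)=(\frac12-\theta)d^\mathsf{T}a-\frac12 a^\mathsf{T}\mathbf{L}a$, and your use of $c(S)\ge 1$ is the counting version of the paper's fact that, on a connected graph, the Laplacian form vanishes on $\{0,1\}^N$ only at $\mathbb{0}$ and $\mathbb{1}$. Your version has the small advantage of making the strict inequality explicit for every real $\theta$, including $\theta=1/2$, where the paper relies only on its brief ``equality if and only if'' remark.
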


\vspace{5pt}

\begin{IEEEproof}
 Rewriting the objective function in \eqref{OriginalProblem} in terms of the graph Laplacian\footnote{The graph Laplacian is defined as $\mathbf{L}\Equaldef \mathbf{D}-\mathbf{A}$ such that $\mathbf{D}\Equaldef \mathrm{diag}(\mathbf{A}\mathbb{1})$.}, we obtain
\begin{equation}
f_0(a) = \Big(\frac{1}{2}-\theta \Big)d^\top a - \frac{1}{2}a^\top\mathbf{L}a,
\end{equation}
where $d\Equaldef \mathbf{A}\mathbb{1}$ denotes the graph's degree sequence. Since $\mathbf{L}$ is always a positive semidefinite matrix \cite{FB-LNS}, if the graph is connected, the following holds
\begin{equation}
a^\top \mathbf{L} a \geq 0, \ \ a\in\{0,1\}^N,
\end{equation}
with equality if and only if $a\in \{\mathbb{0},\mathbb{1}\}.$ Therefore,
\begin{equation} \label{eq:upper_bound}
\max_{a\in\{0,1\}^N} f_0(a) \leq \max_{a\in\{0,1\}^N} \Big(\frac{1}{2}-\theta \Big)d^\top a.
\end{equation}

Since the function on the right hand side of \eqref{eq:upper_bound} is linear in $a$, and $d_i\geq 0$ for all $i\in[N]$, it is either increasing or decreasing depending on $\theta$, which implies that
\begin{equation}\label{eq:optimal}
a^\star = \begin{cases}
\mathbb{0} & \text{if} \ \ \theta > \frac{1}{2} \\  
\mathbb{0} \ \text{or} \ \mathbb{1} \ & \text{if} \ \ \theta = \frac{1}{2} \\
\mathbb{1} & \text{if} \ \ \theta < \frac{1}{2}.
\end{cases}
\end{equation}
Therefore,
$\mathcal{S}_{\mathcal{G}}^\star = \arg \max \big\{f_0(a) \mid a \in \{0,1\}^N\big\} \subseteq \{\mathbb{0},\mathbb{1}\}.$
\end{IEEEproof}

\section{Trade-off Between Rationality and Connectivity}\label{sec:tradeoff}

The correspondence between maximizers of the potential function and pure-strategy Nash equilibria establishes a link between optimization and the rational behavior of agents playing our network coordination game. Moreover, the equilibrium selected through interactive game play can be justified using the Log Linear Learning framework \cite{blume1993statistical,marden2012revisiting,alos2010logit}. When bounded-rational agents gradually increase their rationality over time the learning dynamics converge to the risk-dominant equilibrium, which coincides with the global maximizer of the potential function. In the limit of infinite rationality, the network connectivity does not affect the induced Markov chain induced by LLL in the action space. However, it affects its convergence rate \cite{montanari2010spread,arieli2020speed}. In this section we will establish that the connectivity and rationality have a non-trivial interplay in the bounded rationality regime with respect to the stationary probability of coordinated action states. In the next subsection, we describe the LLL framework.

\subsection{Log-Linear Learning with Bounded Rationality}

Suppose that the agents in the network coordination game interact asynchronously over time as follows. At time $t=0$, agent $i$ picks an action $a_i(0) \in \{0,1\}$, $i\in[N]$. At all subsequent times $t>0$, an agent is randomly selected with uniform probability, observes noiselessly the actions of its neighbors at the previous time, $a_{\mathcal{N}_i}(t-1) \Equaldef \{a_j(t-1) \mid j\in \mathcal{N}_i\}$, and updates its action according to a logit stochastic kernel defined as
\begin{equation}
 \mathbb{P}\big(A_i(t) = a_i \mid A_{\mathcal{N}_i}(t-1) = a_{\mathcal{N}_i} \big)=\sigma_i(a_i,\beta \mid a_{\mathcal{N}_i}),
\end{equation}
where
\begin{equation}\label{eq:LLL}
\sigma_i(a_i,\beta \mid a_{\mathcal{N}_i})   \Equaldef \frac{e^{\beta U_{i}\big(a_{i},a_{\mathcal{N}_i}\big)}}{\sum_{a'_i\in \{0,1\}}e^{\beta U_{i}\big(a'_{i},a_{\mathcal{N}_i}\big)}}, \  a_i\in \{0,1\}.
\end{equation}

In behavioral economics, the logit kernel is used to model discrete choice under bounded rationality \cite{mcfadden1984econometric,train2009discrete,sandomirskiy2023independence,Sandomirskiy:2025,matejka2015rational}.
The parameter $\beta$ captures the agent's level of rationality, varying between random behavior ($\beta = 0$) and deterministic best-response behavior ($\beta \to \infty$).

The logit kernel defines a Markov chain with a state space $\mathcal{S} = \{0,1\}^N$ corresponding to all possible strategy profiles $a\in \mathcal{S}$ for the network coordination game. For exact potential games,
this Markov chain has a unique stationary distribution given by the Gibbs--Boltzmann distribution \cite{blume1993statistical,marden2012revisiting,montanari2010spread}. In particular, for our network coordination game defined over a graph with adjacency matrix $\mathbf{A}$, the stationary distribution  $\mu_\mathbf{A}: \mathcal{S} \rightarrow [0,1]$ is given by
\begin{equation}\label{eq:gibbs}
\mu_{\mathbf{A}}(a \mid \beta) \Equaldef
\frac{e^{\beta \Phi_{\mathbf{A}}(a)}}
{\sum_{a' \in \mathcal{S}} e^{\beta \Phi_{\mathbf{A}}(a')}} ,
\end{equation}
where $\Phi_{\mathbf{A}}$ is the potential function in \eqref{potential_function}. 

The existing analysis of LLL shows that as $\beta \to \infty$, the probability mass concentrates on the risk-dominant pure strategy NE, i.e., the maximizers of the potential function, which means that the only stochastically stable states of the Markov chain are the ones in $\mathcal{S}_{\mathcal{G}}^\star$. However, we are interested in analyzing the bounded rationality regime, $\beta<\infty$. In this case, the probability of any state distributed according to the Gibbs--Boltzmann distribution evaluated at $a^\star \in \mathcal{S}^\star_{\mathcal{G}}$ is bounded away from one.

In this section we are interested in the minimum value of $\beta$ such that the agents coordinate on one of the states in $\mathcal{S}_{\mathcal{G}}^\star$ with high probability. For $\delta \in (0,1)$ and $\theta \in[0,1]$, for a connected undirected graph with adjacency matrix $\mathbf{A}$ define
\begin{equation}\label{Betamin1}
\beta_{\mathbf{A}}^{\min}(\delta) \Equaldef \min \Big\{ \beta \mid \mu_{\mathbf{A}}(a^\star \mid \beta) \geq 1-\delta \Big\},
\end{equation}
where $a^\star$ is a maximizer of $\Phi_{\mathbf{A}}$ given by \eqref{eq:optimal}.

\subsection{Regular graphs}

The class of $K$-regular graphs is characterized by nodes that each have a constant number of neighbors, i.e., $|\mathcal{N}_i| = K$ for all $i \in [N]$ \cite{newman2018networks}. Restricting our analysis to $K$-regular graphs allows us to examine how $\beta_{\mathbf{A}}^{\min}(\delta)$ varies as a function of the connectivity parameter $K$.
Before discussing the interplay between rationality and connectivity in $K$-regular graphs, it is important to note that multiple non-isomorphic regular graphs may share the same degree $K$. These graphs cannot, in general, be related by a similarity transformation of their adjacency matrices. For instance, a bipartite and a non-bipartite regular graphs with the same degree are not isomorphic. Nevertheless, in what follows we construct a sequence of graphs $\{\mathbf{A}_K\}$ with increasing degree $K$, such that all graphs within the same isomorphism class yield the same value of $\beta_{\mathbf{A}_K}^{\min}(\delta)$.

Applying a similarity transformation is equivalent to re-assigning indices to agents. Although for a specific action profile $a \in \{0,1\}^N \backslash \{a^\star\}$, the corresponding potential value can be different on two isomorphic graphs $\mathcal{G}_1$ and $\mathcal{G}_2$ with the same $K$ and $N$, there exists a unique $\tilde{a} \in \{0,1\}^N$ such that $\Phi_{\mathcal{G}_1}(a) = \Phi_{\mathcal{G}_2}(\tilde{a})$. Such $\tilde{a}$ can be derived by applying the same similarity transformation on $a$. Therefore, when computing the exact value of $\mu_{\mathbf{A}}(a\mid\beta)$ for a specific $a \neq a^\star$, we must specify and fix a graph $\mathcal{G}$. Nevertheless, $\Phi_{\mathbf{A}}(a^\star)$ remains constant for all isomorphic graphs with the same degree and so does $\mu_{\mathbf{A}}(a^\star \mid \beta)$. This is further discussed in the proof of our next theorem.

\vspace{5pt}

The following lemma from graph theory characterizes the conditions under which a regular graph exists.

\vspace{5pt}

\begin{lemma}[\cite{diestel2017graph}]\label{graph_new}
    A simple $K$-regular graph $\mathcal{G}_K$ with $N$ vertices of degree $K$ exists if and only if $K \in \{0,\dots,N-1\}$ and $NK$ is even.
\end{lemma}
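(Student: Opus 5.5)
The proof splits into necessity and sufficiency.

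For necessity, a simple graph on $N$ vertices has no loops and no multiple edges. So every vertex has at most $N-1$ neighbours, which forces $K\in\{0,\dots,N-1\}$. The handshake identity $\sum_{i\in[N]} d_i = 2|\mathcal{E}|$ then gives $NK = 2|\mathcal{E}|$ for a $K$-regular graph, so $NK$ must be even.

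For sufficiency, I would give an explicit circulant construction. Label the vertices by $\mathbb{Z}_N=\{0,1,\dots,N-1\}$, and split into two cases according to the parity of $K$.

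\emph{$K$ even.} Write $K=2m$ with $m\le (N-1)/2$. Join each vertex $i$ to $i\pm 1,\dots,i\pm m \pmod N$. Because $m<N/2$, the $2m$ residues $i\pm s$ for $s=1,\dots,m$ are pairwise distinct and all differ from $i$. Hence every vertex has exactly $K$ neighbours. The adjacency rule depends only on $\pm(j-i)$, so it is symmetric and the graph is undirected and simple.

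\emph{$K$ odd.} Evenness of $NK$ forces $N$ to be even. Write $K=2m+1$, so $2m+1\le N-1$ gives $m\le N/2-1$. Take the circulant graph with offsets $\pm1,\dots,\pm m$ as above, and add the antipodal edges $\{i,i+N/2\}$. The offset $N/2$ is its own negative modulo $N$, so each vertex gains exactly one new neighbour. That neighbour differs from the $i\pm s$ with $s\le m<N/2$, so the degree becomes $2m+1=K$ and no multi-edges appear.

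The degenerate cases are immediate: $K=0$ is the empty graph and $K=N-1$ is the complete graph. Both are in fact covered by the construction.

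The only delicate step is checking that the offsets produce no repeated neighbours and no loops. This is exactly where $K\le N-1$ enters, through $m<N/2$, and where the parity condition enters, through the existence of the antipodal offset $N/2$.
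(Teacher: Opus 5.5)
Your proof is correct and complete. The paper gives no argument for this lemma; it only attributes it to \cite{diestel2017graph}, so there is nothing to compare against. Your handshake-lemma argument for necessity and circulant construction for sufficiency form the standard self-contained proof and fill that gap.

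The distinctness checks you single out do go through. For offsets $1\le s,s'\le m$ one has $2\le s+s'\le 2m\le N-1$, so $i+s\not\equiv i-s' \pmod{N}$. In the odd case, $m\le N/2-1$ keeps the antipodal neighbour separate from the $i\pm s$.

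Your construction also gives more than the statement asks for. For $K\ge 2$, the offset $\pm1$ puts a Hamiltonian cycle in the graph, so it is connected. That is what the paper actually uses later: Lemma~\ref{G_K_new} and Theorem~\ref{thm:interplay} assume a \emph{connected} $K$-regular graph, whereas the lemma as stated only guarantees a simple one. For $K=1$ you get a perfect matching, which is connected only when $N=2$; for $K=0$ the graph is connected only when $N=1$.
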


\vspace{5pt}

\begin{lemma}\label{G_K_new}
    Let $\mathbf{A}_K$ be the adjacency matrix of a connected $K$-regular graph $\mathcal{G}_K$. The following statements hold:

\begin{enumerate}[(a)]

\item If $N$ is even and $K<N-1$, then $\mathcal{G}_{K+1}$ always exists. Moreover, the adjacency matrix of a regular graph $\mathcal{G}_{K+1}$ can be constructed as follows: there exists
a symmetric permutation matrix $\Pi_1$, and a permutation matrix $\Pi_2$ such that
\begin{equation}
\mathbf{A}_{K+1} = \Pi_2(\mathbf{A}_{K}+\Pi_1)\Pi^\top_2.
\end{equation}

\item If $N$ is odd, $K$ is even and $K<N-2$, then $\mathcal{G}_{K+1}$ does not exist. However, $\mathcal{G}_{K+2}$ exists, and its adjacency matrix can be constructed as follows: there exist two distinct symmetric permutation matrices $\Pi_1, \Pi_2$ and a permutation matrix $\Pi_3$ such that
\begin{equation}
\mathbf{A}_{K+2} = \Pi_3(\mathbf{A}_{K}+\Pi_1+\Pi_2)\Pi_3^\top.
\end{equation}

 \end{enumerate}
 \end{lemma}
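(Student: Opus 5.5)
The plan is to recast both parts as statements about the complement graph. Let $\bar{\mathcal{G}}_K$ denote the complement of $\mathcal{G}_K$, with adjacency matrix $\bar{\mathbf{A}}_K = \mathbb{1}\mathbb{1}^\top - \mathbf{I} - \mathbf{A}_K$. This graph is $r$-regular with $r = N-1-K$.

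The matrix $\mathbf{A}_K + \Pi_1$ (resp.\ $\mathbf{A}_K+\Pi_1+\Pi_2$) must be the adjacency matrix of a simple graph. So it must be $0/1$, symmetric, and have zero diagonal. This forces the added matrix to be supported on the edges of $\bar{\mathcal{G}}_K$.

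\textbf{Part (a).} A symmetric permutation matrix with zero diagonal is a fixed-point-free involution, that is, a perfect matching. Hence the task is to find a perfect matching of $\bar{\mathcal{G}}_K$. Since $K<N-1$, we have $r\ge 1$, and $N$ is even. Once a matching $\Pi_1$ is found, $\mathbf{A}_K+\Pi_1$ is $(K+1)$-regular. We may take $\Pi_2=\mathbf{I}$, or any relabeling, which only changes the isomorphism representative.

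For the existence step I would split on $r$:
\begin{itemize}
\item If $r\ge N/2$, Dirac's theorem gives a Hamiltonian cycle in $\bar{\mathcal{G}}_K$. Because $N$ is even, taking alternate edges of this cycle yields a perfect matching.
\item If $r<N/2$, I would try Tutte's condition, or Petersen-type results, for $r$-regular graphs.
\end{itemize}

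\textbf{Part (b).} Here $N$ is odd and $K$ is even, so $r=N-1-K\ge 2$ is even. Petersen's 2-factor theorem gives a spanning 2-regular subgraph $F$ of $\bar{\mathcal{G}}_K$. Orienting each cycle of $F$ gives a fixed-point-free permutation $\sigma$ with $\mathrm{adj}(F)=P_\sigma+P_\sigma^\top$. I would then set $\Pi_1=P_\sigma$ and $\Pi_2=P_\sigma^\top$, and take $\Pi_3$ to be any relabeling. The resulting graph is $(K+2)$-regular. Non-existence of $\mathcal{G}_{K+1}$ follows from \cref{graph_new}, since $N(K+1)$ is odd.

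\textbf{Main obstacle.} I expect the hard step to be the small-$r$ case of (a), together with the symmetry requirement in (b). In both places the statement, read literally, fails, so some modification is unavoidable.
\begin{itemize}
\item For (a), take $N=6$ and $\mathcal{G}_3=K_{3,3}$. The complement is two disjoint triangles, which has no perfect matching. So $K_{3,3}$ cannot be extended by adding a matching, and (a) cannot hold for every connected $\mathbf{A}_K$.
\item For (b), a symmetric permutation with zero diagonal needs $N$ even. When $N$ is odd, $\Pi_1,\Pi_2$ can be symmetric only if their sum is read as the 2-factor above, i.e.\ with $\Pi_2=\Pi_1^\top$.
\end{itemize}

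I would therefore prove (a) under an added hypothesis ensuring a perfect matching, such as $r\ge N/2$ or connectivity of $\bar{\mathcal{G}}_K$ with Tutte's condition checked. Alternatively, I would prove it for a suitably chosen $\mathbf{A}_K$, which suffices for building the increasing sequence $\{\mathbf{A}_K\}$ used later. I would state this restriction explicitly in the proof.
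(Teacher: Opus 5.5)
Your diagnosis is correct. The paper's proof uses the same idea (augment $\mathcal{G}_K$ by edges of its complement), but it breaks at exactly the two points you flag, so your plan is sounder than the paper's.

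In (a), the paper asserts that $\overline{\mathcal{G}}_K$, being regular of degree at least $1$ on an even number of vertices, must contain a perfect matching. Your example refutes this: $\overline{K_{3,3}}=2K_3$ has no perfect matching. Conjugation by $\Pi_2$ preserves being a simple adjacency matrix, so $\mathbf{A}_K+\Pi_1$ must itself be one. That forces $\Pi_1$ to be a perfect matching of $\overline{\mathcal{G}}_K$, so (a) fails for $K_{3,3}$.

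In (b), the paper adds \emph{two disjoint perfect matchings} on an odd vertex set, and these do not exist. Every symmetric permutation matrix of odd order has a fixed point and therefore puts a $1$ on the diagonal.

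Your repair of (b) is a complete proof of the corrected statement for every connected $\mathbf{A}_K$:
\begin{itemize}
\item $r=N-1-K$ is even and at least $2$, so Petersen's theorem gives a $2$-factor of $\overline{\mathcal{G}}_K$.
\item Its cycles have length at least $3$, so orienting them writes it as $P_\sigma+P_\sigma^\top$ with $P_\sigma\neq P_\sigma^\top$.
\item The resulting graph is connected because it contains $\mathcal{G}_K$.
\end{itemize}
State the amendment as \emph{a permutation matrix $\Pi_1$ and $\Pi_2=\Pi_1^\top$}, since only the sum is symmetric, not the factors. Nothing is lost downstream: the bound $\tilde a^\top\Pi\tilde a\le\|a\|_1$ used in \cref{thm:interplay} holds for any permutation matrix.

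What is still missing is a finished version of (a). Your Dirac argument for $r\ge N/2$ is fine. However, drop the plan to try Tutte- or Petersen-type results when $r<N/2$: your own $K_{3,3}$ example has $r=2<3$, so no such argument can work for arbitrary $\mathbf{A}_K$. What the paper actually needs is a nested sequence, and you should construct it rather than leave it as an option.

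For even $N$, one construction is the round-robin $1$-factorization of $K_N$ on $\mathbb{Z}_{N-1}\cup\{\infty\}$:
\[
F_i=\{\{\infty,i\}\}\cup\{\{i-k,i+k\}:1\le k\le N/2-1\},\qquad i\in\mathbb{Z}_{N-1}.
\]
Here $F_0\cup F_1$ is a Hamiltonian cycle: it visits $\infty,0,2,-2,4,-4,\dots$, and these are all distinct because $2$ is invertible modulo $N-1$. Set $\mathcal{G}_K=F_0\cup\dots\cup F_{K-1}$ for $2\le K\le N-1$. Each $\mathcal{G}_K$ is connected and $K$-regular, and $F_K$ lies in its complement. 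So $\Pi_1$ is the matrix of $F_K$ and $\Pi_2=\mathbf{I}$.

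Part (a) should then be stated for this particular sequence. That is consistent with the paper's own remark that it \emph{constructs} a sequence $\{\mathbf{A}_K\}$ rather than treating an arbitrary $K$-regular graph.
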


\vspace{5pt}

\begin{IEEEproof}
We start with part~(a). Since $N$ is even, $NK$ is even for any $K$. By \cref{graph_new}, $\mathcal{G}_K$ exists for every $K\in\{0,\ldots,N-1\}$. We construct $\mathcal{G}_{K+1}$ from $\mathcal{G}_K$ by adding a \textit{perfect matching} \cite{diestel2017graph}. Let $\bar{\mathcal{G}}_K$ denote the complement of $\mathcal{G}_K$. Since $\mathcal{G}_K$ is $K$-regular, $\bar{\mathcal{G}}_K$ is $(N-1-K)$-regular with $N-1-K\geq 1$. By the \textit{handshaking lemma} \cite{diestel2017graph}, $\bar{\mathcal{G}}_K$ has at least $N/2$ edges, and since it is regular of degree at least $1$ on an even number of vertices, it contains a perfect matching $\mathcal{M}$. Let $\Pi_1$ be the permutation matrix associated with the matching $\mathcal{M}$.

Adding $\mathcal{M}$ to $\mathcal{G}_K$ yields a $(K+1)$-regular graph whose adjacency matrix is $\mathbf{A}_K + \Pi_1$. The permutation matrix $\Pi_2$ accounts for a possible relabeling of the vertices.

For part~(b), when $N$ is odd and $K$ is even, $NK$ is even so $\mathcal{G}_K$ exists. However, $(K+1)N$ is odd, so by \cref{graph_new}, $\mathcal{G}_{K+1}$ does not exist. Since $(K+2)N$ is even, $\mathcal{G}_{K+2}$ exists. We construct it by adding two disjoint perfect matchings (symmetric permutation matrices $\Pi_1$ and $\Pi_2$) from the complement graph, with $\Pi_3$ introduced for node relabeling.
\end{IEEEproof}

\vspace{5pt}

The following lemma provides an upper bound on the binary quadratic form $a^{\mathsf{T}}\mathbf{A}_K a$.

\vspace{5pt}

\begin{lemma}\label{lem:bound_quadratic_form}
Let $\mathbf{A}_K$ be the adjacency matrix of a connected $K$-regular graph. Let $a \in \{0,1\}^N$ be such that $\|a\|_1 = m$. The following inequality holds
\begin{equation}\label{Norm}
a^{\top}\mathbf{A}_K a \leq mK , \ \ a \in \{0,1\}^N.
\end{equation}
\end{lemma}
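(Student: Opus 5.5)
The plan is to expand the quadratic form combinatorially and bound it row by row, using only that every row of $\mathbf{A}_K$ sums to $K$.

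Write $S \Equaldef \{i\in[N] \mid a_i = 1\}$, so that $|S| = \|a\|_1 = m$. Since $a$ is binary,
\begin{equation*}
a^\top \mathbf{A}_K a = \sum_{i=1}^N\sum_{j=1}^N [\mathbf{A}_K]_{ij} a_i a_j = \sum_{i\in S}\sum_{j\in S}[\mathbf{A}_K]_{ij}.
\end{equation*}
This double sum counts the ordered pairs $(i,j)$ with $i,j\in S$ and $(i,j)\in\mathcal{E}$, which is twice the number of edges of the subgraph induced by $S$.

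For each fixed $i\in S$, the inner sum $\sum_{j\in S}[\mathbf{A}_K]_{ij} = |\mathcal{N}_i\cap S|$ is a sub-sum of the $i$-th row of the nonnegative matrix $\mathbf{A}_K$. It is therefore at most $\sum_{j=1}^N [\mathbf{A}_K]_{ij} = |\mathcal{N}_i| = K$. Summing this bound over the $m$ indices in $S$ gives $a^\top\mathbf{A}_K a \leq mK$, which is \eqref{Norm}.

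An equivalent linear-algebraic route is $a^\top\mathbf{A}_K a \le \lambda_{\max}(\mathbf{A}_K)\|a\|_2^2 = K\|a\|_2^2 = Km$. This uses that the spectral radius of a $K$-regular adjacency matrix equals $K$, with eigenvector $\mathbb{1}$, together with the notation identity $\|a\|_1=\|a\|_2^2$.

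There is no real obstacle; the only care needed is to use the nonnegativity of the entries of $\mathbf{A}_K$ so that restricting a row sum to $S$ can only decrease it. Connectivity is not needed for the inequality. Connectivity does matter for when equality holds: equality requires $\mathcal{N}_i\subseteq S$ for every $i\in S$, and in a connected graph this forces $S\in\{\emptyset,[N]\}$, i.e.\ $a\in\{\mathbb{0},\mathbb{1}\}$. That observation is likely to be useful in the subsequent monotonicity arguments.
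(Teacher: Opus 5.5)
Your argument is correct, and your main route differs from the paper's. The paper argues spectrally. It takes $\|\mathbf{A}_K\|_2 = K$ as known and chains Cauchy--Schwarz with operator-norm consistency, $a^\top\mathbf{A}_K a \le \|a\|_2\|\mathbf{A}_K a\|_2 \le \|\mathbf{A}_K\|_2\|a\|_2^2 = mK$. That is essentially the ``equivalent linear-algebraic route'' you mention in passing.

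Your main argument restricts each row sum of the nonnegative matrix $\mathbf{A}_K$ to $S$, and it is more elementary. It uses only nonnegativity and constant row sums, whereas $\|\mathbf{A}_K\|_2 = K$ itself rests on a row-sum argument (Gershgorin or Perron--Frobenius) together with symmetry. It also gives more. Summing the deficits $K - |\mathcal{N}_i\cap S|$ over $i\in S$ yields the exact identity $a^\top\mathbf{A}_K a = mK - a^\top\mathbf{L}a$, where $a^\top\mathbf{L}a$ is the number of edges between $S$ and its complement. This makes your equality characterization immediate and ties the lemma directly to the Laplacian argument in \cref{0-1property}. It also extends verbatim to irregular graphs as $a^\top\mathbf{A}a \le d^\top a$.

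What the spectral route buys in exchange is generality in the vector: $x^\top\mathbf{A}_K x \le K\|x\|_2^2$ holds for every real $x$, not just $0/1$ vectors, and binarity enters only in the last step $\|a\|_2^2 = \|a\|_1$. The paper reuses exactly this form in \cref{thm:moderate_beta}. There the Rayleigh quotient, with $\lambda_1(\mathbf{A})$ in place of $K$, is applied to $y = a - \theta\mathbb{1}$, whose entries have mixed signs, and your row-restriction step does not directly apply to such a vector.
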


\vspace{5pt}

\begin{IEEEproof}
Let $\|\mathbf{A}_K\|_2$ denote the $\ell^2$ induced operator norm\footnote {The $\ell^2$ induced operator norm of $\mathbf{A}_K$ is $\|\mathbf{A}_K\|_2 \Equaldef \sup_{x \neq \mathbb{0}}{\frac{\|\mathbf{A}_Kx\|_2}{\|x\|_2}}$.} of $\mathbf{A}_K$. It is well known that $\|\mathbf{A}_K\|_2$ is the largest singular value of $\mathbf{A}_K$, which in this case is $K$. Since operator norms are consistent with the vector norm inducing them, we have
\begin{equation}
    \|\mathbf{A}_Ka\|_2 \leq \|\mathbf{A}_K\|_2\|a\|_2, \ \ a \in \{0,1\}^N.
\end{equation}
Using the Cauchy--Schwarz inequality on $a^{\mathsf{T}}\mathbf{A}_K a$, we obtain
\begin{multline}
a^{\mathsf{T}}\mathbf{A}_K a \leq \|a\|_2  \|\mathbf{A}_Ka\|_2  \leq \|a\|^2_2\|\mathbf{A}_K\|_2 \\ = \|a\|_1\|\mathbf{A}_K\|_2= mK.
\end{multline}
\end{IEEEproof}

\vspace{5pt}

 Intuitively, $a^{\mathsf{T}}\mathbf{A}_K a$ measures the total interaction among the $m$ active nodes selected by the binary vector $a$. Since each node contributes at most $K$ connections in a $K$-regular graph, the bound $a^{\mathsf{T}}\mathbf{A}_K a \le mK$ follows from $\|\mathbf{A}_K\|_2 = K$.

From this point on, for simplicity, we  ignore the constant term in our potential function $\Phi_{\mathbf{A}_K}$ and use the following expression instead
\begin{equation}\label{PotentialFunction}
 \hat{\Phi}_{\mathbf{A}_K}(a) \Equaldef  \frac{1}{2}a^{\mathsf{T}}\mathbf{A}_K a -K\theta \sum_{i\in [N]}a_{i}.
\end{equation}

\begin{theorem}\label{thm:interplay}
Consider a network stag hunt coordination game defined over a connected $K$-regular graph $\mathcal{G}_K$ with adjacency matrix $\mathbf{A}_K$ and payoffs given by~\eqref{eq:payoff}.
Let the agents update their actions according to LLL with a rationality parameter $\beta \ge 0$.
Define
\begin{equation}
g(\beta,K) \Equaldef \mu_{\mathbf{A}_K}(a^\star \mid \beta),
\end{equation}
where $a^\star \in \{\mathbb{0},\mathbb{1}\}$  denotes a maximizer of the potential function $\Phi_{\mathbf{A}_K}$ and $\mu_{\mathbf{A}_K}(a^\star \mid \beta)$ is given by \eqref{eq:gibbs}.
Then, $g$ is strictly increasing in $\beta$ and monotone increasing in $K$. That is
\begin{enumerate}
    \item $g(\beta, K) < g(\beta, K+1)$ for even $N$;
    \item $g(\beta, K) < g(\beta, K+2)$ for odd $N$.
\end{enumerate}
\end{theorem}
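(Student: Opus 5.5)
The plan is to work directly with the Gibbs form \eqref{eq:gibbs}, normalized by the value at the maximizer. By \cref{pot} and \eqref{PotentialFunction}, the constant term cancels between numerator and denominator, so
\begin{equation*}
g(\beta,K)=\Big(\sum_{a\in\{0,1\}^N} e^{\beta\,\Delta_K(a)}\Big)^{-1},\qquad \Delta_K(a)\Equaldef \hat{\Phi}_{\mathbf{A}_K}(a)-\hat{\Phi}_{\mathbf{A}_K}(a^\star).
\end{equation*}
By \cref{0-1property} and \eqref{eq:optimal}, $\Delta_K(a)\le 0$ for every $a$. Moreover, $\Delta_K(a)<0$ for every $a\notin\{\mathbb{0},\mathbb{1}\}$, by the strict Laplacian inequality in the proof of \cref{0-1property}, and such an $a$ exists whenever $N\ge 2$. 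Each summand $e^{\beta\Delta_K(a)}$ is therefore nonincreasing in $\beta$, and at least one is strictly decreasing. Hence the sum is strictly decreasing and $g$ is strictly increasing in $\beta$. This also covers the tie $\theta=1/2$, where both $\mathbb{0}$ and $\mathbb{1}$ give a zero exponent.

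For monotonicity in $K$ with $N$ even, I would use \cref{G_K_new}(a) and write $\mathbf{A}_{K+1}=\Pi_2(\mathbf{A}_K+\Pi_1)\Pi_2^\top$. The relabeling $\Pi_2$ only permutes the summation index $a\mapsto \Pi_2^\top a$ and fixes $\mathbb{0}$ and $\mathbb{1}$. So $g(\beta,K+1)$ equals the same expression built from $\mathbf{A}_K+\Pi_1$, and $\Pi_2$ can be dropped. Let $m=\|a\|_1$. The key estimate is $a^\top\Pi_1 a\le m$: it counts (twice) the matched pairs with both endpoints active, or equivalently it follows from \cref{lem:bound_quadratic_form} with degree one. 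I then compute the increment $\Delta_{K+1}(a)-\Delta_K(a)$ in two cases.
\begin{itemize}
\item If $\theta<1/2$, then $a^\star=\mathbb{1}$ and the increment is $\tfrac12 a^\top\Pi_1 a-\theta m-N(\tfrac12-\theta)\le(\tfrac12-\theta)(m-N)$. This is strictly negative whenever $m<N$.
\item If $\theta\ge 1/2$, then $a^\star=\mathbb{0}$ and the increment is $\tfrac12 a^\top\Pi_1 a-\theta m\le(\tfrac12-\theta)m\le 0$. For $\theta>1/2$ it is strictly negative for every $m>0$. For $\theta=1/2$ I would use a single active node, where $a^\top\Pi_1a=0$ and the increment equals $-\theta<0$.
\end{itemize}
In every case each summand is nonincreasing when passing from $K$ to $K+1$, and at least one is strictly smaller. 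So the partition sum strictly decreases and $g(\beta,K)<g(\beta,K+1)$ for $\beta>0$.

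For $N$ odd, \cref{G_K_new}(b) gives $\mathbf{A}_{K+2}=\Pi_3(\mathbf{A}_K+\Pi_1+\Pi_2)\Pi_3^\top$. Here I would apply the same one-matching increment estimate twice, adding $\Pi_1$ and then $\Pi_2$. The intermediate matrix need not be regular, but the estimate above only used $a^\top\Pi a\le m$ and the fact that each matching adds exactly $N$ to $\mathbb{1}^\top\mathbf{A}\mathbb{1}$ (with the $\theta$-term scaling with degree). Regularity enters only through the identification of $a^\star$, which is applied at $K$ and $K+2$.

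The main obstacle I expect is bookkeeping rather than depth. One issue is justifying that the relabeling permutations leave the partition sum and $a^\star$ unchanged. The other is treating the boundary case $\theta=1/2$, where $a^\star$ is not unique, carefully enough to retain strict inequality. Note that at $\beta=0$ we have $g\equiv 2^{-N}$, so the strict statements in $K$ should be read for $\beta>0$.
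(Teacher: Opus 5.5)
Your proposal is correct and is essentially the paper's own argument. You use the same matching construction from Lemma~\ref{G_K_new}, the same bound $a^\top\Pi_1 a\le m$, and the same term-by-term comparison of Gibbs weights; the paper's factor $e^{\beta(1/2-\theta)N}$ is exactly your normalization by $\hat\Phi(a^\star)$, and your $\beta$-part is the paper's sign argument on $\sum_{a'}(\hat\Phi(a^\star)-\hat\Phi(a'))e^{\beta\hat\Phi(a')}$.

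You are more careful than the paper about strictness: you cover every $\theta$, including $\theta=1/2$, and you note that strict monotonicity in $K$ fails at $\beta=0$. Like the paper, however, you inherit Lemma~\ref{G_K_new}, which has two problems. Its perfect-matching step can fail (e.g.\ the complement of $K_{3,3}$ is two disjoint triangles), and its odd-$N$ case should add a $2$-factor $B$ rather than two symmetric permutation matrices; with a $2$-factor your bound becomes $a^\top B a\le 2m$ and the argument is otherwise unchanged.
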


\vspace{5pt}

\begin{IEEEproof}
First, we prove the monotonicity with respect to $\beta$. Computing the derivative of $g$ with respect to $\beta$, we obtain the following equivalence: $\frac{\partial g}{\partial \beta} > 0$ if and only if
\begin{multline}\label{eq:derivative_numerator1}
 \hat{\Phi}_{\mathbf{A}_K}(a^\star)e^{\beta \hat{\Phi}(a^\star)} \! \! \! \! \! \! \!\sum_{a'\in\{0,1\}^N}e^{\beta\hat{\Phi}_{\mathbf{A}_K}(a')} > \\  e^{\beta \hat{\Phi}_{\mathbf{A}_K}(a^\star)} \! \! \! \! \! \! \!  \sum_{a'\in\{0,1\}^N} \hat{\Phi}_{\mathbf{A}_K}(a')e^{\beta\hat{\Phi}(a')}.
\end{multline}
Since $e^{\beta\hat{\Phi}_{\mathbf{A}_K}(a^\star)}>0$, the condition in \eqref{eq:derivative_numerator1} becomes
\begin{equation}\label{eq:derivative_numerator21}
\sum_{a'\in\{0,1\}^N} \big(\hat{\Phi}_{\mathbf{A}_K}(a^\star) -  \hat{\Phi}_{\mathbf{A}_K}(a')\big)e^{\beta\hat{\Phi}_{\mathbf{A}_K}(a')} > 0.
\end{equation}
Since $a^\star$ is a maximizer of $\hat{\Phi}_{\mathbf{A}_K}$, we have that
\begin{equation}
\hat{\Phi}_{\mathbf{A}_K}(a^\star) \geq \hat{\Phi}_{\mathbf{A}_K}(a') , \ \ a' \in\{0,1\}^N.
\end{equation}
Moreover, since there is at least one $\tilde{a} \in \{0,1 \}^N$ such that $\Phi_{\mathbf{A}_K}(a^\star) > \Phi_{\mathbf{A}_K}(\tilde{a})$, we have that  \eqref{eq:derivative_numerator21} holds and consequently, $\frac{\partial g}{\partial \beta}> 0$. Therefore, the function $g(\beta,K)$ is continuous and strictly increasing in $\beta$, with $g(\beta,K) \rightarrow 1$, as $\beta\rightarrow \infty$.

To obtain the monotonicity property with respect to $K$, let $\mathbf{A}_K$ be the adjacency matrix of a fixed connected $K$-regular graph. 

Suppose $N$ is even. Then a $(K+1)$-regular graph $\mathcal{G}_{K+1}$ exists, and we denote its adjacency matrix by $\mathbf{A}_{K+1}$.
By construction, from \cref{G_K_new}, there exist permutation matrices $\Pi_1$ and $\Pi_2$ such that
\begin{equation}
\mathbf{A}_{K+1} = \Pi_2(\mathbf{A}_K + \Pi_1)\Pi_2^{\mathsf T}.
\end{equation}
Define $\tilde{a} = \Pi_2^{\mathsf T} a$, then
\begin{equation}\label{Pinvariance1}
\begin{aligned}
a^{\mathsf T}\mathbf{A}_{K+1}a
&= a^{\mathsf T}\Pi_2(\mathbf{A}_K + \Pi_1)\Pi_2^{\mathsf T} a \\[3pt]
&= \tilde{a}^{\mathsf T}(\mathbf{A}_K + \Pi_1)\tilde{a} \\[3pt]
&= \tilde{a}^{\mathsf T}\mathbf{A}_K\tilde{a} + \tilde{a}^{\mathsf T}\Pi_1\tilde{a}.
\end{aligned}
\end{equation}

Note that the $\ell^p$-induced operator norm $\|\Pi_1\|_p = 1$ for all $p$. Let $m\Equaldef \|a \|_1$. Using H\"{o}lder's inequality, we have
\begin{equation}\label{Pbound1}
\begin{aligned}
\tilde{a}^{\mathsf T} \Pi_1 \tilde{a} &\leq \|\tilde{a}\|_{\infty} \, \|\Pi_1 \tilde{a}\|_1 \\
&\leq \|\tilde{a}\|_{\infty} \, \|\Pi_1\|_1 \, \|\tilde{a}\|_1 = m.
\end{aligned}
\end{equation}
Combining \eqref{Pinvariance1} and \eqref{Pbound1} gives
\begin{equation}
    a{^\mathsf{T}}\mathbf{A}_{K+1}a \leq  \tilde{a}{^\mathsf{T}}\mathbf{A}_{K}\tilde{a} + m.
\end{equation}

Without loss of generality, assume $\theta < 1/2$. Then, the unique global maximizer of $\Phi_{\mathbf{A}_K}$ is $a^\star = \mathbb{1}$ and $\hat{\Phi}_{\mathbf{A}_K}(a^\star) = (1/2 - \theta)NK$.

For any $a \neq \mathbb{1}$, we have
\begin{equation}
    \hat{\Phi}_{\mathbf{A}_K}(a) = \frac{1}{2}a^{\mathsf{T}}\mathbf{A}_K a -K\theta m
\end{equation}
and
\begin{equation}
    \hat{\Phi}_{\mathbf{A}_{K+1}}(a) \leq \frac{1}{2}\tilde{a}^{\mathsf{T}}\mathbf{A}_K \tilde{a}+\frac{m}{2} -(K+1)\theta m.
\end{equation}

Therefore,
\begin{equation}
\hat{\Phi}_{\mathbf{A}_K}\Big(\Pi_2^\top a\Big)-\hat{\Phi}_{\mathbf{A}_{K+1}}(a) \geq \bigg( \theta-\frac{1}{2} \bigg)\|a \|_1, \ \  a \neq \mathbb{1}.
\end{equation}

Now consider
\begin{equation}
\begin{aligned}
\mu_{\mathbf{A}_{K+1}}(a^\star \mid \beta) &=
\frac{e^{\beta\hat\Phi_{\mathbf{A}_{K+1}}(a^\star)}}{\sum_{a'\in \{ 0,1\}^N}e^{\beta \hat\Phi_{\mathbf{A}_{K+1}}(a')}} \\
&=e^{\beta\big(\frac{1}{2}-\theta\big)N}\frac{e^{\beta\hat\Phi_{\mathbf{A}_K}(a^\star)}}{\sum_{a'\in \{ 0,1\}^N}e^{\beta \hat\Phi_{\mathbf{A}_{K+1}}(a')}},
\end{aligned}
\end{equation}
where we used $\hat{\Phi}_{\mathbf{A}_{K+1}}(a^\star) = \hat{\Phi}_{\mathbf{A}_K}(a^\star) + ({1}/{2} - \theta)N$.

Since $\tilde{a}' = \Pi_{2}^{\mathsf{T}}a'$ and $\Pi_2$ is bijective on $\{0,1\}^N \rightarrow \{0,1\}^N$, we have
\begin{equation}
\begin{aligned}
    \sum_{a'\in \{ 0,1\}^N}e^{\beta \hat{\Phi}_{\mathbf{A}_K}(a')}
    & = \sum_{\tilde{a}' \in \{ 0,1\}^N}e^{\beta \hat{\Phi}_{\mathbf{A}_K}(\tilde{a}')}\\
    & \geq \sum_{a'\in \{ 0,1\}^N}e^{\beta \big(\hat{\Phi}_{\mathbf{A}_{K+1}}(a') +\big( \theta-\frac{1}{2} \big)\|a' \|_1\big)}.
\end{aligned}
\end{equation}
For $a' = a^\star = \mathbb{1}$, we have $\|a'\|_1 = N$. For $a'\neq \mathbb{1}$, the bound $\hat{\Phi}_{\mathbf{A}_K}(\tilde{a}') \geq \hat{\Phi}_{\mathbf{A}_{K+1}}(a') + (\theta - 1/2)\|a'\|_1$ holds, and in particular for $a' = a^\star = \mathbb{1}$ we obtain equality.

The above inequality yields
\begin{equation}
\sum_{a'\in \{ 0,1\}^N}\!\!\!e^{\beta \hat{\Phi}_{\mathbf{A}_K}(a')} \geq e^{\beta(\theta-1/2)N}\!\!\!\sum_{a'\in \{ 0,1\}^N}\!\!\!e^{\beta \hat{\Phi}_{\mathbf{A}_{K+1}}(a')}.
\end{equation}
Rearranging, we obtain
\begin{equation}\label{eq:partition}
\sum_{a'\in \{ 0,1\}^N}\!\!\!e^{\beta \hat{\Phi}_{\mathbf{A}_{K+1}}(a')} \leq e^{\beta(1/2-\theta)N}\!\!\!\sum_{a'\in \{ 0,1\}^N}\!\!\!e^{\beta \hat{\Phi}_{\mathbf{A}_K}(a')}.
\end{equation}
Substituting back, we get
\begin{equation}\label{eq:monotonicity}
\mu_{\mathbf{A}_{K+1}}(a^\star \mid \beta) \geq \frac{e^{\beta\hat\Phi_{\mathbf{A}_K}(a^\star)}}{\sum_{a'\in \{ 0,1\}^N}e^{\beta \hat{\Phi}_{\mathbf{A}_K}(a')}} = \mu_{\mathbf{A}_K}(a^\star \mid \beta).
\end{equation}

\begin{figure}[t!]
    \centering
    \includegraphics[width=0.9\linewidth]{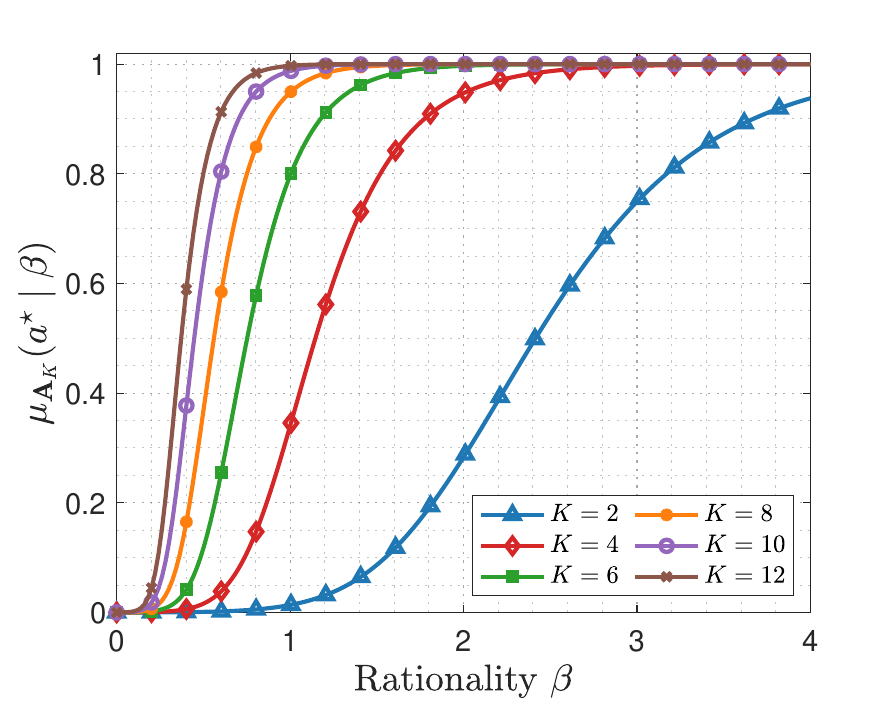}
    \includegraphics[width=0.9\linewidth]{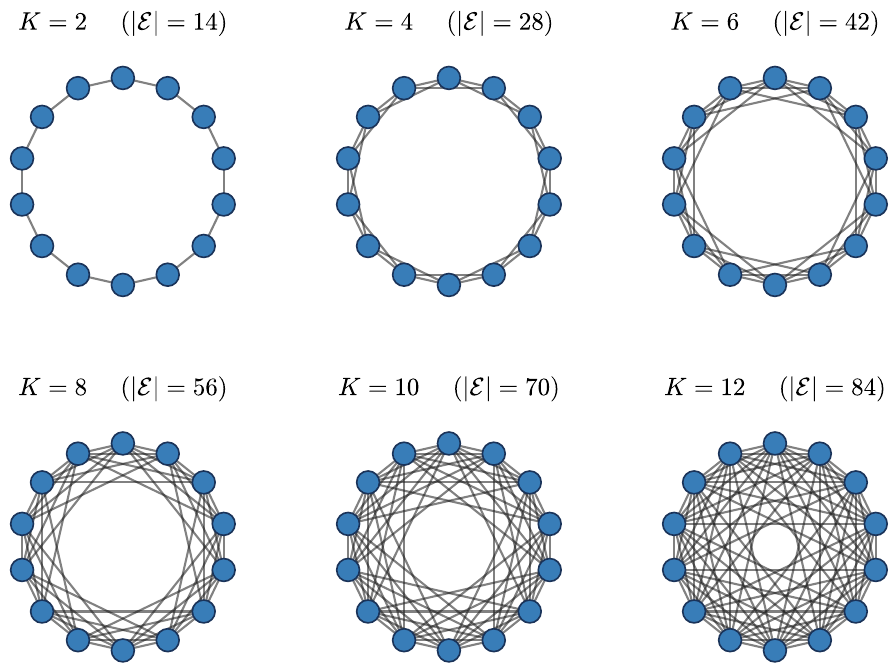}
    \caption{Stationary probability $\mu_{\mathbf{A}}(a^\star \mid \beta)$ for $K$-regular graphs with $N=14$ agents and $\theta=0.3$, as a function of $\beta$ for varying $K$. For any finite $\beta$, the probability of the risk-dominant action profile $a^\star = \mathbb{1}$ is strictly increasing in $K$, while differences vanish as $\beta \to \infty$.}
    \label{fig:monotonicity}
\end{figure}

Suppose $N$ is odd. Then $\mathcal{G}_{K+1}$ does not exist. By construction, from \cref{G_K_new}, there exists an adjacency matrix for a regular graph $\mathcal{G}_{K+2}$ given by $\mathbf{A}_{K+2} = \Pi_3(\mathbf{A}_{K}+\Pi_1+\Pi_2)\Pi_3^\top$. Using a similar inductive procedure as when $N$ is even and defining $\tilde{a} = \Pi_{3}^\top a$, then
\begin{equation}
    a{^\mathsf{T}}\mathbf{A}_{K+2}a \leq  \tilde{a}{^\mathsf{T}}\mathbf{A}_{K}\tilde{a} + 2m.
\end{equation}
The remainder of the argument proceeds identically, yielding $\mu_{\mathbf{A}_K}(a^\star \mid \beta) < \mu_{\mathbf{A}_{K+2}}(a^\star \mid \beta)$.
Therefore, the function $g$ is strictly monotone increasing in $K$.
\end{IEEEproof}
\vspace{5pt}

\begin{remark}\label{rem:strict_inequality}
The inequality \eqref{eq:monotonicity} above is in fact strict. To see this, note that for any $a'$ with $\|a'\|_1 < N$, the bound $(\theta - 1/2)\|a'\|_1 < (\theta - 1/2)N$ when $\theta < 1/2$, which introduces a strict gap in \eqref{eq:partition}. The cases $\theta > 1/2$ (where $a^\star = \mathbb{0}$) and $\theta = 1/2$ (where $a^\star = \mathbb{0} \ \text{or} \ \mathbb{1}$) follow by a analogous arguments.
\end{remark}

\vspace{5pt}

To illustrate the monotonicity results in \cref{thm:interplay} for networked coordination games with $\theta=0.3$, we evaluated $\mu_{\mathbf{A}}(a \mid \beta)$ for $K$-regular graphs with $N=14$ agents as a function $\beta$ and different values of $K$. Figure \ref{fig:monotonicity} shows how for any fixed $\beta<\infty$, the stationary probability of $a^\star =\mathbb{1}$ is strictly increasing in $K$. Also notice that in the limit of $\beta\rightarrow \infty$ of Figure \ref{fig:monotonicity}, the network connectivity does not make a significant difference as far as the stationary probability distribution.

\subsection{Minimum rationality required for coordination}

Recall the definition of $\beta^{\min}_{\mathbf{A}}(\delta)$ in \eqref{Betamin1}. Suppose that $\delta$, the probability that agents fail to coordinate on the risk-dominant equilibrium, is fixed. Then there exists a trade-off between the minimal rationality $\beta^{\min}_{\mathbf{A}}(\delta)$ and the connectivity $K$ since $\mu_{\mathbf{A}_K}(a^\star \mid \beta)$ is increasing in both $\beta$ and $K$. Intuitively, for $\theta \neq {1}/{2}$, a more connected network allows for a smaller $\beta^{\min}_{\mathbf{A}}(\delta)$ in order to guarantee that LLL achieves the same probability of  coordination  on $a^\star$.

\vspace{5pt}

\begin{theorem}\label{thm:beta_bound}
Suppose LLL is performed on a networked coordination game over a connected $K$-regular 
graph with task difficulty $\theta \neq {1}/{2}$. Then,
\begin{multline}\label{Boundforbeta}
\beta_{\mathbf{A}_K}^{\min}(\delta) \leq \left|\left(\frac{1}{2}-\theta\right)K\right|^{-1} 
\times \\ \left(\frac{\log(1-\delta)}{N}-\log\left(1-e^{\frac{\log(1-\delta)}{N}}\right)\right).
\end{multline}
\end{theorem}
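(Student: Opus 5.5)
The plan is to lower-bound $\mu_{\mathbf{A}_K}(a^\star\mid\beta)$ by an explicit function of $\beta$. We then find the smallest $\beta$ at which this lower bound reaches $1-\delta$. Any such $\beta$ belongs to the set in \eqref{Betamin1}, so $\beta^{\min}_{\mathbf{A}_K}(\delta)$ is at most that value. (By \cref{thm:interplay}, $g$ is strictly increasing and continuous in $\beta$, so the minimum in \eqref{Betamin1} is attained.) Throughout, set $c \Equaldef |(\tfrac12-\theta)K|>0$, which is positive since $\theta\neq 1/2$, and work with the shifted potential $\hat\Phi_{\mathbf{A}_K}$ of \eqref{PotentialFunction}. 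The Gibbs measure is invariant under the additive constant that was dropped.

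\emph{Case $\theta<1/2$.} Here $a^\star=\mathbb{1}$ and $\hat\Phi_{\mathbf{A}_K}(\mathbb{1})=(\tfrac12-\theta)NK=cN$. For any $a$ with $\|a\|_1=m$, \cref{lem:bound_quadratic_form} gives $\hat\Phi_{\mathbf{A}_K}(a)\le \tfrac12 mK-K\theta m = cm$. Grouping the partition function by $m$ and using the binomial theorem yields
\begin{equation*}
\sum_{a\in\{0,1\}^N} e^{\beta\hat\Phi_{\mathbf{A}_K}(a)} \le \sum_{m=0}^N \binom{N}{m} e^{\beta c m} = \bigl(1+e^{\beta c}\bigr)^N ,
\end{equation*}
and hence $\mu_{\mathbf{A}_K}(\mathbb{1}\mid\beta)\ge e^{\beta cN}(1+e^{\beta c})^{-N} = (1+e^{-\beta c})^{-N}$.

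\emph{Case $\theta>1/2$.} Here $a^\star=\mathbb{0}$ and $\hat\Phi_{\mathbf{A}_K}(\mathbb{0})=0$. The same lemma gives $\hat\Phi_{\mathbf{A}_K}(a)\le(\tfrac12-\theta)Km=-cm$. Therefore the partition function is at most $(1+e^{-\beta c})^N$, and we obtain the identical lower bound $\mu_{\mathbf{A}_K}(\mathbb{0}\mid\beta)\ge(1+e^{-\beta c})^{-N}$. Thus in both cases it suffices to solve $(1+e^{-\beta c})^{-N}\ge 1-\delta$.

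Write $q\Equaldef e^{\log(1-\delta)/N}=(1-\delta)^{1/N}\in(0,1)$. The inequality is equivalent to $1+e^{-\beta c}\le 1/q$, that is, $e^{-\beta c}\le (1-q)/q$. This in turn is equivalent to
\begin{equation*}
\beta \ge \frac{1}{c}\bigl(\log q-\log(1-q)\bigr),
\end{equation*}
which is exactly the right-hand side of \eqref{Boundforbeta}. Any $\beta$ at least this large satisfies $\mu_{\mathbf{A}_K}(a^\star\mid\beta)\ge 1-\delta$, so the bound on $\beta^{\min}_{\mathbf{A}_K}(\delta)$ follows. If the right-hand side is negative (small $N$, large $\delta$), the constraint $\beta\ge 0$ makes the statement trivially consistent.

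The only real step is the uniform bound $\hat\Phi_{\mathbf{A}_K}(a)\le \pm c\|a\|_1$, which turns the partition function into a binomial sum. This is exactly what \cref{lem:bound_quadratic_form} provides, so I expect no serious obstacle. The main care needed is handling the sign of $\tfrac12-\theta$ correctly in the two cases, so that both collapse to the same expression with $|\cdot|$.
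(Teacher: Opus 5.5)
Your argument is correct and is essentially the paper's proof. You bound the potential by a linear function of $\|a\|_1$ via \cref{lem:bound_quadratic_form} (using $\beta\ge 0$), collapse the partition function with the binomial theorem into the sigmoid-power bound \eqref{LB}, and invert it. Your separate treatment of $\theta>1/2$ (numerator $e^{0}=1$, exponent $-cm$) is in fact more careful than the paper, whose display \eqref{LB} is written only for $a^\star=\mathbb{1}$ and leaves the absolute value in \eqref{Boundforbeta} implicit.

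One correction to your closing remark. Suppose the right-hand side of \eqref{Boundforbeta} is negative, i.e.\ $(1-\delta)^{1/N}<1/2$, equivalently $1-\delta<2^{-N}$. Then the stated inequality is not ``trivially consistent''; it is false.

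\begin{itemize}
\item If you restrict to $\beta\ge 0$: since $g(0,K)=2^{-N}>1-\delta$, you get $\beta^{\min}_{\mathbf{A}_K}(\delta)=0$, which exceeds the negative bound.
\item If you admit negative $\beta$ instead: the step $e^{\beta\hat\Phi_{\mathbf{A}_K}(a)}\le e^{\beta cm}$ reverses, giving $g(\beta,K)\le(1+e^{-\beta c})^{-N}$. This is strict for $K\ge 1$, because any $a$ with $\|a\|_1=1$ has $a^{\mathsf T}\mathbf{A}_K a=0<K$. So $\beta^{\min}_{\mathbf{A}_K}(\delta)$ would strictly exceed the bound.
\end{itemize}

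What your argument actually proves, and what is true in general, is that $\beta^{\min}_{\mathbf{A}_K}(\delta)$ is at most the positive part of the right-hand side of \eqref{Boundforbeta}. Equivalently, the theorem holds as stated precisely when $\delta\le 1-2^{-N}$. That is the only regime of interest anyway, since $\beta=0$ already achieves coordination probability $2^{-N}$. The paper's proof is silent on this degenerate case as well.
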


\vspace{5pt}

\begin{IEEEproof}
First, notice that
\begin{equation}\label{mu_K}
\mu_{\mathbf{A}_K}(a^\star \mid \beta) = \frac{e^{\beta\big( \frac{1}{2}a^{\star\mathsf{T}}\mathbf{A}_K a^\star -K\theta\mathbb{1}^{\mathsf{T}}a^\star \big)}}{\sum_{a'\in \{ 0,1\}^N}e^{\beta \big( \frac{1}{2}a'^{\mathsf{T}}\mathbf{A}_K a'-K\theta\mathbb{1}^\mathsf{T}a' \big) }}.
\end{equation}
From \cref{lem:bound_quadratic_form} and the fact that $\beta \geq 0$, we have
\begin{equation}
    e^{\beta \big(-K\theta\mathbb{1}^\mathsf{T}a' + \frac{1}{2}a'^{\mathsf{T}}\mathbf{A}_K a' \big) } \leq e^{\beta \big(-K\theta m + \frac{1}{2}{mK} \big) },
\end{equation}
where $\|a'\|_1 = m$. Since $\exp(\cdot)$ is a strictly increasing function, for all $a' \in \{0,1\}^N$, we can group terms by their Hamming weight to obtain
\begin{equation}\label{Binomial_bound}
    \sum_{a'\in \{ 0,1\}^N}\!\!\!\!\!\!\!\!e^{\beta(-K\theta\mathbb{1}_N^\mathsf{T}a' + \frac{1}{2}a'^{\mathsf{T}}\mathbf{A}_K a') } \leq \sum_{m=0}^{N} \binom{N}{m}e^{\beta (-K\theta m + \frac{1}{2}mK) }.
\end{equation}
Applying the Binomial Theorem to the right-hand side of \eqref{Binomial_bound}, we obtain
\begin{equation}
\sum_{m=0}^{N} \binom{N}{m}e^{\beta K(\frac{1}{2}-\theta) m} = \Big(1+e^{\beta K(\frac{1}{2}-\theta)}\Big)^N.
\end{equation}
Therefore, a lower bound on \eqref{mu_K} is given by
\begin{equation}\label{LB}
\begin{aligned}
     \mu_{\mathbf{A}_K}(a^\star \mid \beta)
     &\geq \frac{e^{\beta K(\frac{1}{2}-\theta)N}}{\big(1+e^{\beta K(\frac{1}{2}-\theta)}\big)^N} \\
     &= \left(\frac{1}{1+e^{-\beta K(\frac{1}{2}-\theta)}}\right)^{\!N}.
\end{aligned}
\end{equation}
    The proof follows immediately from setting the right-hand side of \eqref{LB} equal to $1-\delta$ and solving for $\beta$.
\end{IEEEproof}

\vspace{5pt}

\begin{remark}
     Note that the right-hand side of \eqref{LB} is also an increasing function of $\beta$, which can be verified by taking its derivative with respect to $\beta$. Moreover, this lower bound matches the true value of $\mu_{\mathbf{A}_K}(a^\star \mid \beta)$ when $\beta = 0$ and $\beta \rightarrow \infty$, so the bound in \eqref{LB} is asymptotically tight.  
\end{remark}

\vspace{5pt}

\begin{corollary}\label{cor:beta_min}
Suppose LLL is performed on a networked coordination game over a connected $K$-regular graph with task difficulty $\theta \neq {1}/{2}$. Then,
\begin{equation}
\beta_{\mathbf{A}_K}^{\min}(\delta) \propto \frac{1}{K}.
\end{equation}
\end{corollary}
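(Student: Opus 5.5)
The plan is to read the corollary as a statement about the explicit bound of \cref{thm:beta_bound}, and to get the exact $1/K$ scaling from the fact that the lower bound \eqref{LB} depends on $\beta$ and $K$ only through the product $\beta K$. First, fix $\delta\in(0,1)$, $N$ and $\theta\neq 1/2$, and introduce the rescaled variable $x \Equaldef \beta K|\tfrac{1}{2}-\theta|$. For $\theta<1/2$ we have $a^\star=\mathbb{1}$, and \eqref{LB} reads
\begin{equation*}
\mu_{\mathbf{A}_K}(a^\star\mid\beta)\ \geq\ h(x)\Equaldef \left(\frac{1}{1+e^{-x}}\right)^{N}.
\end{equation*}
For $\theta>1/2$ we have $a^\star=\mathbb{0}$, and the same computation gives the same $h$ with $x=\beta K(\theta-\tfrac12)$. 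The reason is that in that case $\hat{\Phi}_{\mathbf{A}_K}(\mathbb{0})=0$, while \cref{lem:bound_quadratic_form} gives $\hat{\Phi}_{\mathbf{A}_K}(a')\le -K(\theta-\tfrac12)m$. The binomial theorem then yields the partition-function bound $(1+e^{-x})^N$ directly.

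Second, note that $h$ is continuous and strictly increasing, with $h(0)=2^{-N}$ and $h(x)\to1$ as $x\to\infty$. Hence, for $1-\delta>2^{-N}$, the equation $h(x)=1-\delta$ has a unique root $x^\star(\delta,N)$ that does not depend on $K$ or $\theta$. Solving it gives exactly the bracketed quantity in \eqref{Boundforbeta}:
\begin{equation*}
x^\star=\tfrac{1}{N}\log(1-\delta)-\log\!\big(1-e^{\log(1-\delta)/N}\big).
\end{equation*}
(In the degenerate case $1-\delta\le 2^{-N}$, $\beta^{\min}=0$ and the statement is trivial.) Every $\beta\ge x^\star/(K|\tfrac12-\theta|)$ satisfies $\mu_{\mathbf{A}_K}(a^\star\mid\beta)\ge 1-\delta$. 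By the strict monotonicity in $\beta$ from \cref{thm:interplay}, the set in \eqref{Betamin1} is an up-ray, so its minimum exists and
\begin{equation*}
\beta^{\min}_{\mathbf{A}_K}(\delta)\le \frac{C(\delta,N,\theta)}{K},\qquad C(\delta,N,\theta)\Equaldef \frac{x^\star(\delta,N)}{|\tfrac12-\theta|}.
\end{equation*}
The constant $C$ is independent of $K$ and of the particular $K$-regular graph.

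Third, I would state precisely what $\propto 1/K$ means here. The guaranteed sufficient rationality, i.e.\ the right-hand side of \eqref{Boundforbeta}, equals $C/K$ exactly. Equivalently, the rationality level certified by the bound scales inversely with connectivity: doubling $K$ halves the certified $\beta^{\min}$.

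The main obstacle is that the true $\beta^{\min}_{\mathbf{A}_K}$ need not be exactly proportional to $1/K$, because the Gibbs measure on a specific graph depends on more than $\beta K$. I would therefore present the corollary as a scaling statement for the upper bound, and if possible supplement it with a matching lower-order check. Near $\beta\to\infty$, $\mu$ and $h$ agree to leading order: the dominant competitors are single-agent deviations, which cost exactly $K|\tfrac12-\theta|$. So for small $\delta$ the true $\beta^{\min}$ also behaves like $\Theta(1/K)$, with $K$-independent constants. If this refinement does not go through cleanly, I would keep the corollary as the exact $1/K$ proportionality of the certified bound.
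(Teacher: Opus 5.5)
Your main argument is correct and is the paper's own. The paper's proof is just ``follows from \Cref{thm:beta_bound} and \eqref{Betamin1}'', i.e.\ it reads $\propto 1/K$ as the statement that the certified bound \eqref{Boundforbeta} equals $C(\delta,N,\theta)/K$, and that is exactly what you establish. Two of your additions go slightly beyond what the paper writes: the explicit $\theta>\tfrac12$ case (the paper's \eqref{LB} is written with the numerator $e^{\beta\hat\Phi_{\mathbf{A}_K}(\mathbb{1})}$, so it really covers only $a^\star=\mathbb{1}$), and the use of monotonicity in $\beta$ to ensure the minimum in \eqref{Betamin1} exists.

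The heuristic in your optional last paragraph is wrong, however. A single-agent deviation does not cost $K|\tfrac12-\theta|$. For $\theta<\tfrac12$ on a $K$-regular graph, with $e_i$ the $i$-th standard basis vector, $\hat\Phi_{\mathbf{A}_K}(\mathbb{1})-\hat\Phi_{\mathbf{A}_K}(\mathbb{1}-e_i)=K(1-\theta)$. More generally, zeroing a set $S$ costs $K(\tfrac12-\theta)|S|+\tfrac12|\partial S|$, where $|\partial S|$ is the number of edges leaving $S$. The bound $a^\top\mathbf{A}_K a\le mK$ throws away this cut term. Consequently $1-\mu$ and $1-h$ decay at different exponential rates, and \eqref{LB} is not tight to leading order as $\beta\to\infty$.

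Your $\Theta(1/K)$ conclusion for the true $\beta^{\min}$ is nevertheless correct, and it follows more easily. Keep only $\mathbb{1}$ and the $N$ single flips in the partition function. This gives $\mu_{\mathbf{A}_K}(\mathbb{1}\mid\beta)\le\big(1+Ne^{-\beta K(1-\theta)}\big)^{-1}$, hence $\beta^{\min}_{\mathbf{A}_K}(\delta)\ge \log\frac{N(1-\delta)}{\delta}\big/\big(K(1-\theta)\big)$. Combined with your upper bound, this sandwiches $\beta^{\min}_{\mathbf{A}_K}(\delta)$ between two $K$-independent multiples of $1/K$ for every admissible $\delta$, not only small ones.
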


\vspace{5pt}

\begin{IEEEproof}
The proof follows from \Cref{thm:beta_bound} and the definition in \eqref{Betamin1}.
\end{IEEEproof}

\vspace{5pt}

The consequence of \Cref{thm:beta_bound} and \Cref{cor:beta_min}  is that when agents are involved in the networked coordination game, {\it more connected agents can afford to be less rational then less connected ones}. The upper bound and the true value (obtained numerically) of $\beta^{\mathrm{min}}_{\mathbf{A}_K}(\delta)$ are shown in \Cref{fig:betamin} for different values of $K$.
\begin{figure}
    \centering
\includegraphics[width=1\linewidth]{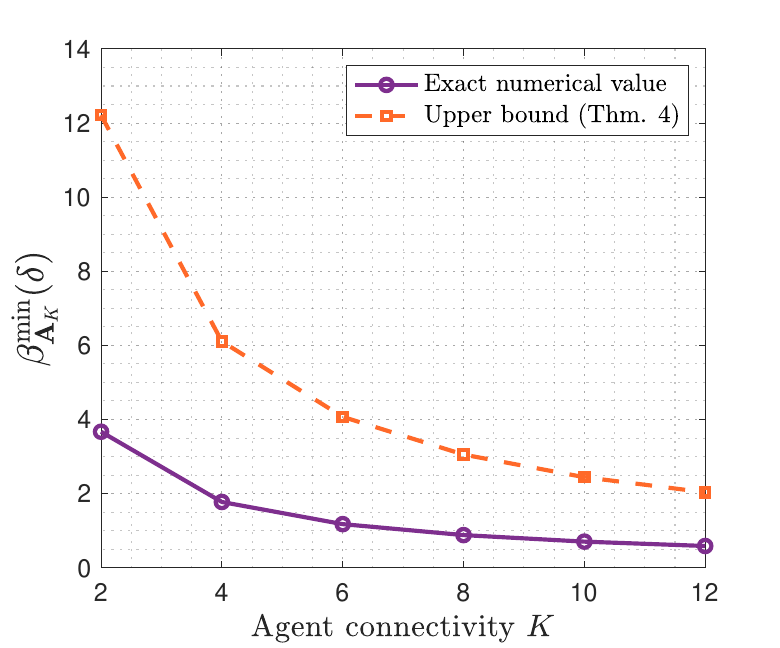}
    \caption{Upper bound and numerical value of $\beta^{\min}_{\mathbf{A}_K}(\delta)$ versus $K$ (\Cref{thm:beta_bound,cor:beta_min}). Higher connectivity reduces the rationality required for coordination.}
    \label{fig:betamin}
\end{figure}

\vspace{5pt}

\section{Irregular Graphs}\label{sec:irregular}
Having characterized the effect of connectivity on the stationary 
probability of jointly selecting the risk-dominant equilibrium $a^\star$ for regular graphs, we extend the analysis to irregular graphs. We show that the coordination probability remains monotone in the number of edges in this more general setting.

\subsection{Inductive improvement by edge augmentation}
An important measure of graph connectivity is the number of edges. As 
shown in the next theorem, the stationary probability of LLL learning to play the optimal action profile grows monotonically with the number of edges.

\vspace{5pt}

\begin{definition}\label{def:successor}
Graph $\mathcal{G}_s = ([N], \mathcal{E} \cup \{(i,j)\})$ is called a \emph{successor} 
of graph $\mathcal{G} = ([N], \mathcal{E})$ if $(i,j) \notin \mathcal{E}$.
\end{definition}

\vspace{5pt}

\begin{theorem}\label{inductiveimprove}
Let $\mathcal{G}_s$ be a successor of $\mathcal{G}$. Then, for any $\beta > 0$,
\begin{equation}
    \mu_{\mathbf{A}_s}(a^\star \mid \beta) > \mu_{\mathbf{A}}(a^\star \mid \beta),
\end{equation}
where $\mathbf{A}$ and $\mathbf{A}_s$ are the adjacency matrices of $\mathcal{G}$ and 
$\mathcal{G}_s$, respectively.
\end{theorem}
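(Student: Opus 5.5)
The plan is to use the fact that adding the edge $(i,j)$ changes the potential by a single local term, and then compare the two Gibbs measures directly. First I will rewrite the full potential \eqref{potential_function}, including its constant term, as a sum over undirected edges. Since $\tfrac12 a^\top\mathbf{A}a=\sum_{\{k,l\}\in\mathcal{E}}a_ka_l$, $\mathbb{1}^\top\mathbf{A}a=\sum_{\{k,l\}\in\mathcal{E}}(a_k+a_l)$ and $\tfrac{\theta}{2}\mathbb{1}^\top\mathbf{A}\mathbb{1}=\theta|\mathcal{E}|$, we get
\begin{equation*}
\Phi_{\mathbf{A}}(a)=\sum_{\{k,l\}\in\mathcal{E}}\psi(a_k,a_l),\qquad \psi(x,y)\Equaldef xy-\theta x-\theta y+\theta .
\end{equation*}
The edge term takes the values $\psi(1,1)=1-\theta$, $\psi(0,0)=\theta$ and $\psi(1,0)=\psi(0,1)=0$. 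Therefore, for the successor graph,
\begin{equation*}
\Phi_{\mathbf{A}_s}(a)=\Phi_{\mathbf{A}}(a)+\psi(a_i,a_j)\quad\text{for every } a .
\end{equation*}
Because $\mathcal{G}_s$ is still connected, \cref{0-1property} and \eqref{eq:optimal} show that the maximizer $a^\star$ is the same profile for both graphs; it depends only on the sign of $\theta-\tfrac12$.

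Next I will form the ratio of the two stationary probabilities from \eqref{eq:gibbs}. Dividing the numerator and the denominator of $\mu_{\mathbf{A}_s}$ by $\sum_{a'}e^{\beta\Phi_{\mathbf{A}}(a')}$ gives
\begin{equation*}
\frac{\mu_{\mathbf{A}_s}(a^\star\mid\beta)}{\mu_{\mathbf{A}}(a^\star\mid\beta)}=\frac{e^{\beta\psi(a^\star_i,a^\star_j)}}{\sum_{a'}\mu_{\mathbf{A}}(a'\mid\beta)\,e^{\beta\psi(a'_i,a'_j)}} .
\end{equation*}
The claim is therefore equivalent to $\mathbb{E}_{\mu_{\mathbf{A}}}\big[e^{\beta(\psi(A_i,A_j)-\psi^\star)}\big]<1$, where $\psi^\star\Equaldef\psi(a^\star_i,a^\star_j)$. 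This is a weighted average of $e^{\beta(\psi-\psi^\star)}$ under a full-support distribution. It suffices to show that $\psi^\star$ is the maximum of $\psi$ over $\{0,1\}^2$ and that at least one pair attains a strictly smaller value.

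The remaining step is a case check on $\theta$; it is the only point that needs care, and I expect it to go through for all $\theta$.
\begin{itemize}
\item If $\theta<\tfrac12$, then $a^\star=\mathbb{1}$ and $\psi^\star=1-\theta$. This exceeds both $\theta$ and $0$.
\item If $\theta>\tfrac12$, then $a^\star=\mathbb{0}$ and $\psi^\star=\theta$. This exceeds both $1-\theta$ and $0$.
\item If $\theta=\tfrac12$, either choice of $a^\star$ gives $\psi^\star=\tfrac12>0$.
\end{itemize}
In every case the mixed pair $(1,0)$ gives $\psi=0<\psi^\star$. The profiles with $a_i\neq a_j$ have positive $\mu_{\mathbf{A}}$-probability because the Gibbs measure has full support, and $\beta>0$. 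Hence the average is strictly less than $1$, and $\mu_{\mathbf{A}_s}(a^\star\mid\beta)>\mu_{\mathbf{A}}(a^\star\mid\beta)$ follows.
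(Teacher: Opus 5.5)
Your proof is correct and follows essentially the same route as the paper's. Both arguments reduce the claim to the fact that the increment $\Phi_{\mathbf{A}_s}(a)-\Phi_{\mathbf{A}}(a)$ created by the new edge is maximized at $a^\star$, with strict inequality at some profile: the paper rescales by this maximal increment and compares the partition functions term by term, while you write the ratio of probabilities as $\mathbb{E}_{\mu_{\mathbf{A}}}[e^{\beta(\psi-\psi^\star)}]^{-1}$. Your edge-wise bookkeeping, $\psi\in\{\theta,0,1-\theta\}$, is more accurate than the paper's assertion that $\Phi_s(a)=\Phi(a)$ whenever $a_i=0$ or $a_j=0$, which fails under either convention for the constant term; only $\psi\le\psi^\star$, with strictness at $a_i\neq a_j$, is needed, and your case check establishes this for every $\theta$, including the cases the paper omits.
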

\vspace{5pt}
\begin{IEEEproof}
    Let $\mathbf{A}$ and $\mathbf{A}_s$ denote the adjacency matrices of $\mathcal{G}$ and $\mathcal{G}_s$, respectively. We have
    \begin{equation}
        \mathbf{A}_s - \mathbf{A} = \mathbb{e}_i\mathbb{e}_j^\mathsf{T} + \mathbb{e}_j\mathbb{e}_i^\mathsf{T},
    \end{equation}
    where $\mathbb{e}_i$ and $\mathbb{e}_j$ are the $i$-th and the $j$-th standard basis vectors in $\mathbb{R}^N$.

    We prove the theorem for the case $\theta < {1}/{2}$, so that $a^\star = \mathbb{1}$. The case $\theta > {1}/{2}$ is symmetric, and $\theta = {1}/{2}$ is simpler, thus are omitted here.
    For $a^\star = \mathbb{1}$, the potential values on $\mathcal{G}_s$ and $\mathcal{G}$ satisfy
    \begin{equation}
        \Phi_s(a^\star) - \Phi(a^\star) = \Big(\frac{1}{2} - \theta\Big) \mathbb{1}^\mathsf{T}(e_ie_j^\mathsf{T}+e_je_i^\mathsf{T})\mathbb{1} = 2\Big(\frac{1}{2} - \theta\Big).
    \end{equation}
    However, for all $a \in \mathcal{A}$ such that $a_i = 0$ or $a_j = 0$, we have
    \begin{equation}
        \Phi_s(a) = \Phi(a).
    \end{equation}
    The set $\{a\in \mathcal{A} \mid a_i = 0 \text{ or } a_j = 0\}$ has cardinality $3\cdot 2^{N-2}$ and is never empty. Then, we can compare $\mu_{\mathcal{G}_s}(a^\star \mid \beta)$ and $\mu_{\mathcal{G}}(a^\star \mid \beta)$ as follows
    \begin{equation} \label{eq:denominator}
        \begin{aligned}
\mu_\mathcal{G}(a^\star \mid \beta)
&=
\frac{e^{\beta\Phi(a^\star)}e^{\beta(1-2\theta)}}{\sum_{a\in \mathcal{A}}e^{\beta \Phi(a)}e^{\beta(1-2\theta)}} \\
&<
\frac{e^{\beta\Phi_s(a^\star)}}{\sum_{a\in \mathcal{A}}e^{\beta \Phi_s(a)}} = \mu_{\mathcal{G}_s}(a^\star \mid \beta).
\end{aligned}
    \end{equation}
    The inequality is strict since there exists at least one $a$ with $a_i = 0$ or $a_j = 0$ such that
    \begin{equation}
        e^{\beta\Phi(a)}e^{\beta(1-2\theta)} > e^{\beta\Phi_s(a)},
    \end{equation}
    which increases the denominator on the left-hand side relative to the right-hand side of \eqref{eq:denominator}, while both expressions share the same numerator $e^{\beta\Phi_s(a^\star)}$.
\end{IEEEproof}
\vspace{5pt}

\cref{inductiveimprove} states that an increase in edge connectivity reduces the value of $\beta^{\min}_{\mathbf{A}}(\delta)$. This can be visualized for a system with $N=14$ agents in \cref{fig:monotonicity_irregular} where edges are randomly placed between two previously disconnected agents. That observation leads naturally to the question of how to distribute edges among a set of agents such that we maximize the stationary probability of coordination.

\begin{figure}
    \centering
    \includegraphics[width=1\linewidth]{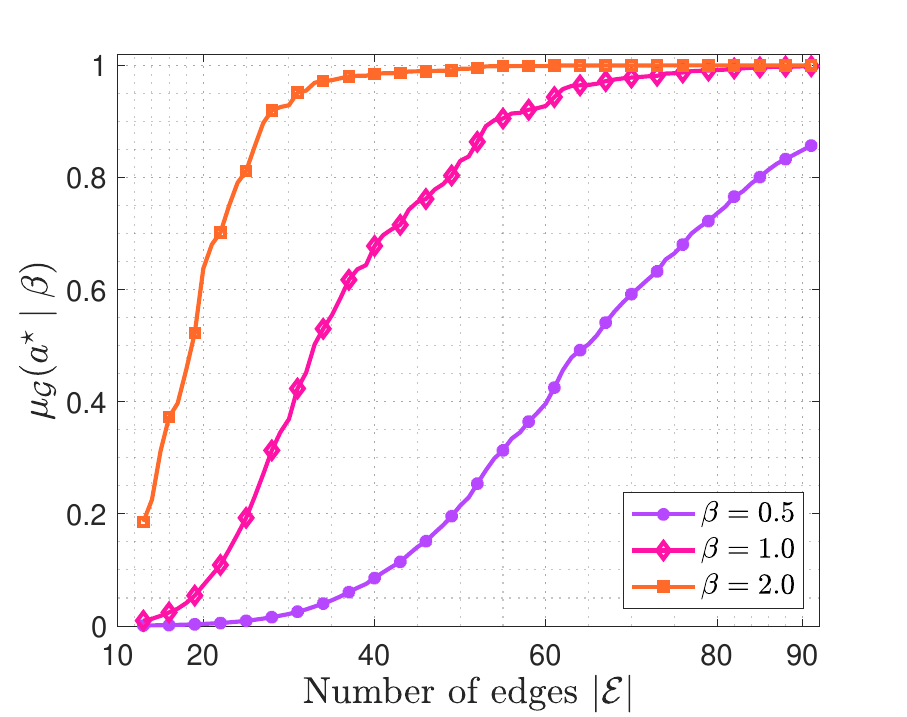}
    \caption{Coordination probability $\mu_{\mathcal{G}}(a^\star \mid \beta)$ versus 
number of edges $|\mathcal{E}|$ for $N=14$ agents and a coordination game with $\theta=0.3$. 
Adding edges monotonically increases the probability of coordination.}
    \label{fig:monotonicity_irregular}
\end{figure}

\section{Optimal Network Design}\label{sec:gaussian}

Consider the problem of constructing a network with a fixed number of 
edges $|\mathcal{E}|$ among $N$ boundedly rational agents using LLL with a fixed parameter $\beta$, where the objective is to maximize the 
stationary probability of jointly selecting the risk-dominant action 
profile $a^\star$. This problem is in general NP-hard due to the 
combinatorial explosion in the number of possible graphs and the non-convexity of the objective function. Moreover, evaluating the objective requires computing a sum of cardinality $2^N$, which is impractical even for networks of moderate size. Nevertheless, in this section we characterize the role of regular graphs in three regimes: (1) small rationality; (2) moderate rationality; and (3) asymptotically large networks, $N \to \infty$.

\subsection{Ising stag hunt game reparameterization}

We reparametrize our game using Rademacher variables $s_i = 2a_i - 1 \in \{-1,1\}$, obtaining an Ising game \cite{leonidov2024ising} with the following equivalent potential function,
\begin{equation}
\tilde{\Phi}_{\mathbf{A}}(s)
= \frac{1}{8}\, s^\mathsf{T} \mathbf{A} s
+ \left(\frac{1}{4} - \frac{\theta}{2}\right)\, \mathbb{1}^\mathsf{T} \mathbf{A} s
+ \frac{1}{8}\, \mathbb{1}^\mathsf{T} \mathbf{A} \mathbb{1},
\end{equation}
where $s \in \{-1,1\}^N.$ This reparameterization symmetrizes the state space and simplifies the  analysis.

Disregarding the constant term, we write
\begin{equation}
\tilde{\Phi}_{\mathbf{A}}(s)
= \frac{1}{8}\, s^\mathsf{T} \mathbf{A} s
+ \left(\frac{1}{4} - \frac{\theta}{2}\right)\, \mathbb{1}^\mathsf{T} \mathbf{A} s,
\end{equation}
which leads to the following stationary probability for an optimal strategy profile $s^\star \in \{ -\mathbb{1},\mathbb{1}\}$,
\begin{equation}\label{eq:optimization}
\tilde{\mu}_{\mathbf{A}}(s^\star \mid \beta) \Equaldef
\frac{e^{\beta \tilde{\Phi}_{\mathbf{A}}(s^\star)}}
{\sum_{s' \in \{-1,1\}^N} e^{\beta \tilde{\Phi}_{\mathbf{A}}(s')}} .
\end{equation}

We are interested in solving the following optimization 
problem
\begin{equation}\label{eq:opt_problem}
    \mathbf{A}^\star \in \operatorname*{arg\,max}_{\mathbf{A} \,\in\, 
    \mathcal{G}(N,|\mathcal{E}|)} 
    \tilde{\mu}_{\mathbf{A}}(s^\star \mid \beta),
\end{equation}
where $\mathcal{G}(N,|\mathcal{E}|)$ denotes the set of all connected 
simple graphs on $N$ nodes with  $|\mathcal{E}|$ edges.

\subsection{Partition function as a moment generating function}

In statistical physics, the denominator of the Gibbs distribution is known as the \textit{partition function} \cite{mezard2009information}. The key tool for optimizing over graph structures is the observation that the partition function can be expressed in terms of a moment generating function (MGF).  The partition function in the Rademacher coordinates is
\begin{equation}\label{eq:Z_mgf}
Z_\mathbf{A}(\beta)
\Equaldef \sum_{s\in \{-1,1\}^N} e^{\beta\tilde{\Phi}_{\mathbf{A}}(s)}
= 2^N\, \mathbb{E}_S\!\left[e^{\beta\tilde{\Phi}_{\mathbf{A}}(S)}\right],
\end{equation}
where $S$ is a uniformly distributed random vector taking values on the set $\{-1,1\}^N$, i.e.,
\begin{equation}
\mathbb{P}(S=s) = \frac{1}{2^N}, \ \ s\in \{-1,1\}^N.
\end{equation}
The expectation on the right-hand side is the MGF of $\tilde{\Phi}_{\mathbf{A}}(S)$ evaluated at $\beta$.

We first observe that the numerator of the stationary distribution at the optimal action profile depends only on the number of edges and it is independent of graph's degree distribution.

\vspace{5pt}

\begin{lemma}\label{lem:numerator_invariance}
Let $\mathcal{G}$ be a simple undirected graph with $|\mathcal{E}|$ edges.
In the Rademacher parametrization, the potential at $s^\star = \mathbb{1}$ and at $s^\star = -\mathbb{1}$  are
\begin{equation}
\tilde{\Phi}_{\mathbf{A}}(\mathbb{1}) = \Big(\frac{3}{4} - \theta\Big)|\mathcal{E}|,
\qquad
\tilde{\Phi}_{\mathbf{A}}(-\mathbb{1}) = \Big(\theta-\frac{1}{4} \Big)|\mathcal{E}|,
\end{equation}
Therefore, $\tilde{\Phi}_{\mathbf{A}}(s^\star)$ depends on $\mathbf{A}$ only through $|\mathcal{E}|$.
\end{lemma}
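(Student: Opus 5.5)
The plan is to evaluate the reduced Ising potential $\tilde{\Phi}_{\mathbf{A}}(s) = \frac{1}{8} s^\mathsf{T}\mathbf{A}s + \big(\frac{1}{4}-\frac{\theta}{2}\big)\mathbb{1}^\mathsf{T}\mathbf{A}s$ (the version with the constant term dropped) directly at the two consensus profiles $s=\pm\mathbb{1}$. Both quadratic and linear parts then reduce to the single scalar $\mathbb{1}^\mathsf{T}\mathbf{A}\mathbb{1}$.

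The first step is the handshaking identity. Since $\mathcal{G}$ is simple and undirected, $\mathbf{A}$ is symmetric with zero diagonal and exactly two unit entries per edge. Hence $\mathbb{1}^\mathsf{T}\mathbf{A}\mathbb{1} = \sum_{i} d_i = 2|\mathcal{E}|$.

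Next, substitute $s=\mathbb{1}$. The quadratic term gives $\frac{1}{8}\cdot 2|\mathcal{E}| = \frac{1}{4}|\mathcal{E}|$. The linear term gives $\big(\frac{1}{4}-\frac{\theta}{2}\big)\cdot 2|\mathcal{E}| = \big(\frac{1}{2}-\theta\big)|\mathcal{E}|$. Adding them yields $\big(\frac{3}{4}-\theta\big)|\mathcal{E}|$.

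Then substitute $s=-\mathbb{1}$. The quadratic form is invariant under $s\mapsto -s$, so it again equals $\frac{1}{4}|\mathcal{E}|$. The linear term flips sign to $-\big(\frac{1}{2}-\theta\big)|\mathcal{E}|$. The total is $\frac{1}{4}|\mathcal{E}| - \frac{1}{2}|\mathcal{E}| + \theta|\mathcal{E}| = \big(\theta-\frac{1}{4}\big)|\mathcal{E}|$.

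Both values involve $\mathbf{A}$ only through $|\mathcal{E}|$. Since $s^\star\in\{-\mathbb{1},\mathbb{1}\}$, the claimed invariance of $\tilde{\Phi}_{\mathbf{A}}(s^\star)$ follows.

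There is no real obstacle in this argument. The only point needing care is consistency: the statement refers to the reduced potential without the constant $\frac{1}{8}\mathbb{1}^\mathsf{T}\mathbf{A}\mathbb{1}$, and the values computed above are for that reduced form. I would add a remark that restoring the constant adds $\frac{1}{4}|\mathcal{E}|$ to both values, so the dependence on $\mathbf{A}$ only through $|\mathcal{E}|$ is unchanged.
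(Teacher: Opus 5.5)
Your proposal is correct and is exactly the direct computation the paper has in mind (its proof is the one line ``follows from direct computation''). Using $\mathbb{1}^\mathsf{T}\mathbf{A}\mathbb{1}=2|\mathcal{E}|$, the reduced potential evaluates to $(\frac{3}{4}-\theta)|\mathcal{E}|$ at $s=\mathbb{1}$ and $(\theta-\frac{1}{4})|\mathcal{E}|$ at $s=-\mathbb{1}$. Your remark about the dropped constant $\frac{1}{8}\mathbb{1}^\mathsf{T}\mathbf{A}\mathbb{1}=\frac{1}{4}|\mathcal{E}|$ is also correct: it shifts both values by the same amount, so the dependence on $\mathbf{A}$ only through $|\mathcal{E}|$ is unaffected.
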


\vspace{5pt}

\begin{IEEEproof}
The proof follows from direct computation.
\end{IEEEproof}

\vspace{5pt}

Since $e^{\beta\tilde{\Phi}_{\mathbf{A}}(s^\star)}$ is constant for all graphs with the same number of edges $|\mathcal{E}|$, maximizing the stationary probability of $s^\star$ given by
\begin{equation}\label{eq:mu_over_Z}
\tilde{\mu}_{\mathbf{A}}(s^\star \mid \beta) = \frac{e^{\beta\tilde{\Phi}_{\mathbf{A}}(s^\star)}}{Z_\mathbf{A}(\beta)}
\end{equation}
is equivalent to minimizing $Z_\mathbf{A}(\beta)$, or equivalently, minimizing $\mathbb{E}_S[e^{\beta\tilde{\Phi}_{\mathbf{A}}(S)}]$.
This equivalence holds for all values of $\beta$.

\subsection{Optimality of regular graphs for small rationality}

We first consider case when the agent's rationality $\beta$ is small.  Expanding the MGF in a Taylor series around $\beta = 0$, we obtain
\begin{equation}
\mathbb{E}_S\!\Big[e^{\beta\tilde{\Phi}_{\mathbf{A}}(S)}\Big]
= 1 + \beta\,\mathbb{E}\big[\tilde{\Phi}_{\mathbf{A}}(S)\big] + \frac{\beta^2}{2}\,\mathbb{E}\big[\tilde{\Phi}_{\mathbf{A}}(S)^2\big] + O(\beta^3).
\end{equation}

Since $\mathbf{A}$ is the adjacency matrix of an undirected simple graph, we have
$\mathbf{A} = \mathbf{A}^\mathsf{T}$ and $A_{ii} = 0$.
For $S \in \{-1,1\}^N$ uniformly distributed, we have $\mathbb{E}[S_i] = 0$ and
$\mathbb{E}[S_i S_j] = 0$ for all $i \neq j$. Therefore,
\begin{equation}
\mathbb{E}[S^\mathsf{T}\mathbf{A}S]
= \sum_{i \neq j} A_{ij}\, \mathbb{E}[S_i S_j]
= 0
\end{equation}
and
\begin{equation}
\mathbb{E}[\mathbb{1}^\mathsf{T}\mathbf{A}S]
= \sum_{i=1}^N d_i\, \mathbb{E}[S_i]
= 0.
\end{equation}
Computing the first and second moments of $\tilde{\Phi}_{\mathbf{A}}(S)$, we get
\begin{equation}
\mathbb{E}\big[\tilde{\Phi}_{\mathbf{A}}(S)\big]
= \frac{1}{8}\,\mathbb{E}[S^\mathsf{T}\mathbf{A}S]
+ \left(\frac{1}{4} - \frac{\theta}{2}\right)
  \mathbb{E}[\mathbb{1}^\mathsf{T}\mathbf{A}S]
= 0
\end{equation}
and, after some algebra and using properties of Rademacher random variables, we obtain 
\begin{equation}
\mathbb{E}\big[\tilde{\Phi}_{\mathbf{A}}(S)^2\big] = \frac{|\mathcal{E}|}{16} 
+ \left(\frac{1}{4}-\frac{\theta}{2}\right)^2\sum_{i=1}^N d_i^2 \Equaldef \sigma_{\mathbf{A}}^2.
\end{equation}
Finally,
\begin{equation}\label{eq:Z_small_beta}
\mathbb{E}_S\!\Big[e^{\beta\tilde{\Phi}_{\mathbf{A}}(S)}\Big]
= 1 + \frac{\beta^2}{2}\sigma_{\mathbf{A}}^2+ O(\beta^3).
\end{equation}

\vspace{5pt}

\begin{theorem}\label{thm:small_beta}
For sufficiently small $\beta > 0$, the stationary probability 
$\tilde{\mu}_{\mathbf{A}}(a^\star \mid \beta)$ is maximized, over all graphs 
on $N$ vertices with $|\mathcal{E}|$ edges, by the $K$-regular graph 
with $K = 2|\mathcal{E}|/N$, or by a near-$K$-regular graph when 
$2|\mathcal{E}|/N$ is not an integer.
\end{theorem}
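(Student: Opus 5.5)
By \cref{lem:numerator_invariance}, the numerator $e^{\beta\tilde{\Phi}_{\mathbf{A}}(s^\star)}$ of \eqref{eq:mu_over_Z} depends on $\mathbf{A}$ only through $|\mathcal{E}|$. So, for fixed $N$, $|\mathcal{E}|$ and $\beta$, maximizing $\tilde{\mu}_{\mathbf{A}}(s^\star\mid\beta)$ over $\mathcal{G}(N,|\mathcal{E}|)$ is equivalent to minimizing the MGF $M_{\mathbf{A}}(\beta)\Equaldef\mathbb{E}_S[e^{\beta\tilde{\Phi}_{\mathbf{A}}(S)}]$. My plan is to use the expansion \eqref{eq:Z_small_beta}, $M_{\mathbf{A}}(\beta)=1+\frac{\beta^2}{2}\sigma^2_{\mathbf{A}}+O(\beta^3)$, and show three things: the quadratic coefficient is minimized exactly by (near-)regular graphs, competing graphs are separated by a uniform gap, and the remainder is uniformly small.

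\textbf{Step 1: the quadratic coefficient.} In $\sigma^2_{\mathbf{A}}=\frac{|\mathcal{E}|}{16}+c_\theta^2\sum_i d_i^2$ with $c_\theta\Equaldef\frac14-\frac\theta2$, only $\sum_i d_i^2$ depends on the topology, and $\sum_i d_i=2|\mathcal{E}|$ is fixed. Minimizing the Schur-convex function $\sum_i d_i^2$ over nonnegative integer vectors with fixed sum gives the most balanced vector. This vector has all $d_i\in\{\lfloor 2|\mathcal{E}|/N\rfloor,\lceil 2|\mathcal{E}|/N\rceil\}$, since it is majorized by every other integer vector with the same sum. Concretely, if $d_i\ge d_j+2$, moving one unit from $d_i$ to $d_j$ lowers $\sum d^2$ by $2(d_i-d_j)-2\ge2$.

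It remains to check that such a degree sequence is realized by a connected simple graph. When $2|\mathcal{E}|/N=K$ is an integer, \cref{graph_new} gives a $K$-regular graph. A connected one also exists in the relevant range, for example a circulant graph, or one built from \cref{G_K_new}. In the non-integer case, I would invoke Erd\H{o}s--Gallai for near-regular sequences, or give a direct circulant-plus-partial-matching construction.

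\textbf{Step 2: a uniform gap.} Since $d_i^2\equiv d_i \pmod 2$, the parity of $\sum_i d_i^2$ equals that of $2|\mathcal{E}|$. Hence any non-near-regular graph has $\sum d_i^2\ge \sum d_i^{\star2}+2$, so $\sigma^2_{\mathbf{A}}-\sigma^2_{\mathbf{A}^\star}\ge 2c_\theta^2>0$ whenever $\theta\neq\frac12$. When $\theta=\frac12$ the quadratic term is graph-independent and the statement degenerates at this order, so I would exclude or remark on that case. The $3$-point argument in Step 1 already gives this gap directly, so the parity remark is only a cross-check.

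\textbf{Step 3: the remainder, which is the main obstacle.} We need the $O(\beta^3)$ term to be uniform over all graphs in $\mathcal{G}(N,|\mathcal{E}|)$. Because $|\tilde{\Phi}_{\mathbf{A}}(s)|\le\big(\frac18+|c_\theta|\big)2|\mathcal{E}|\Equaldef B$ for every $s$ and every such graph, Taylor's theorem with Lagrange remainder gives
\begin{equation*}
\big|M_{\mathbf{A}}(\beta)-1-\tfrac{\beta^2}{2}\sigma^2_{\mathbf{A}}\big|\le \tfrac{\beta^3}{6}B^3e^{\beta B}
\end{equation*}
uniformly in $\mathbf{A}$. The set of graphs is also finite, which makes uniformity automatic anyway. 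Choosing $\beta$ small enough that $\frac{\beta^3}{3}B^3e^{\beta B}<\beta^2 c_\theta^2$ then yields $M_{\mathbf{A}}(\beta)>M_{\mathbf{A}^\star}(\beta)$ for every non-near-regular $\mathbf{A}$. This proves that the (near-)$K$-regular graph maximizes $\tilde{\mu}_{\mathbf{A}}(s^\star\mid\beta)$. Ties among different near-regular graphs, which share the same $\sigma^2$, are resolved only at higher order, and the statement only claims that the maximizer lies in this class.
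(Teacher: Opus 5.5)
Your proposal is correct and follows the paper's route: use \cref{lem:numerator_invariance} to reduce the problem to minimizing the MGF, expand it to second order in $\beta$, and minimize $\sum_i d_i^2$ by Schur-convexity and majorization. Your additions make rigorous what the paper's ``$\approx$'' leaves implicit: the integrality gap $\sigma^2_{\mathbf{A}}-\sigma^2_{\mathbf{A}^\star}\ge 2c_\theta^2$, the remainder bound that holds uniformly over graphs, realizability of the near-regular degree sequence by a connected simple graph, and the observation that the second-order argument does not distinguish graphs at $\theta=\tfrac12$.
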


\vspace{5pt}

\begin{IEEEproof}
From \eqref{eq:Z_small_beta}, the partition function is
\begin{equation}
Z_\mathbf{A}(\beta) \approx 2^N\Big(1 + \frac{\beta^2}{2}\sigma_\mathbf{A}^2\Big),
\end{equation}
which is monotone increasing in $\sigma_\mathbf{A}^2$. Therefore, we are interested in minimizing the variance 
\begin{equation}
\sigma_\mathbf{A}^2 = \frac{|\mathcal{E}|}{16} + \left(\frac{1}{4}-\frac{\theta}{2}\right)^2 \sum_{i=1}^N d_i^2.
\end{equation}
The first term is fixed for a given $|\mathcal{E}|$. For the second 
term, minimizing $\sigma_{\mathbf{A}}^2$ is equivalent to minimizing
$\sum_{i=1}^N d_i^2$
over all degree sequences $(d_1,\ldots,d_N)$ with 
$\sum_{i=1}^N d_i = 2|\mathcal{E}|$. We use \textit{Majorization theory} 
\cite{marshall2011inequalities} to characterize the minimizer. Recall 
that a vector $\mathbf{x}$ is \emph{majorized} by $\mathbf{y}$, i.e., 
$\mathbf{x} \preceq \mathbf{y}$, if for all $k = 1,\ldots,N$, we have
\begin{equation}\label{eq:majorization1}
    \sum_{i=1}^k x_{[i]} \leq \sum_{i=1}^k y_{[i]}, 
    \qquad \sum_{i=1}^N x_i = \sum_{i=1}^N y_i,
\end{equation}
where $\xi_{[1]} \geq \xi_{[2]} \geq \cdots \geq \xi_{[N]}$ denotes the decreasing 
rearrangement of a vector $\boldsymbol{\xi}\in \mathbb{R}^N$. Since $f(d) = d^2$ is convex, it is also \emph{Schur-convex}, i.e.,
\begin{equation}\label{eq:majorization2}
    \mathbf{x} \preceq \mathbf{y} \implies 
    \sum_{i=1}^N f(x_i) \leq \sum_{i=1}^N f(y_i).
\end{equation}
Therefore, $\sum_{i=1}^N d_i^2$ is minimized by the least majorized degree 
sequence, i.e., the most uniform one. Among all non-negative integer 
sequences with fixed sum $2|\mathcal{E}|$:
\begin{itemize}
    \item When $K = 2|\mathcal{E}|/N$ is an integer, the least majorized 
    sequence is the constant sequence $(K,\ldots,K)$, which corresponds to 
    a $K$-regular graph.
    \item When $K=2|\mathcal{E}|/N$ is not an integer, the minimizer is a 
    near-$K$-regular sequence in which each 
    $d_i \in \{\lfloor K\rfloor, \lceil K\rceil\}$, since any other 
    sequence with the same sum is majorized by it.
\end{itemize}
Therefore, $\sigma_{\mathbf{A}}^2$ is minimized by the $K$-regular (or 
near-$K$-regular) graph. From \cref{lem:numerator_invariance}, $\tilde{\Phi}_{\mathbf{A}}(a^\star)$ depends only on 
$|\mathcal{E}|$ and not on the graph structure, the numerator of 
$\tilde{\mu}_{\mathbf{A}}(a^\star \mid \beta)$ is identical across all graphs 
with the same $|\mathcal{E}|$. For sufficiently small $\beta$,
\begin{equation}
    \tilde{\mu}_{\mathbf{A}}(a^\star \mid \beta) \approx \frac{e^{\beta\tilde{\Phi}_{\mathbf{A}}(s^\star)}}{
    2^N\left(1 
    + \frac{\beta^2}{2}\sigma_{\mathbf{A}}^2 \right)},
\end{equation}
which is decreasing in $\sigma_{\mathbf{A}}^2$. Consequently, minimizing 
$\sigma_{\mathbf{A}}^2$ maximizes $\tilde{\mu}_{\mathbf{A}}(a^\star \mid \beta)$ 
for small $\beta$, and the $K$-regular graph is the optimizer.
\end{IEEEproof}

\vspace{5pt}

\begin{remark}
\cref{thm:small_beta} reveals that, in the low-rationality regime, the graph structure affects the stationary distribution of coordination only through its degree distribution, while higher-order graph invariants (triangles, spectral gap, etc.) do not play a significant role and can be ignored.
\end{remark}

\subsection{Moderate rationality}\label{sec:moderate_beta}

For moderate values of $\beta$, the Taylor expansion is no longer accurate. We instead employ a spectral upper bound on the partition function that is valid for any $\beta>0$ and $N\geq 2$.

\vspace{5pt}
\begin{theorem}\label{thm:moderate_beta}
Among all simple connected graphs on $N$ vertices with $|\mathcal{E}|$ 
edges, the coordination probability under LLL satisfies
\begin{equation}\label{eq:mu_spectral_lb}
\mu_{\mathbf{A}}(a^\star \mid \beta) \geq \left(\frac{1}{1 + 
e^{-\frac{\beta \lambda_1(\mathbf{A})}{2}|1-2\theta|}}\right)^{\!N},
\end{equation}
for all $\beta > 0$. This lower bound is maximized over all graphs with 
$|\mathcal{E}|$ edges by the $K$-regular graph with $K = 2|\mathcal{E}|/N$ 
(when it exists). Therefore, the optimal graph $\mathbf{A}^\star$ satisfies
\begin{equation}\label{eq:mu_spectral_lb_regular}
\mu_{\mathbf{A}^\star}(a^\star \mid \beta) \geq \left(\frac{1}{1 + 
e^{-\frac{\beta K}{2}|1-2\theta|}}\right)^{\!N}.
\end{equation}
\end{theorem}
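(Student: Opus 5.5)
The plan is to copy the argument of \cref{thm:beta_bound}, with the spectral radius $\lambda_1(\mathbf{A})$ playing the role of $K$, and then deal with the optimization over graphs separately. I will carry out the case $\theta<1/2$, where $a^\star=\mathbb{1}$; the case $\theta>1/2$ follows by the symmetry $a\mapsto \mathbb{1}-a$, and this is why $|1-2\theta|$ appears in the statement.

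\emph{Step 1: complement coordinates.} Write the Gibbs ratio as
\begin{equation*}
\mu_{\mathbf{A}}(a^\star\mid\beta)^{-1}=\sum_{a}e^{\beta(\hat\Phi_{\mathbf{A}}(a)-\hat\Phi_{\mathbf{A}}(\mathbb{1}))},
\end{equation*}
and parametrize each profile by its set of defectors $b=\mathbb{1}-a$, with $m=\|b\|_1$. Expanding $a^\top\mathbf{A}a=\mathbb{1}^\top\mathbf{A}\mathbb{1}-2d^\top b+b^\top\mathbf{A}b$ gives
\begin{equation*}
\hat\Phi_{\mathbf{A}}(a)-\hat\Phi_{\mathbf{A}}(\mathbb{1})=-(1-\theta)\,d^\top b+\tfrac12\, b^\top\mathbf{A}b .
\end{equation*}

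\emph{Step 2: bound the exponent.} We have $b^\top\mathbf{A}b\le d^\top b$, since each defector has at most $d_i$ defecting neighbours. So the exponent is at most $-(\tfrac12-\theta)\,d^\top b$. To reach the stated form we need the exponent to be at most $-\tfrac12|1-2\theta|\,\lambda_1(\mathbf{A})\,m$, i.e. we need $d^\top b\ge\lambda_1(\mathbf{A})\|b\|_1$. Given that, grouping profiles by $m$ and applying the binomial theorem exactly as in \eqref{Binomial_bound} gives
\begin{equation*}
\mu_{\mathbf{A}}^{-1}\le\bigl(1+e^{-\frac{\beta\lambda_1}{2}|1-2\theta|}\bigr)^N,
\end{equation*}
which is \eqref{eq:mu_spectral_lb}. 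For a $K$-regular graph this step is valid, because $d^\top b=Km$ and $\lambda_1=K$; this recovers \eqref{LB} and hence \eqref{eq:mu_spectral_lb_regular}.

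\emph{Step 3: optimality of the regular graph.} I would use the Rayleigh quotient with the vector $\mathbb{1}$, which gives $\lambda_1(\mathbf{A})\ge 2|\mathcal{E}|/N=K$, with equality if and only if the graph is regular. Together with the majorization argument from \cref{thm:small_beta} on degree sequences, this is meant to identify the regular graph as the distinguished extremal case for the bound.

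\emph{Main obstacle.} The hard step is Step 2 for irregular graphs, and I expect it to fail as written. The inequality $d^\top b\ge\lambda_1\|b\|_1$ breaks whenever a low-degree vertex defects. For a star with a single leaf defecting, $d^\top b=1$ while $\lambda_1=\sqrt{N-1}$.

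What I would try first is to replace $\lambda_1$ by $d_{\min}$, or to keep the per-vertex factors $\prod_i\bigl(1+e^{-\frac{\beta}{2}|1-2\theta|d_i}\bigr)^{-1}$. Either version is provable by the argument above, and either one coincides with the stated bound in the regular case.

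Step 3 has a similar issue. The right-hand side of \eqref{eq:mu_spectral_lb} is increasing in $\lambda_1$, so the Rayleigh bound shows that regular graphs \emph{minimize} the spectral expression rather than maximize it. The optimality claim is therefore only sound after the $\lambda_1$-to-$d_{\min}$ replacement. With $d_{\min}$, the bound is maximized when $d_{\min}=\lfloor 2|\mathcal{E}|/N\rfloor$, which is attained exactly by regular or near-regular graphs.
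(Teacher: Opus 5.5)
Your two objections are correct, and the first is stronger than you state it. The star is a counterexample to \eqref{eq:mu_spectral_lb} itself, not merely to your intermediate inequality $d^\top b\ge\lambda_1(\mathbf{A})\|b\|_1$. Take $K_{1,N-1}$ with $\theta<1/2$, and keep in the partition function only the profiles in which the center plays $1$. This gives
\[
\mu_{\mathbf{A}}(\mathbb{1}\mid\beta)\le\frac{e^{\beta(1-2\theta)(N-1)}}{e^{-\beta\theta(N-1)}\bigl(1+e^{\beta(1-\theta)}\bigr)^{N-1}}=\bigl(1+e^{-\beta(1-\theta)}\bigr)^{-(N-1)},
\]
which decays geometrically in $N$. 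By contrast, the right-hand side of \eqref{eq:mu_spectral_lb} with $\lambda_1=\sqrt{N-1}$ tends to $1$. For instance, with $\theta=0$, $\beta=1$, $N=101$, the claimed bound is about $0.51$ while the true value is below $10^{-13}$. Your second objection also holds: for $\theta\neq 1/2$ that right-hand side is increasing in $\lambda_1$, and $\lambda_1\ge 2|\mathcal{E}|/N$ with equality only for regular graphs, so the regular graph minimizes it. The statement as written cannot be proved, and you were right not to force Step~2.

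The paper's proof does not get around this. It completes the square, $\Phi_{\mathbf{A}}(a)=\frac12 y^\top\mathbf{A}y-\theta^2|\mathcal{E}|$ with $y=a-\theta\mathbb{1}$, then applies $y^\top\mathbf{A}y\le\lambda_1\|y\|_2^2$ and the binomial theorem. Done correctly (the prefactor $e^{\beta\lambda_1 N\theta^2/2}$ in \eqref{eq:Z_spectral} actually cancels), this gives $Z_{\mathbf{A}}(\beta)\le e^{-\beta\theta^2|\mathcal{E}|}\bigl(e^{\beta\lambda_1\theta^2/2}+e^{\beta\lambda_1(1-\theta)^2/2}\bigr)^N$. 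With $\bar d=2|\mathcal{E}|/N$, that is
\[
\mu_{\mathbf{A}}(\mathbb{1}\mid\beta)\ge\left(\frac{e^{\beta(1-\theta)^2\bar d/2}}{e^{\beta\lambda_1\theta^2/2}+e^{\beta\lambda_1(1-\theta)^2/2}}\right)^{N}.
\]
This is what the spectral argument yields for a general graph. It agrees with \eqref{eq:mu_spectral_lb} only when $\lambda_1=\bar d$, and the paper only evaluates it at $\lambda_1=K$. Its numerator depends only on $|\mathcal{E}|$, so it is decreasing in $\lambda_1$ and genuinely maximized by the $K$-regular graph; that is the defensible content of the optimality claim. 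The final inequality \eqref{eq:mu_spectral_lb_regular} is true but needs nothing spectral. It is just $\mu_{\mathbf{A}^\star}\ge\mu_{\mathbf{A}_K}$ combined with \eqref{LB}, which is exactly your Step~2 in the regular case.

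Your repair is sound and is a genuinely different fix. The product bound $\mu_{\mathbf{A}}(a^\star\mid\beta)\ge\prod_i\bigl(1+e^{-\frac{\beta}{2}|1-2\theta|d_i}\bigr)^{-1}$ follows from your Steps~1--2 via $b^\top\mathbf{A}b\le d^\top b$, with $a\mapsto\mathbb{1}-a$, $\theta\mapsto 1-\theta$ handling the other case. The map $d\mapsto\log\bigl(1+e^{-\frac{\beta}{2}|1-2\theta|d}\bigr)$ is convex, so the majorization argument of \cref{thm:small_beta} shows the bound is maximized by regular or near-regular degree sequences. On the star it keeps the correct geometric decay in $N$, whereas the corrected spectral bound decays like $e^{-\Theta(N^{3/2})}$.

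The $d_{\min}$ version is weaker on the optimality side. When $2|\mathcal{E}|/N$ is not an integer, $d_{\min}=\lfloor 2|\mathcal{E}|/N\rfloor$ is also attained by graphs that are not near-regular, so it does not single them out ``exactly''. The spectral route buys a bound that depends on the graph only through $\lambda_1$ and $|\mathcal{E}|$. Yours keeps the full degree sequence and is much sharper on strongly irregular graphs.
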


\vspace{5pt}

\begin{IEEEproof}
Working in the original binary action coordinates, recall the potential function
\begin{equation}
\Phi_{\mathbf{A}}(a) = \frac{1}{2}a^\mathsf{T}\mathbf{A}a - \theta\,\mathbb{1}^\mathsf{T}\mathbf{A}a.
\end{equation}
Completing the square, we obtain
\begin{equation}\label{eq:complete_square}
\Phi_{\mathbf{A}}(a) = \frac{1}{2}y^\mathsf{T}\mathbf{A}y - \theta^2|\mathcal{E}|,
\end{equation}
where $y \Equaldef a - \theta\mathbb{1}$. 

By the Rayleigh quotient characterization of the largest eigenvalue \cite{horn2012matrix}, we have
\begin{equation}\label{eq:rayleigh}
y^\mathsf{T}\mathbf{A}y \leq \lambda_1(\mathbf{A})\,\|y\|_2^2, \ \   y\in\mathbb{R}^N,
\end{equation}
where $\lambda_1(\mathbf{A})$ denotes the denotes the
largest eigenvalue of $\mathbf{A}$. For an action profile $a$ with $m\Equaldef \|a\|_1$, we have
\begin{equation}\label{eq:y_norm}
\|y\|_2^2 = m(1-\theta)^2 + (N-m)\theta^2.
\end{equation}
Substituting \eqref{eq:rayleigh} and \eqref{eq:y_norm} into \eqref{eq:complete_square}, we obtain the following upper bound for the potential function
\begin{equation}\label{eq:phi_spectral_ub}
\Phi_{\mathbf{A}}(a) \leq \frac{\lambda_1(\mathbf{A})}{2}\big(m(1-\theta)^2 + (N-m)\theta^2\big) - \theta^2|\mathcal{E}|.
\end{equation}

\begin{figure}[ht!]
    \centering
    \includegraphics[width=1\linewidth]{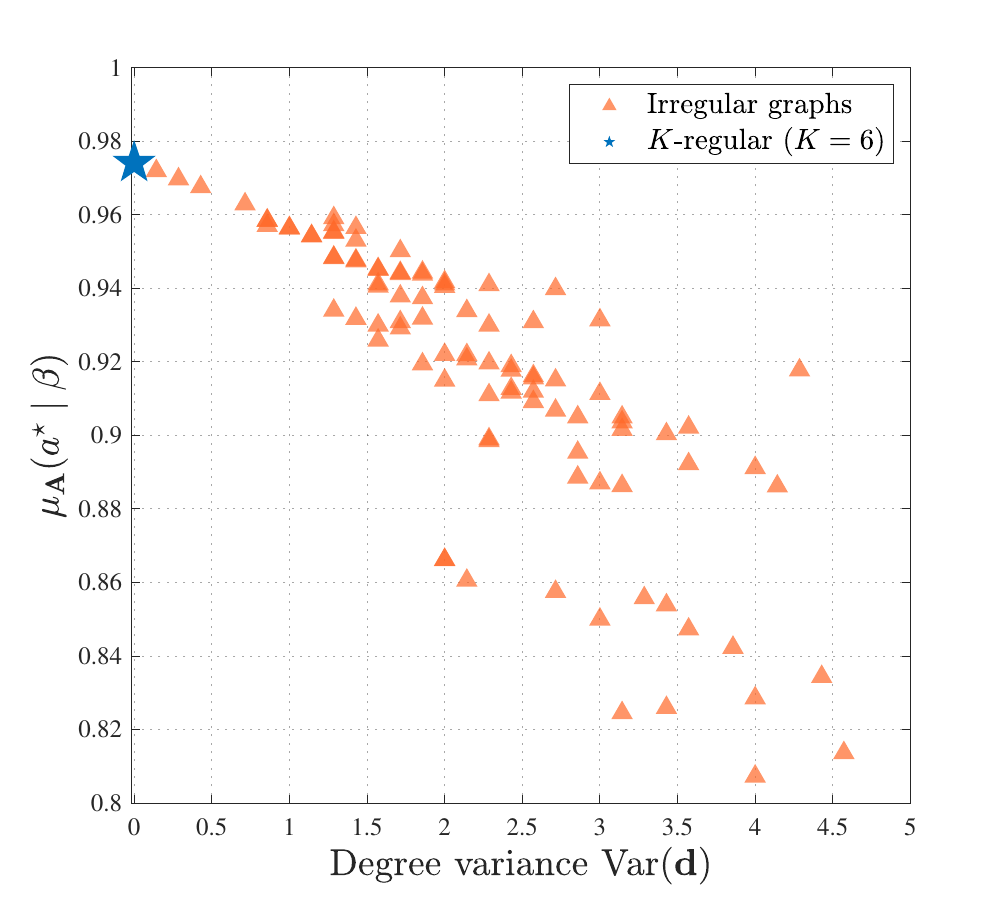}
    \caption{Coordination probability $\mu_{\mathbf{A}}(a^\star \mid \beta)$ versus 
degree variance $\mathrm{Var}(d)$ for irregular and $K$-regular graphs 
with $N=14$ nodes and $|\mathcal{E}|=42$ edges. The $K$-regular graph 
achieves the highest coordination probability.}
    \label{fig:regular_vs_irregular}
\end{figure}

The bound \eqref{eq:phi_spectral_ub} depends on $a$ only through 
$m = \|a\|_1$. Hence,
taking the exponential and summing over all $a \in \{0,1\}^N$, we get
\begin{equation}
    Z_{\mathbf{A}}(\beta) \leq 
    e^{\beta\left(\frac{\lambda_1(\mathbf{A}) N\theta^2}{2} - \theta^2|\mathcal{E}|\right)}\! \! \!
    \sum_{a \in \{0,1\}^N} 
    e^{\frac{\beta\lambda_1(\mathbf{A})}{2}\left[(1-\theta)^2-\theta^2\right]\|a\|_1}.
\end{equation}
Since
\begin{multline}
    \sum_{a \in \{0,1\}^N} 
    e^{\frac{\beta\lambda_1(\mathbf{A})}{2}\left[(1-\theta)^2-\theta^2\right]\|a\|_1}
    \\ = \sum_{m=0}^{N} \binom{N}{m} 
    e^{\frac{\beta\lambda_1(\mathbf{A})}{2}\left[(1-\theta)^2-\theta^2\right]m}.
\end{multline}
Writing 
\begin{equation}
e^{\frac{\beta\lambda_1(\mathbf{A})}{2}\left[(1-\theta)^2-\theta^2\right]m} 
= \left(e^{\frac{\beta\lambda_1(\mathbf{A})(1-\theta)^2}{2}}\right)^m 
\left(e^{-\frac{\beta\lambda_1(\mathbf{A})\theta^2}{2}}\right)^m
\end{equation}
and multiplying and dividing by $e^{\frac{\beta\lambda_1\theta^2}{2}(N-m)}$, 
the sum becomes
\begin{equation}
    \sum_{m=0}^{N}\binom{N}{m}
    \left(e^{\frac{\beta\lambda_1(\mathbf{A}) (1-\theta)^2}{2}}\right)^m
    \left(e^{\frac{\beta\lambda_1(\mathbf{A}) \theta^2}{2}}\right)^{N-m}.
\end{equation}
Using the Binomial Theorem yields
\begin{equation}\label{eq:Z_spectral}
Z_\mathbf{A}(\beta) \leq e^{\beta\big(\frac{\lambda_1(\mathbf{A}) N\theta^2}{2}-\theta^2|\mathcal{E}|\big)} \Big(e^{\frac{\beta\lambda_1(\mathbf{A})\theta^2}{2}} + e^{\frac{\beta\lambda_1(\mathbf{A})(1-\theta)^2}{2}}\Big)^N.
\end{equation}

From the Rayleigh quotient, we have that for any graph with $|\mathcal{E}|$ edges, the largest eigenvalue satisfies the following inequality
\begin{equation}\label{eq:lambda1_bound}
\lambda_1(\mathbf{A}) \geq \frac{\mathbb{1}^\mathsf{T} \mathbf{A}\, \mathbb{1}}{\|\mathbb{1}\|_2^2} = \frac{2|\mathcal{E}|}{N}
\end{equation}
with equality if and only if $\mathbf{A}$ is the adjacency matrix of a $K$-regular graph with $K = 2|\mathcal{E}|/N$.

The upper bound in \eqref{eq:Z_spectral} is an increasing function of $\lambda_1(\mathbf{A})$ for all $\beta > 0$.  By \eqref{eq:lambda1_bound}, $\lambda_1$ is minimized by the $K$-regular graph, so the upper bound is also minimized when the graph is $K$-regular.

To obtain \eqref{eq:mu_spectral_lb}, consider $a^\star = \mathbb{1}$ (the case when $a^\star = \mathbb{0}$ is analogous).  Then, $\Phi_{\mathbf{A}_K}(\mathbb{1}) = K(1/2 - \theta)N$.  Using $\hat{\Phi}_{\mathbf{A}_K}(a) = \Phi_{\mathbf{A}_K}(a) - \theta^2|\mathcal{E}|$, we have
\begin{multline}
\mu_{\mathbf{A}^\star}(\mathbb{1}\mid\beta) \geq \frac{e^{\beta K(1-\theta)^2 N/2}}{\big(e^{\frac{\beta K\theta^2}{2}} + e^{\frac{\beta K(1-\theta)^2}{2}}\big)^N} \\ = \left(\frac{1}{1+e^{-\frac{\beta K(1-2\theta)}{2}}}\right)^{\!N}.
\end{multline}
Since $\theta < 1/2$ implies $a^\star = \mathbb{1}$, we have $1-2\theta = |1-2\theta|$.  The case $a^\star = \mathbb{0}$ gives the same expression with $2\theta-1$.
\end{IEEEproof}

\vspace{5pt}

\begin{remark}
\Cref{thm:moderate_beta} shows that using $K$-regular graph maximizes a 
lower bound on $\mu_{\mathbf{A}^\star}(a^\star \mid \beta)$ that depends on 
the graph only through $\lambda_1(\mathbf{A})$. Whether 
exact optimality holds for all $\beta$ remains an open question. \Cref{fig:regular_vs_irregular} shows that when compared with randomly generated connected irregular graphs with a fixed number of edges, the $K$-regular graph yields the largest stationary probability of coordination.

The spectral bound reveals two distinct ways by which regularity improves coordination. First, the spectral radius $\lambda_1$ is minimized, giving the tightest Rayleigh quotient bound on the quadratic form $y^\mathsf{T}\mathbf{A}y$. Second, the leading term in 
\eqref{eq:Z_spectral} is equal to one: the condition 
$\lambda_1 N/2 = |\mathcal{E}|$ holds if and only if the graph is regular, and for irregular graphs the leading factor 
$e^{\beta(\lambda_1 N\theta^2/2 - \theta^2|\mathcal{E}|)} > 1$, increasing the partition function and reducing the coordination probability.

For $K$-regular graphs, the lower bound in \eqref{eq:mu_spectral_lb} can be compared with the bound in \eqref{LB}.  Both have the sigmoid-power form, but they arise from different bounding techniques: \eqref{LB} uses the operator norm bound on $a^\mathsf{T}\mathbf{A}_K a \leq mK$ applied directly to the potential, while \eqref{eq:mu_spectral_lb} uses the Rayleigh quotient after completing the square.  For $\theta \in (0,1)$, the two bounds are equivalent and yield the same minimum-$\beta$ condition of \Cref{thm:beta_bound}.
\end{remark}


\subsection{Optimization of asymptotically large graphs}

For arbitrary $\beta$, the partition function depends on more than just the graph's degree distribution, and as a result the optimization becomes extremely challenging. Here, we show that tractability is recovered in the regime of large graphs, when $N\rightarrow \infty$. Under mild technical conditions, we can show that the partition function can be well approximated by the MGF of a Gaussian random variable. This is illustrated in \cref{fig:MCLT}. We begin with the following lemmata, which establishes the the asymptotic normality of the the potential function for a sequence of graphs satisfying two technical conditions.
In this section we will use the reparameterization as an Ising game.

\vspace{5pt}

\begin{lemma}[Martingale Central Limit Theorem {\cite{HallHeyde1980}}]\label{lem:MCLT}
For a martingale difference sequence $\{D_k\}_{k=1}^N$ with a filtration $\{\mathcal{F}_k\}_{k=1}^N$, define $V_N \Equaldef \sum_{k=1}^N \mathbb{E}[D_k^2 \mid \mathcal{F}_{k-1}]$
and $\sigma_N^2 \Equaldef \sum_{k=1}^N \mathbb{E}[D_k^2]$. If
\begin{enumerate}[(i)]
  \item $\max_{1 \le k \le N} |D_k| / \sigma_N \xrightarrow{P} 0$,
  \item $V_N / \sigma_N^2 \xrightarrow{P} 1$,

\end{enumerate}
then
\begin{equation}
  \frac{\sum_{k=1}^N D_k}{\sigma_N}
  \;\xrightarrow{d}\;
  \mathcal{N}(0,1).
\end{equation}
\end{lemma}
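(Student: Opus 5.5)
Since the statement is the classical martingale central limit theorem, quoted from \cite{HallHeyde1980}, I will not build it from scratch. I plan to reproduce McLeish's characteristic-function argument. The outline is: normalize, localize so that everything is bounded, and then use the product-martingale identity.

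Set $X_k \Equaldef D_k/\sigma_N$ and $S_N \Equaldef \sum_{k=1}^N X_k$, so that $\sum_k \mathbb{E}[X_k^2]=1$.

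\emph{Localization step.} Define the stopping index $\tau \Equaldef \max\{j : \sum_{k\le j}\mathbb{E}[X_k^2\mid\mathcal{F}_{k-1}] \le 2\}$, and replace $X_k$ by $X_k\mathbb{1}\{k\le\tau\}$. This is still a martingale difference sequence, because $\{k\le \tau\}\in\mathcal{F}_{k-1}$. By (ii), $\mathbb{P}(\tau<N)\to 0$, so the truncated sum has the same limit law as $S_N$. After this step we may assume $\sum_k \mathbb{E}[X_k^2\mid\mathcal{F}_{k-1}]\le 2$ almost surely.

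\emph{Key identity.} Fix $t\in\mathbb{R}$ and write $e^{ix}=(1+ix)\,e^{-x^2/2+r(x)}$ with $|r(x)|\le |x|^3$ for small $|x|$. This gives
\[
e^{itS_N} = T_N\, U_N, \qquad T_N \Equaldef \prod_{k=1}^N (1+itX_k), \qquad U_N \Equaldef e^{-\frac{t^2}{2}\sum_k X_k^2 + \sum_k r(tX_k)} .
\]
The martingale property gives $\mathbb{E}[T_N]=1$. Writing
\[
\mathbb{E}[e^{itS_N}] - e^{-t^2/2} = \mathbb{E}\big[T_N(U_N - e^{-t^2/2})\big],
\]
the theorem follows from two facts: (a) $U_N \xrightarrow{P} e^{-t^2/2}$, and (b) $\{T_N\}$ is uniformly integrable, with $U_N$ bounded. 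For (a), I will use (i) to get $\sum_k |r(tX_k)| \le |t|^3 \max_k|X_k|\sum_k X_k^2 \xrightarrow{P} 0$, once $\sum_k X_k^2$ is shown to be tight.

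\emph{Main obstacle.} Both (a) and (b) rest on the same ingredient: passing from the conditional variance $V_N/\sigma_N^2$ in (ii) to the raw quadratic variation, that is, showing $\sum_k X_k^2 \xrightarrow{P} 1$. The difference $\sum_k (X_k^2 - \mathbb{E}[X_k^2\mid\mathcal{F}_{k-1}])$ is itself a martingale, so I would bound it in $L^1$ after a further truncation. The plan is to split $X_k = X_k\mathbb{1}\{|X_k|\le\varepsilon\} + X_k\mathbb{1}\{|X_k|>\varepsilon\}$. The small-jump part is controlled by an $L^2$ estimate of order $\varepsilon^2\sum_k\mathbb{E}[X_k^2\mid\mathcal{F}_{k-1}]\le 2\varepsilon^2$. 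The large-jump part is controlled using (i).

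Condition (i) holds only in probability. To push it to the needed conditional-Lindeberg form, I expect to need uniform integrability of $\max_k X_k^2$, which I would obtain from the bounded conditional variance produced by the localization step. The same bound also yields $\mathbb{E}|T_N|^2 \le \mathbb{E}\prod_k(1+t^2X_k^2)$, and this product is controlled by $e^{t^2\sum_k X_k^2}$ times a factor involving $\max_k X_k^2$, which gives (b).

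If this transfer from conditional variance to quadratic variation does not go through cleanly under (i) alone, I would fall back on citing \cite[Thm.~3.2 and Cor.~3.1]{HallHeyde1980} directly, where the conditional Lindeberg condition is implied in our later application by bounded increments.
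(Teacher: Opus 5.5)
\textbf{There is a genuine gap at exactly the step you flagged, and it cannot be closed, because the lemma is false as stated.} Localizing the conditional variance at level $2$ gives only $\mathbb{E}[\max_k X_k^2]\le\mathbb{E}\sum_k X_k^2\le 2$. That is $L^1$-boundedness, not uniform integrability. Moreover, (i) is only convergence in probability, so it says nothing about the compensator $\sum_k\mathbb{E}[X_k^2\mathbb{1}\{|X_k|>\varepsilon\}\mid\mathcal{F}_{k-1}]$ of the large jumps. Hence $\sum_k X_k^2\xrightarrow{P}1$ does not follow.

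Here is a concrete counterexample. Let $D_1,\dots,D_N$ be i.i.d.\ with $\mathbb{P}(D_k=\pm\sqrt{N})=\tfrac{1}{2N^2}$ each and $\mathbb{P}(D_k=0)=1-N^{-2}$, with the natural filtration.
\begin{itemize}
\item $V_N=\sigma_N^2=1$ identically, so (ii) holds, and your localization is vacuous.
\item $\mathbb{P}(\max_k|D_k|\neq 0)\le 1/N$, so (i) holds.
\item Yet $\sum_k D_k=0$ with probability at least $1-1/N$, so the limit is $\delta_0$, not $\mathcal{N}(0,1)$.
\end{itemize}
In this example $\sum_k X_k^2\xrightarrow{P}0$ while $V_N/\sigma_N^2\equiv 1$. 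Also $\max_k X_k^2\xrightarrow{P}0$ while $\mathbb{E}[\max_k X_k^2]\to 1$. This is precisely the failure of the uniform-integrability step.

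The paper gives no proof beyond the citation, and the citation misquotes Hall--Heyde. Their theorem pairs (i) with the raw quadratic variation condition $\sum_k X_k^2\xrightarrow{P}1$ and boundedness of $\mathbb{E}[\max_k X_k^2]$. Their conditional-variance version, which uses (ii), instead requires the conditional Lindeberg condition in place of (i):
\[
\frac{1}{\sigma_N^2}\sum_{k=1}^N\mathbb{E}\big[D_k^2\,\mathbb{1}\{|D_k|>\varepsilon\sigma_N\}\mid\mathcal{F}_{k-1}\big]\xrightarrow{P}0\quad\text{for all }\varepsilon>0 .
\]

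Your fallback is therefore the right move, but be explicit that it proves a corrected lemma, not the one stated. The cleanest fix is to replace (i) by a sure bound $\max_k|D_k|\le b_N$ with $b_N/\sigma_N\to 0$. This trivially implies conditional Lindeberg, and it is exactly what the application in \cref{thm:CLT} supplies, since there $|D_k|\le(\frac{1}{4}+|c|)\Delta_N$ surely and $\Delta_N=o(\sigma_N)$.

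Under that hypothesis your McLeish argument closes:
\begin{itemize}
\item \emph{Part (a).} No jump splitting is needed. After localization, the martingale $\sum_k\big(X_k^2-\mathbb{E}[X_k^2\mid\mathcal{F}_{k-1}]\big)$ has second moment at most $2b_N^2/\sigma_N^2\to 0$. This gives $\sum_k X_k^2\xrightarrow{P}1$, and hence $U_N\xrightarrow{P}e^{-t^2/2}$.
\item \emph{Part (b).} Your stated route through $e^{t^2\sum_k X_k^2}$ would need the quadratic variation, not the conditional variance, to be bounded. Instead, note that $\prod_k(1+t^2X_k^2)\big/\prod_k\big(1+t^2\mathbb{E}[X_k^2\mid\mathcal{F}_{k-1}]\big)$ is a nonnegative mean-one martingale. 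Hence $\mathbb{E}|T_N|^2\le e^{2t^2}$ after your localization, which gives the required uniform integrability of $T_N$.
\end{itemize}
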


\vspace{5pt}

\begin{figure*}[t!]
    \centering
    \includegraphics[width=0.99\textwidth]{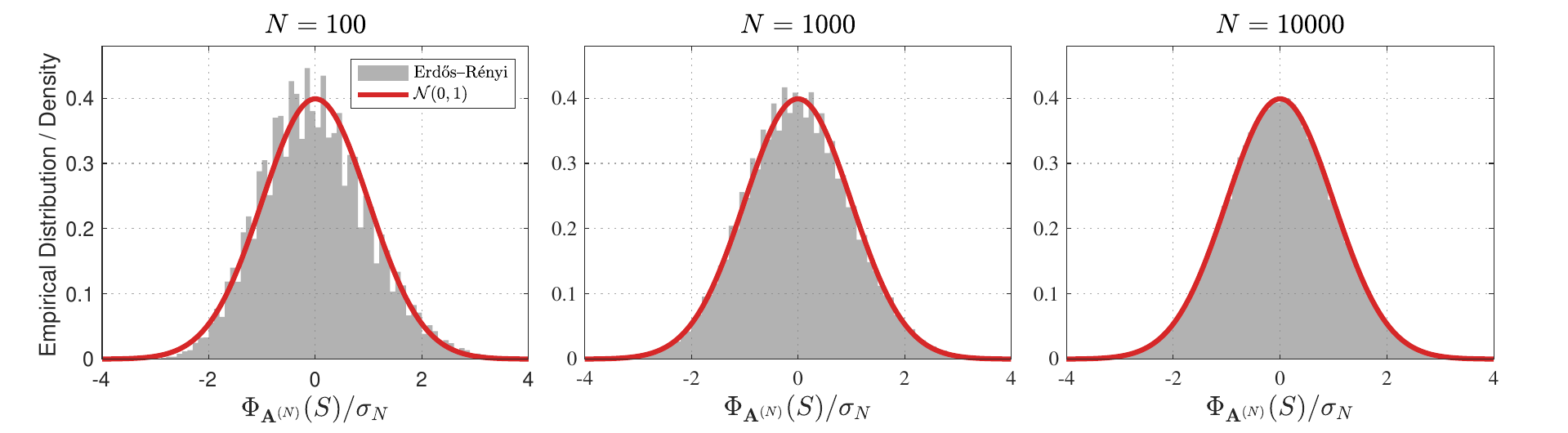}
    \caption{Convergence in distribution of the normalized potential function $\Phi_N(S)/\sigma_N$ to a standard Gaussian for Erd\"{o}s--R\'enyi random graphs $G(N, p)$ with $p = 10/N$ and $\theta = 0.3$. As $N$ grows, the empirical density (histogram) concentrates around the $\mathcal{N}(0,1)$ density (solid curve), illustrating \Cref{thm:CLT} for the sparse graph regime.}
    \label{fig:MCLT}
\end{figure*}

\vspace{5pt}

\begin{lemma}\label{thm:CLT}
Let $\{\mathbf{A}^{(N)}\}$ be a sequence of adjacency matrices of simple undirected graphs
with degrees $d_i^{(N)}$ and $|\mathcal{E}_N|$ edges, and let
$S=(S_1,\dots,S_N)$ be an i.i.d. sequence of Rademacher random variables.  Define
\begin{equation}
\tilde{\Phi}_N(S)
\Equaldef
\frac18\, S^\mathsf{T}\mathbf{A}^{(N)} S
+
\Big(\frac14 - \frac{\theta}{2}\Big)\,\mathbb{1}^\mathsf{T}\mathbf{A}^{(N)} S,
\end{equation}
and set $c \Equaldef \frac14 - \frac{\theta}{2}$ and
\begin{equation}\label{eq:variance_potential}
\sigma_N^2 \Equaldef
\frac{|\mathcal{E}_N|}{16}
+
c^2\sum_{i=1}^N \big(d_i^{(N)}\big)^2.
\end{equation}

Assume that $\mathbf{A}^{(N)}$ satisfies the following conditions:
\begin{enumerate}[(i)]
\item $\Delta_N \Equaldef \max_i d_i^{(N)} = o(\sigma_N)$,
\item $\sum_{i=1}^N (d_i^{(N)})^2 = O(\sigma_N^2)$.
\end{enumerate}
Then,
\begin{equation}
\frac{\tilde{\Phi}_N(S) - \mathbb{E}\big[\tilde{\Phi}_N(S)\big]}{\sigma_N}
\;\xlongrightarrow{D}\;
\mathcal{N}(0,1).
\end{equation}
\end{lemma}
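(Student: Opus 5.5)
The plan is a martingale decomposition of $\tilde{\Phi}_N(S)$ along the vertex ordering, followed by a direct check of the two hypotheses of \Cref{lem:MCLT}. First, $\mathbb{E}[\tilde{\Phi}_N(S)]=0$, by the moment computation preceding \Cref{thm:small_beta}. Since $A_{ii}=0$, we have $\frac18 S^\mathsf{T}\mathbf{A}^{(N)}S=\frac14\sum_{j<k}A_{kj}S_kS_j$. Also $\mathbb{1}^\mathsf{T}\mathbf{A}^{(N)}S=\sum_k d_kS_k$. With $\mathcal{F}_k\Equaldef\sigma(S_1,\dots,S_k)$, I would define
\begin{equation*}
D_k \Equaldef S_k\big(W_k + c\,d_k\big), \qquad W_k \Equaldef \frac14\sum_{j<k}A_{kj}S_j .
\end{equation*}
Then $\tilde{\Phi}_N(S)=\sum_{k=1}^N D_k$. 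Because $W_k$ is $\mathcal{F}_{k-1}$-measurable and $S_k$ is independent of $\mathcal{F}_{k-1}$ with mean zero, $\{D_k\}$ is a martingale difference sequence.

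Since $S_k^2=1$, the conditional variances are $\mathbb{E}[D_k^2\mid\mathcal{F}_{k-1}]=(W_k+c\,d_k)^2$. Let $d_k^{-}$ denote the number of neighbours of $k$ with index smaller than $k$. Using $\mathbb{E}[W_k]=0$ and $\mathbb{E}[W_k^2]=d_k^{-}/16$, we get $\mathbb{E}[D_k^2]=d_k^{-}/16+c^2d_k^2$. Summing and using $\sum_k d_k^-=|\mathcal{E}_N|$ recovers exactly $\sigma_N^2$ in \eqref{eq:variance_potential}, so the normalization in \Cref{lem:MCLT} matches the statement.

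Condition (i) of \Cref{lem:MCLT} holds deterministically: $|D_k|\le d_k/4+|c|d_k\le(\frac14+|c|)\Delta_N$, so $\max_k|D_k|/\sigma_N\to0$ by assumption (i). Note also that $\Delta_N\ge1$ whenever there is an edge, so assumption (i) forces $\sigma_N\to\infty$.

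Condition (ii) is the main obstacle. I would write $V_N-\sigma_N^2=T_1+2c\,T_2$, where $T_1=\sum_k(W_k^2-\mathbb{E}W_k^2)$ and $T_2=\sum_k d_kW_k$. The goal is $\mathrm{Var}(T_1),\mathrm{Var}(T_2)=o(\sigma_N^4)$, after which Chebyshev's inequality gives $V_N/\sigma_N^2\xrightarrow{P}1$.

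For $T_2$, swap the order of summation to get $T_2=\frac14\sum_j S_j\sum_{k>j}A_{kj}d_k$. Hence
\begin{equation*}
\mathrm{Var}(T_2)\le\frac1{16}\sum_j\Big(\sum_k A_{kj}d_k\Big)^2\le\frac{\Delta_N^2}{16}\sum_j d_j^2 .
\end{equation*}
This is $o(\sigma_N^2)\cdot O(\sigma_N^2)$ by assumptions (i) and (ii).

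For $T_1$, expand the squares: $T_1=\frac18\sum_{i<j}M_{ij}S_iS_j$, where $M_{ij}$ counts the common neighbours $k$ of $i$ and $j$ with $k>\max(i,j)$. The products $S_iS_j$ are orthonormal, so $\mathrm{Var}(T_1)=\frac1{64}\sum_{i<j}M_{ij}^2$. Bound this using $M_{ij}\le\Delta_N$ and $\sum_{i<j}M_{ij}\le\frac12\sum_k d_k^2$:
\begin{equation*}
\sum_{i<j}M_{ij}^2\le\frac{\Delta_N}{2}\sum_k d_k^2=o(\sigma_N)\,O(\sigma_N^2)=o(\sigma_N^4).
\end{equation*}
The last step uses $\sigma_N\to\infty$. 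Both hypotheses of \Cref{lem:MCLT} then hold, which yields the claimed Gaussian limit. The delicate part is organizing the fourth-moment bookkeeping in $T_1$ so that only $\Delta_N$ and $\sum_k d_k^2$ appear, and not finer structure such as counts of $4$-cycles.
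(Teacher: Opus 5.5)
Your proof is correct, and its skeleton matches the paper's: the same Doob martingale along the vertex order with increments $D_k=S_k(W_k+c\,d_k)$, the same deterministic check $|D_k|\le(\frac14+|c|)\Delta_N$ for condition (i) of the martingale CLT, and Chebyshev's inequality on $V_N$ for condition (ii).

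The routes differ in how you establish $\operatorname{Var}(V_N)=o(\sigma_N^4)$. The paper applies the Efron--Stein inequality with a worst-case bounded-difference estimate. Resampling $S_i$ changes each summand $(W_k+c\,d_k)^2$, $k\in\mathcal{N}_i$, by $O(d_k)$. Hence $|V_N-V_N^{(i)}|\le C\,d_i\Delta_N$ and $\operatorname{Var}(V_N)\le\frac{C^2}{2}\Delta_N^2\sum_i d_i^2$, without expanding any squares. You instead split $V_N-\sigma_N^2$ into its first- and second-order Rademacher chaos, $2c\,T_2$ and $T_1$, and compute their variances exactly from the orthonormality of $\{S_j\}$ and $\{S_iS_j\}_{i<j}$.

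This costs more bookkeeping but is sharper. The exact $L^2$ computation keeps the cancellations among the random signs, which the pointwise bound on $|V_N-V_N^{(i)}|$ throws away. As a result, your interaction term is $O(\Delta_N\sum_k d_k^2)$ rather than $O(\Delta_N^2\sum_k d_k^2)$.

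In fact, you can combine your two bounds with $\sum_k d_k^2\le 2\Delta_N|\mathcal{E}_N|\le 32\Delta_N\sigma_N^2$ and $c^2\sum_j d_j^2\le\sigma_N^2$. This gives $\operatorname{Var}(V_N)=O(\Delta_N^2\sigma_N^2)$, so your argument needs only hypothesis (i). The paper's bound $\Delta_N^2\sum_i d_i^2$, by contrast, genuinely uses (ii) when $\theta=1/2$. For $\theta\neq 1/2$, (ii) holds automatically, since $c^2\sum_i d_i^2\le\sigma_N^2$. On the other side, the Efron--Stein route is shorter and does not depend on the Walsh/orthonormal structure of Rademacher variables.
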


\vspace{5pt}

\begin{IEEEproof} We apply the martingale central limit theorem in \cref{lem:MCLT}. We begin by writing the potential function as
\begin{equation}
\tilde{\Phi}_N(S)
= \frac14\sum_{1\le i<j\le N} A^{(N)}_{ij}\, S_i S_j +
c \sum_{i=1}^N d_i^{(N)} S_i.
\end{equation}
Since $\mathbb{E}[S_i]=0$ and $\mathbb{E}[S_i S_j]=0$ for $i\neq j$, we have $\mathbb{E}\big[\tilde{\Phi}_N(S)\big]=0$ and
\begin{equation}
\mathrm{Var}\big(\tilde{\Phi}_N(S)\big)
=
\frac{|\mathcal{E}_N|}{16}
+
c^2\sum_{i=1}^N (d_i^{(N)})^2 \Equaldef \sigma^2_N.
\end{equation}

Define the following filtration\footnote{Here $\sigma(S_1,\dots,S_k)$ denotes the smallest $\sigma$-algebra generated by $S_1,\ldots,S_k$. Not to be confused with standard deviation.}
\begin{equation}
\mathcal{F}_k \Equaldef \sigma(S_1,\dots,S_k)
\end{equation}
and 
\begin{equation}
M_k \Equaldef \mathbb{E}\big[\tilde{\Phi}_N(S)\mid \mathcal{F}_k\big],
\qquad
D_k \Equaldef M_k - M_{k-1}.
\end{equation}
Since $\mathbb{E}\big[S_{\ell} \mid \mathcal{F}_k\big]=0$ for $\ell>k$, we have
\begin{equation}
M_k
=
\frac14 \sum_{\substack{1\le i<j\le k}} A^{(N)}_{ij}\, S_i S_j
+
c \sum_{i=1}^k d_i^{(N)} S_i,
\end{equation}
and therefore
\begin{equation}
D_k
=
S_k\!\left(\frac14\sum_{i=1}^{k-1} A^{(N)}_{ik}\, S_i + c\, d_k^{(N)}\right).
\end{equation}
Observe that $D_k$ is $\mathcal{F}_k$-measurable and
\begin{align}
\mathbb{E}[D_k\mid \mathcal{F}_{k-1}] &= \mathbb{E}[S_k \mid \mathcal{F}_{k-1}]\left(\frac14\sum_{i=1}^{k-1} A^{(N)}_{ik}\, S_i + c\, d_k^{(N)}\right)\\ &\overset{(a)}{=} 0,
\end{align}
where $(a)$ follows from $S_k$ being independent of $\mathcal{F}_{k-1}$ and having zero mean. Therefore, $\{D_k\}_{k=1}^N$ is a martingale difference sequence \cite{Cinlar2011}, and
\begin{equation}
\tilde{\Phi}_N(S) = \sum_{k=1}^N D_k.
\end{equation}

Since $|S_i|=1$, we can bound the increments using the triangle inequality as
\begin{equation}
|D_k|
\;\le\;
\Big(\frac{1}{4} + |c|\Big)\,d_k^{(N)}.
\end{equation}
By assumption~(i),
\begin{equation}
\max_{1\le k\le N} \frac{|D_k|}{\sigma_N}
\;\le\;
\Big(\frac{1}{4} + |c|\Big)\,\frac{\Delta_N}{\sigma_N}
 \longrightarrow 0,
\end{equation}
which verifies condition~(i) of \cref{lem:MCLT}.

Next, consider the conditional variance process defined as
\begin{equation}\label{eq:V_N_1}
V_N
\;\Equaldef\;
\sum_{k=1}^N \mathbb{E}\!\big[D_k^2\mid \mathcal{F}_{k-1}\big].
\end{equation}
Since $S_k^2 = 1$ and $S_k$ is independent of $\mathcal{F}_{k-1}$, we have
\begin{equation}\label{eq:V_N_2}
\mathbb{E}\big[D_k^2\mid\mathcal{F}_{k-1}\big]
=
\left(
\frac{1}{4}\sum_{i=1}^{k-1} A^{(N)}_{ik}\, S_i + c\, d_k^{(N)}
\right)^2,
\end{equation}
and taking its expectation, we obtain
\begin{equation} \label{eq:V_N_mean}
\mathbb{E}[V_N]
=
\sum_{k=1}^N \mathbb{E}[D_k^2]
=
\sigma_N^2.
\end{equation}

It remains to verify condition~(ii) of \cref{lem:MCLT}.
Since $\mathbb{E}[V_N]=\sigma_N^2$, we will show that $\mathrm{Var}(V_N)=o(\sigma_N^4)$.

Now consider the from $S$ we construct new sequence of random variables 
\begin{equation}
S^{(i)}=(S_1,\ldots,S_{i-1},S_i',S_{i+1},\ldots,S_N),
\end{equation}
where $S_i'$ is an independent copy of $S_i$ with the same distribution.
Let $V_N^{(i)}$ denote the same quantity as $V_N$ constructed from $S^{(i)}$ instead of $S$. From \eqref{eq:V_N_2}, the $k$-th term in $V_N$ given in \eqref{eq:V_N_1} depends on $S_i$ only if $k > i$ and $A_{ik}^{(N)} = 1$.  Since $S_i' - S_i \in \{-2, 0, 2\}$ and enters
the expression inside the square of \eqref{eq:V_N_2} linearly, the change in the $k$-th summand is at most $C\, d_k^{(N)}$ for a constant~$C$ that depends only on~$c$.  Summing over the neighbors of node~$i$ such that $k > i$ and bounding $d_k^{(N)} \le \Delta_N$, we obtain
\begin{equation}
  \big|V_N - V_N^{(i)}\big|
  \;\le\;
  C\!\sum_{\substack{k>i \\ A_{ik}^{(N)}=1}} d_k^{(N)}
  \;\le\;
  C\, d_i^{(N)}\,\Delta_N.
\end{equation}
By the Efron--Stein inequality (cf.\ \Cref{lem:EF} in Appendix \ref{sec:lemmata}),
\begin{equation}\label{eq:variance_bound}
  \operatorname{Var}(V_N)
  \;\le\;{}
  \frac{1}{2}\sum_{i=1}^{N}
    \mathbb{E}\!\Big[\big(V_N - V_N^{(i)}\big)^2\Big]
  \;\le\;
  \frac{C^2}{2}\,\Delta_N^2\sum_{i=1}^{N}
    \big(d_i^{(N)}\big)^2.
\end{equation}

Since $\mathbb{E}[V_N] =  \sigma_N^2$,
using \eqref{eq:V_N_mean}, we have 
\begin{equation}
  \frac{V_N}{\sigma_N^2} - 1
  \;=\;
  \frac{V_N - \mathbb{E}[V_N]}{\sigma_N^2}.
\end{equation}
By Chebyshev's inequality and the variance bound in \eqref{eq:variance_bound},
\begin{equation}
  \mathbb{P}\!\bigg(\bigg|\frac{V_N}{\sigma_N^2} - 1\bigg| > \varepsilon\bigg)
  \;\le\;
  \frac{\operatorname{Var}(V_N)}{\varepsilon^2\,\sigma_N^4}
  \;\le\;
  \frac{C^2}{2\varepsilon^2}\,
  \frac{\Delta_N^2\sum_{i=1}^{N} \big(d_i^{(N)}\big)^2}{\sigma_N^4}.
\end{equation}
By assumptions~(i) and~(ii), $\Delta_N^2/\sigma_N^2 \to 0$ and $\sum_{i=1}^{N}(d_i^{(N)})^2 = O(\sigma_N^2)$, we have
\begin{equation}
  \frac{\Delta_N^2\sum_{i=1}^{N} \big(d_i^{(N)}\big)^2}{\sigma_N^4}
  \;\longrightarrow\; 0.
\end{equation}
Therefore, $\mathrm{Var}(V_N) = o(\sigma_N^4)$ which implies
$V_N/\sigma_N^2 \xrightarrow{P} 1$.

Finally, from \cref{lem:MCLT}, we have 
\begin{equation}
\frac{\Phi_N(S)-\mathbb{E}\big[\Phi_N(S)\big]}{\sigma_N} \overset{D}\longrightarrow \mathcal{N}(0,1).
\end{equation}
\end{IEEEproof}

\vspace{5pt}

The Gaussian approximation established in \cref{thm:CLT} provides a way to establish the optimality of regular graphs in the asymptotic regime $N\to\infty$.

\vspace{5pt}

\begin{theorem}\label{thm:regular_optimal}
Consider the ensemble of adjacency matrices $\mathbf{A}^{(N)}$ that satisfy the following asymptotic conditions:
\begin{enumerate}[(i)]
\item $\Delta_N \Equaldef \max_i d_i^{(N)} = o(\sigma_N)$,
\item $\sum_{i=1}^N (d_i^{(N)})^2 = O(\sigma_N^2)$.
\end{enumerate}
Then, for sufficiently large $N$ and for all $\beta > 0$, the stationary probability 
$\tilde{\mu}_{\mathbf{A}^{(N)}}(a^\star \mid \beta)$ is maximized, over all graphs 
on $N$ vertices with $|\mathcal{E}_N|$ edges, by the $K$-regular graph 
with $K = 2|\mathcal{E}_N|/N$, or by a near-$K$-regular graph when 
$2|\mathcal{E}_N|/N$ is not an integer.
\end{theorem}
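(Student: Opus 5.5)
The plan is to reduce the maximization of $\tilde{\mu}_{\mathbf{A}^{(N)}}(a^\star \mid \beta)$ to minimizing the Gaussian approximation of the partition function, and then finish with the majorization argument from \cref{thm:small_beta}. By \cref{lem:numerator_invariance}, the numerator $e^{\beta\tilde{\Phi}_{\mathbf{A}}(s^\star)}$ is the same for every graph with $|\mathcal{E}_N|$ edges. By \eqref{eq:Z_mgf} and \eqref{eq:mu_over_Z}, maximizing $\tilde{\mu}$ is therefore equivalent to minimizing the MGF $M_N(\beta)\Equaldef\mathbb{E}_S[e^{\beta\tilde{\Phi}_N(S)}]$. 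The first step will be to use \cref{thm:CLT}. Under conditions (i)--(ii), $\tilde{\Phi}_N(S)/\sigma_N \xrightarrow{D} \mathcal{N}(0,1)$, and $\mathbb{E}[\tilde{\Phi}_N(S)]=0$. Accordingly, I will approximate
\begin{equation*}
M_N(\beta) \approx \mathbb{E}\big[e^{\beta\sigma_N G}\big] = e^{\beta^2\sigma_N^2/2},\qquad G\sim\mathcal{N}(0,1).
\end{equation*}
The right-hand side is strictly increasing in $\sigma_N^2$ for every $\beta>0$.

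The second step is the degree optimization, which is identical to the proof of \cref{thm:small_beta}. In \eqref{eq:variance_potential} the term $|\mathcal{E}_N|/16$ is fixed, so minimizing $\sigma_N^2$ amounts to minimizing $\sum_i (d_i^{(N)})^2$ subject to $\sum_i d_i^{(N)}=2|\mathcal{E}_N|$. Schur-convexity of $d\mapsto d^2$ shows that the minimizer is the constant sequence $(K,\dots,K)$ when $K=2|\mathcal{E}_N|/N$ is an integer, and a sequence with entries in $\{\lfloor K\rfloor,\lceil K\rceil\}$ otherwise. Both are realizable as simple graphs: the regular case by \cref{graph_new}, and the near-regular case by a standard Erd\H{o}s--Gallai check. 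I also need the regular graph to lie in the admissible ensemble. For it, $\Delta_N=K$ and $\sigma_N^2\asymp K^2N$ (when $c\neq0$), so $\Delta_N/\sigma_N\asymp N^{-1/2}\to0$ and condition (ii) holds trivially. The regular graph is therefore a legitimate competitor.

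The main obstacle is justifying that the Gaussian MGF approximation holds at a fixed $\beta>0$ with enough accuracy to preserve the ordering between graphs. The difficulty is twofold. Convergence in distribution does not imply convergence of MGFs. Moreover, $\sigma_N\to\infty$, so the quantity to control is really $\mathbb{E}[e^{\beta\sigma_N X_N}]$ with $X_N\Rightarrow G$, which is a large-deviation-scale quantity.

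My first attempt will be a uniform-integrability argument. Since $\tilde{\Phi}_N$ is a quadratic Rademacher chaos with martingale increments bounded by $(1/4+|c|)d_k$, Azuma--Hoeffding gives a sub-Gaussian tail with proxy $\sum_k d_k^2=O(\sigma_N^2)$ by condition (ii). This yields uniform integrability of $e^{tX_N}$ for every fixed $t$, and hence $\mathbb{E}[e^{tX_N}]\to e^{t^2/2}$. I will then interpret the comparison at the level of the normalized exponent $\frac{1}{\beta^2\sigma_N^2}\log M_N(\beta)$, or equivalently work with $\beta$ scaled as $t/\sigma_N$, and state the optimality ``for sufficiently large $N$'' in that asymptotic sense.

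If a genuinely fixed-$\beta$ statement is required, the fallback is to combine the Gaussian leading term with the spectral upper bound of \cref{thm:moderate_beta}, which is also minimized by regular graphs. This closes the gap wherever the two bounds agree to leading order. With either route in hand, monotonicity in $\sigma_N^2$ together with the majorization step gives the claim.
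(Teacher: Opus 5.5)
Your main line is the paper's proof: numerator invariance, then replacing $M_N(\beta)=\mathbb{E}[e^{\beta\tilde\Phi_N(S)}]$ by $e^{\beta^2\sigma_N^2/2}$ via \cref{thm:CLT}, then majorization of $\sum_i d_i^2$. The paper supports the Gaussian step only with the ``$\approx$'' in \eqref{eq:gauss_mgf}. You correctly identify that step as the crux, but neither of your repairs closes it, so the claim at fixed $\beta>0$ is not established. Your uniform-integrability argument is valid, but what it proves is $M_N(t/\sigma_N)\to e^{t^2/2}$. Since (i) forces $\sigma_N\to\infty$, this is the vanishing-$\beta$ regime of \cref{thm:small_beta}. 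The normalized-exponent reading is not equivalent to it. At fixed $\beta$, \cref{0-1property} and \cref{lem:numerator_invariance} give the rigorous bound $\log M_N(\beta)\le\beta\max_s\tilde\Phi_N(s)=\beta(\tfrac14+|\tfrac12-\theta|)|\mathcal{E}_N|$, which is linear in $\beta$. Meanwhile $\beta^2\sigma_N^2/2\ge\beta^2|\mathcal{E}_N|/32$. So for $\beta$ beyond a constant, the Gaussian formula overshoots the true exponent by order $|\mathcal{E}_N|$, and $\log M_N(\beta)/(\beta^2\sigma_N^2)$ does not tend to $\tfrac12$. That error dwarfs the differences you must resolve. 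Moving one edge of a $K$-regular graph (creating degrees $K\pm1$) raises $\sum_i d_i^2$ by $2$, so the predicted exponent gap is only $\beta^2c^2$, and this competitor still satisfies (i)--(ii). The rescaled regime does not rescue this either. Write $\sigma_N'$ and $\sigma_N''$ for the regular graph and this competitor, and evaluate both at $\beta_N=t/\sigma_N'$. Both MGFs then tend to the same limit $e^{t^2/2}$, because $\sigma_N''/\sigma_N'\to1$. Convergence in distribution plus uniform integrability therefore cannot order them.

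The spectral fallback does not help. \eqref{eq:Z_spectral} is an upper bound on $Z_{\mathbf{A}}$ for every graph. That this bound is smallest at the regular graph says nothing about which graph has the smallest actual partition function. For that you need a lower bound on each competitor's $Z_{\mathbf{A}''}$ that exceeds an upper bound on $Z_{\mathbf{A}'}$. The Gaussian expression is not a lower bound, since it exceeds the rigorous bound above for large $\beta$. Jensen's $M_N(\beta)\ge1$ is too weak. Competitors with $|\mathcal{E}_N|$ edges that violate (i)--(ii), e.g.\ graphs with a hub of degree comparable to $\sigma_N$, fall outside \cref{thm:CLT} altogether. A genuine fixed-$\beta$ proof would need a non-asymptotic comparison of exact partition functions. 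For the field term alone the log-MGF is exactly $\sum_i\log\cosh(\beta c\,d_i)$, which is Schur-convex in $d$, so the claim is plausible. However, the coupling $\tfrac18 S^{\top}\mathbf{A}S$ makes $Z_{\mathbf{A}}$ depend on more than the degree sequence, and the paper itself notes after \cref{thm:moderate_beta} that exact optimality for all $\beta$ is open. What your argument actually proves is the rescaled small-$\beta$ statement.
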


\vspace{5pt}

\begin{IEEEproof} Under conditions (i) and (ii), \Cref{thm:CLT} implies that a Gaussian approximation for $\tilde{\Phi}_{\mathbf{A}^{(N)}}(S)$ holds. That is, $\tilde{\Phi}_{\mathbf{A}^{(N)}}(S) \approx \mathcal{N}(0,\sigma_N^2)$, and thus the MGF in the denominator of the objective function is approximately
\begin{equation}\label{eq:gauss_mgf}
\mathbb{E}_S\!\left[e^{\beta\tilde{\Phi}_{\mathbf{A}^{(N)}}(S)}\right]
\approx
e^{\beta^2\sigma_N^2/2}.
\end{equation}
Since the exponential function is monotone increasing, minimizing \eqref{eq:gauss_mgf} over the class of graphs for a given $|\mathcal{E}_N|$ edges reduces to minimizing $\sigma_N^2$. Recall from \eqref{eq:variance_potential} that
\begin{equation}\label{eq:sigma_degree}
\sigma_N^2
= \frac{|\mathcal{E}_N|}{16}
+ c^2\sum_{i=1}^{N} \big( d_i^{(N)}\big)^2.
\end{equation}
The first term is fixed for a given $|\mathcal{E}_N|$.  Hence, the network design problem reduces to
\begin{equation}\label{eq:degree_opt}
\begin{aligned}
& \underset{d_1^{(N)},\ldots,d^{(N)}_N}{\mathrm{minimize}}
& & \sum_{i=1}^N \big(d^{(N)}_i\big)^2 \\
& \text{subject to}
& & \sum_{i=1}^N d_i^{(N)} = 2|\mathcal{E}_N|, \quad d_i^{(N)} \in \mathbb{Z}.
\end{aligned}
\end{equation}
We can solve this optimization problem and characterize the optimal degree distribution using Majorization theory as in \eqref{eq:majorization1} and \eqref{eq:majorization2}. Therefore, the optimal degree sequence is regular or near-regular as in \cref{thm:small_beta} depending on the prescribed number of nodes $N$ and edges $\mathcal{E}_N$ in $\mathbf{A}_N$.
\end{IEEEproof}

\vspace{5pt}

\subsection{Price of Irregularity - PoI}
\cref{thm:small_beta,thm:moderate_beta,thm:regular_optimal} provide a new design principle for multi-agent networks. Given a connectivity budget of $|\mathcal{E}_N|$ communication links among $N$ homogeneously bounded rational agents, the system designer should distribute edges as evenly as possible. The detailed topology matters for intermediate values of rationality, but when $\beta\rightarrow 0$ and when $N\rightarrow \infty$ (under mild technical conditions), only the degree distribution matters, and the optimal graphs are either regular or near-regular.

In the limit $N \to \infty$, the {\it ``price of irregularity''} is captured by the empirical degree variance $\mathrm{Var}(d^{(N)})$, where $d^{(N)}$ denotes the degree distribution of $\mathbf{A}_{N}$, i.e., $d^{(N)}=\mathbf{1}^\top\mathbf{A}_{N}$. To see this, compare a $K$-regular graph $\mathbf{A}_N'$ with an arbitrary graph $\mathbf{A}_N''$ having the same number of edges. Since both graphs share the same $|\mathcal{E}_N|$, the potential functions evaluated at the coordinated action profile $a^\star\in \{\mathbb{1}, \mathbb{0}\}$ coincide, i.e.\ $\Phi_{\mathbf{A}_N'}(a^\star) = \Phi_{\mathbf{A}_N''}(a^\star)$, therefore
\begin{equation}
\mathrm{PoI} \Equaldef \log \frac{\mu_{\mathbf{A}_N'}(a^\star)}{\mu_{\mathbf{A}_N''}(a^\star)} = \log Z_{\mathbf{A}_N'} - \log Z_{\mathbf{A}_N''}.
\end{equation}
Applying the Gaussian approximation from \cref{thm:CLT}, each log-partition function is determined by the variance of the potential:
\begin{equation}
\log Z_{\mathbf{A}_N} \approx N\log 2 + \frac{\beta^2}{2}\sigma_N^2,
\end{equation}
where
\begin{equation}
\sigma_N^2 = \frac{|\mathcal{E}_N|}{16} + c^2 \sum_{j=1}^N d_j^2, \qquad c = \frac{1}{4} - \frac{\theta}{2}.
\end{equation}
Since the term $\frac{|\mathcal{E}_N|}{16}$ is common to both graphs, the difference only depends on the degree distribution. We decompose
\begin{equation}
\sum_{j=1}^N d_j^2 = N\!\left(\bar{d}^2 + \mathrm{Var}(d)\right),
\end{equation}
and note that the average degree of any graph is
\begin{equation}
\bar{d} = \frac{1}{N}\sum_{j=1}^N d_j = \frac{2|\mathcal{E}_N|}{N}.
\end{equation}
Since both graphs share the same $|\mathcal{E}_N|$ and $N$, they have the same average degree $\bar{d} = K$. Since $\mathrm{Var}(d') = 0$ for the $K$-regular graph $\mathbf{A}_N'$, we obtain
\begin{equation}\label{eq:PoI}
\mathrm{PoI} = \log \frac{\mu_{\mathbf{A}_N'}(a^\star)}{\mu_{\mathbf{A}_N''}(a^\star)} \approx \frac{\beta^2 c^2 N}{2}\,\mathrm{Var}(d'')>0,
\end{equation}
where $d''$ is the degree distribution of $\mathbf{A}_N''$. This penalty is larger for networks with heterogeneous degree distributions and large $N$. It scales quadratically in both the rationality parameter $\beta$ and the parameter $c = \frac{1}{4} - \frac{\theta}{2}$, vanishing only at $\theta = \frac{1}{2}$, where the potential becomes insensitive to degree heterogeneity.

To empirically validate the price of irregularity, we generate random connected graphs with the same number of edges as a $K$-regular graph ($K=6$) across several values of $N$ and compare the Gibbs probability of the optimal action profile on each irregular graph to that of the regular graph. Figure~\ref{fig:cost_of_irregularity} plots the log-ratio $\log(\mu_{\mathbf{A}_N'}/\mu_{\mathbf{A}_N''})$ against $N \cdot \mathrm{Var}(d)$ for each irregular graph. The results confirm an approximately linear relationship between the $\mathrm{PoI}$ and $N\,\mathrm{Var}(d)$, which holds even for modest values of $N$. This is consistent with the asymptotic expression obtained in \eqref{eq:PoI}.

\vspace{5pt}

\begin{figure}[h!]
    \centering
\includegraphics[width=0.99\linewidth]{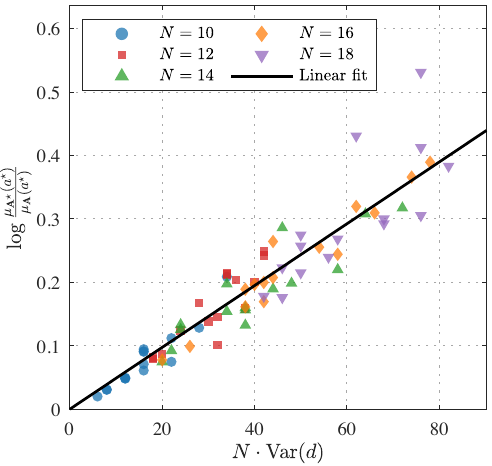}
    \caption{Price of Irregularity}
\label{fig:cost_of_irregularity}
\end{figure}

\section{Conclusions and Future Work}\label{sec:conclusions}
We studied the problem of learning to coordinate with bounded rationality over a network. Assuming the agents adhere to LLL with a homogeneous bounded rationality parameter, we showed that the stationary probability of coordinated (risk-dominant) action profiles can be increased by improving connectivity in regular graphs. We also showed that more connected networks can operate with agents with lower rationality and still achieve a given level of coordination than less connected ones. This is the first design principle from this work. This creates a form of \textit{Wisdom of Crowds} \cite{Becker:2017}, where a NE (approximately) emerges by aggregating information from neighbors and decisions propagating imperfectly over the network.

Additionally, we proved that when the networks are irregular, the stationary probability of a coordinated action profile is monotone increasing with respect to the operation of adding new edges to the graph. Finally, we proved that for small rationality, and for networks with a sufficiently large number of agents, regular (or near-regular) graphs are optimal when the agents have homogeneous bounded rationality. When rationality and the number of agents is moderate, regular graphs maximize a lower bound on the stationary probability and are a robust choice for maximizing the coordination of bounded rational agents. This is the second design principle from this work.

This work can be extended in many different directions. The first is to consider the possibility of having agents with heterogeneous bounded rationalities, and design the connectivity among them so as to promote coordination. In particular, we are interested in the question of whether higher rationality agents must be more connected among themselves (segregation), or if connectivity should be distributed by connecting lower rationality agents to agents with higher rationality. Yet another important research problem is to consider the scheduling of agents in heterogeneous systems. In that case, we would like to determine the optimal agent schedule to maximize their ultimate coordination. We would like to determine whether higher rationality agents should be updated more or less frequently than other agents. Finally, we suggest the generalization of this approach to handle very large scale graphs using graphons \cite{Parise:2021}.

\section{Acknowledgments}
The authors would like to thank Dr.~Yifei Zhang for discussions and 
insightful suggestions at an early stage of this work.

\appendices

\section{Auxiliary Results}\label{sec:lemmata}

\begin{lemma}[Efron--Stein inequality]\label{lem:EF}
Let $X_1,\dots,X_n$ be independent random variables and let
$f(X_1,\dots,X_n)$ be square-integrable.  For each $1\le i\le n$,
let $X_i'$ be an independent copy of $X_i$ and define
\begin{equation}
  f^{(i)} = f(X_1,\dots,X_{i-1},X_i',X_{i+1},\dots,X_n).
\end{equation}
Then
\begin{equation}
  \mathrm{Var}(f)
  \;\le\;
  \frac{1}{2}\sum_{i=1}^{n}
  \mathbb{E}\!\big[(f - f^{(i)})^{2}\big].
\end{equation}
\end{lemma}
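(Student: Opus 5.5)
The statement immediately above is the Efron--Stein inequality (\cref{lem:EF}). I would prove it by a martingale (Doob) decomposition of $f-\mathbb{E}f$ along the coordinates, followed by a conditional-variance identity for each increment. Write $f=f(X_1,\dots,X_n)$ and let $\mathbb{E}_i$ denote expectation over $X_i$ only, with the other coordinates held fixed. Set $\mathcal{F}_k=\sigma(X_1,\dots,X_k)$ and define the increments $\Delta_k=\mathbb{E}[f\mid\mathcal{F}_k]-\mathbb{E}[f\mid\mathcal{F}_{k-1}]$. These telescope to $f-\mathbb{E}f$ and are orthogonal in $L^2$ (exactly the structure used in the proof of \cref{thm:CLT}). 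Hence
\begin{equation}
\mathrm{Var}(f)=\sum_{k=1}^n\mathbb{E}[\Delta_k^2].
\end{equation}

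The key step is to bound each $\mathbb{E}[\Delta_k^2]$ by $\mathbb{E}[(f-\mathbb{E}_kf)^2]$. By independence of the coordinates, $\mathbb{E}[\mathbb{E}_k f\mid\mathcal{F}_k]=\mathbb{E}[f\mid\mathcal{F}_{k-1}]$: integrating out $X_k$ and then $X_{k+1},\dots,X_n$ is the same as integrating out $X_k,\dots,X_n$ together. Therefore
\begin{equation}
\Delta_k=\mathbb{E}\big[f-\mathbb{E}_kf\mid\mathcal{F}_k\big].
\end{equation}
Conditional Jensen then gives $\Delta_k^2\le\mathbb{E}[(f-\mathbb{E}_kf)^2\mid\mathcal{F}_k]$, and taking expectations yields $\mathbb{E}[\Delta_k^2]\le\mathbb{E}[(f-\mathbb{E}_kf)^2]=\mathbb{E}[\mathrm{Var}_k(f)]$, where $\mathrm{Var}_k$ is the variance in $X_k$ alone.

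Finally, I would identify $\mathrm{Var}_k(f)$ with the replace-one expression. Conditional on all coordinates other than $X_k$, the quantities $f$ and $f^{(k)}$ are i.i.d. For i.i.d. $U,U'$ one has $\mathbb{E}[(U-U')^2]=2\,\mathrm{Var}(U)$, so
\begin{equation}
\mathrm{Var}_k(f)=\tfrac12\,\mathbb{E}_{X_k,X_k'}\big[(f-f^{(k)})^2\big].
\end{equation}
Taking expectations and summing over $k$ gives the stated bound. Square-integrability of $f$ guarantees that every conditional expectation used here exists in $L^2$.

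I expect the main obstacle to be the identity $\mathbb{E}[\mathbb{E}_kf\mid\mathcal{F}_k]=\mathbb{E}[f\mid\mathcal{F}_{k-1}]$. To justify it I would use Fubini on the product measure, which holds because $X_1,\dots,X_n$ are independent. The remaining steps are routine.
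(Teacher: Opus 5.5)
Your proof is correct. The paper gives no proof of this lemma to compare against: it quotes Efron--Stein in the appendix as a standard auxiliary result and uses it only to bound $\mathrm{Var}(V_N)$ in the proof of \cref{thm:CLT}. What you give is the standard Doob-martingale proof, and each step holds:
\begin{itemize}
\item The identity $\mathbb{E}[\mathbb{E}_k f\mid\mathcal{F}_k]=\mathbb{E}[f\mid\mathcal{F}_{k-1}]$ follows from Fubini, since $\mathbb{E}_k f$ does not depend on $X_k$.
\item Conditional Jensen gives $\mathbb{E}[\Delta_k^2]\le\mathbb{E}[\mathrm{Var}_k(f)]$.
\item The conditional i.i.d.\ identity turns $\mathrm{Var}_k(f)$ into $\tfrac12\,\mathbb{E}[(f-f^{(k)})^2]$ after averaging over the other coordinates. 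This uses that $X_k'$ is independent of the whole vector $(X_1,\dots,X_n)$, which is what ``independent copy'' means here.
\end{itemize}
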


\section{Proofs}\label{sec:proofs}

\subsection{Proof of Theorem 1}\label{sec:proof1}

Consider the following potential function
\begin{equation}\label{potential_f}
    \Phi(a) \Equaldef \frac{1}{2}\sum\limits_{i\in [N]} \sum\limits_{j\in \mathcal{N}_{i}} \phi(a_{i},a_{j}),
\end{equation}
where
\begin{equation}\label{eq:potential_two_agents}
\phi (a_i,a_j) \Equaldef a_ia_j + (1-a_i-a_j)\theta.
\end{equation}
The function $\phi$ is an exact potential for the two-player game with payoff $V_i$. Therefore, the following holds:
\begin{equation}
\phi(a_i',a_j) - \phi(a_i'',a_j) = V_i(a_i',a_j) - V_i(a_i'',a_j),
\end{equation}
for all $a_i',a_i''\in\{0,1\}$ such that $a_i'\neq a_i''$. We proceed by verifying that the function in \eqref{potential_f} satisfies the condition in \eqref{eq:exact_potential1}.
Let $m \in [N]$, and $a'_m,a''_m \in\{0,1\}$ such that $a'_m\neq a''_m$. Then,
\begin{multline}\label{globalpotental_minus}
     \Phi(a'_{m},a_{-m})-\Phi(a''_{m},a_{-m}) = \\ \frac{1}{2}\sum_{i\in[N]}\sum_{j\in\mathcal{N}_i} \phi(a_i,a_j)\Bigg|_{(a_m',a_{-m})}  \\ - \frac{1}{2}\sum_{i\in[N]}\sum_{j\in\mathcal{N}_i} \phi(a_i,a_j)\Bigg|_{(a_m'',a_{-m})}.
\end{multline}
Then, notice that
\begin{equation}
\sum_{i\in[N]}\sum_{j\in\mathcal{N}_i} \phi(a_i,a_j) = \sum_{j\in\mathcal{N}_m} \phi(a_m,a_j) + \sum_{i\neq m}\sum_{j\in\mathcal{N}_i} \phi(a_i,a_j).
\end{equation}
Recall that
\begin{equation}
\phi(a_m',a_j) - \phi(a_m'',a_j) = V_m(a_m',a_j) - V_m(a_m'',a_j).
\end{equation}
Therefore,
\begin{multline}\label{eq:intermediate_step_potential1}
     \Phi(a'_{m},a_{-m})-\Phi(a''_{m},a_{-m}) =  \\ \frac{1}{2}\sum_{j\in\mathcal{N}_m} \big[ V_m(a_m',a_j) - V_m(a_m'',a_j) \big] \\ +
      \frac{1}{2}\sum_{i\neq m} \sum_{j\in\mathcal{N}_i} \phi(a_i,a_j)\Bigg|_{(a_m',a_{-m})} \\ -   \frac{1}{2} \sum_{i\neq m} \sum_{j\in\mathcal{N}_i} \phi(a_i,a_j)\Bigg|_{(a_m'',a_{-m})}.
\end{multline}
The first term in \eqref{eq:intermediate_step_potential1} is equal to
\begin{equation}\label{eq:half_difference_potential}
\frac{1}{2}\big[U_m(a'_m,a_{-m}) - U_m(a''_m,a_{-m})\big].
\end{equation}
We proceed with showing that the remaining two terms yield an identical contribution. 
For all $i\neq m$ such that $m\notin \mathcal{N}_i$,
\begin{equation}
\sum_{j\in\mathcal{N}_i} \phi(a_i,a_j)\Bigg|_{(a_m',a_{-m})} = \sum_{j\in\mathcal{N}_i} \phi(a_i,a_j)\Bigg|_{(a_m'',a_{-m})}.
\end{equation}
Define the following set
\begin{equation}
\mathcal{S}_m \Equaldef\big\{i \mid  i \neq m \textup{ and } m\in\mathcal{N}_i \big\}
\end{equation}
and evaluate the difference
\begin{multline}
\sum_{i \in \mathcal{S}_m} \sum_{j\in \mathcal{N}_i} \phi(a_i,a_j) \Bigg|_{(a_m',a_{-m})} \\ - \sum_{i \in \mathcal{S}_m} \sum_{j\in \mathcal{N}_i} \phi(a_i,a_j) \Bigg|_{(a_m'',a_{-m})},
\end{multline}
which is equal to
\begin{multline}\label{eq:difference1}
\sum_{i \in \mathcal{S}_m}  \phi(a_i,a'_m) \Bigg|_{(a_m',a_{-m})} \! \! \! - \sum_{i \in \mathcal{S}_m}  \phi(a_i,a_m'') \Bigg|_{(a_m'',a_{-m})}.
\end{multline}
Since
\begin{equation}
\phi(a_i,a_j) = \phi(a_j,a_i),
\end{equation}
we have that \eqref{eq:difference1} is equal to
\begin{equation}
\sum_{i\in\mathcal{S}_m} \big[\phi(a_m',a_i) - \phi(a_m'',a_i)\big].
\end{equation}
Finally, since $\phi$ is a potential function for the two-player game between $i$ and $j$ when $j \in \mathcal{N}_i$, and the graph is undirected so that $\mathcal{S}_m = \mathcal{N}_m$, we have
\begin{multline}
\sum_{i\in\mathcal{S}_m} \big[V_m(a_m',a_i) - V_m(a_m'',a_i)\big] \\ = U_m(a_m',a_{-m}) - U_m(a_m'',a_{-m}).
\end{multline}
Combining the two contributions, we obtain
\begin{equation}
\Phi(a'_m,a_{-m}) - \Phi(a''_m,a_{-m}) = U_m(a'_m,a_{-m}) - U_m(a''_m,a_{-m}),
\end{equation}
which is the condition in \eqref{eq:exact_potential1}.
\hfill $\blacksquare$


\bibliography{ref,ref_new_entries-2}
\bibliographystyle{ieeetr}

\end{document}